\numberwithin{equation}{section}
\theoremstyle{plain}
\newtheorem{theorem}{Theorem}[section]
\newtheorem{cor}{Corollary}[section]
\newtheorem{lemma}[theorem]{Lemma}
\newtheorem{assumption}{Assumption}
\newtheorem{remark}{remark}
\newtheorem{proposition}[theorem]{Proposition}
\DeclareMathOperator*{\argmax}{arg\,max}
\begin{document}
	\nocite{*}
	\begin{frontmatter}
		\title{Posterior graph selection and estimation consistency for 
		high-dimensional Bayesian DAG models}
		\runtitle{Posterior consistency for Bayesian DAG models}
		
		\begin{aug}
			\author{\fnms{Xuan} \snm{Cao}\ead[label=e1]{caoxuan@ufl.edu}},
			\author{\fnms{Kshitij} \snm{Khare}\ead[label=e2]{kdkhare@stat.ufl.edu}}
			\and
			\author{\fnms{Malay} \snm{Ghosh}\ead[label=e3]{ghoshm@stat.ufl.edu}}
			
			\runauthor{X. Cao et al.}
			
			\affiliation{University of Florida}
			
			
			\address{Department of Statistics\\
				University of Florida\\
				102 Griffin-Floyd Hall\\
				Gainesville, FL 32611\\
				\printead{e1}\\
				\phantom{E-mail:\ }\printead*{e2}\\
				\phantom{E-mail:\ }\printead*{e3}}
		\end{aug}
		
		\begin{abstract}
		Covariance estimation and selection for high-dimensional multivariate 
		datasets is a fundamental problem in modern statistics. Gaussian directed 
		acyclic graph (DAG) models are a popular class of models used for this 
		purpose. Gaussian DAG models introduce sparsity in the Cholesky factor of 
		the inverse covariance matrix, and the sparsity pattern in turn corresponds 
		to specific conditional independence assumptions on the underlying 
		variables. A variety of priors have been developed in recent years for 
		Bayesian inference in DAG models, yet crucial convergence and sparsity 
		selection properties for these models have not been thoroughly 
		investigated. Most of these priors are adaptations/generalizations of the 
		Wishart distribution in the DAG context. In this paper, we consider a flexible 
		and general class of these `DAG-Wishart' priors with multiple shape 
		parameters. Under mild regularity assumptions, we establish 
		strong graph selection consistency and establish posterior convergence 
		rates for estimation when the number of variables $p$ is allowed to grow at 
		an appropriate sub-exponential rate with the sample size $n$. 
		\end{abstract}
		
		\begin{keyword}[class=MSC]
			\kwd[Primary ]{62F15}
			\kwd[; secondary ]{62G20}
		\end{keyword}
		
		\begin{keyword}
			\kwd{posterior consistency}
			\kwd{high-dimensional data}
			\kwd{Bayesian DAG models}
			\kwd{covariance estimation}
			\kwd{graph selection}
		\end{keyword}
		
	\end{frontmatter}

\section{Introduction} \label{sec1}

\noindent
One of the major challenges in modern day statistics is to formulate models and 
develop inferential procedures to understand the complex multivariate relationships 
present in high dimensional datasets, where the number of variables is much larger 
than the number of samples. The covariance matrix, denoted by $\Sigma$, is one of 
the most fundamental objects that quantifies relationships between variables in 
multivariate datasets. A common and effective approach for covariance estimation in 
sample starved settings is to induce sparsity either in the covariance matrix, its 
inverse, or the Cholesky parameter of the inverse. The sparsity patterns in these 
matrices can be uniquely encoded in terms of appropriate graphs. Hence the 
corresponding models are often referred to as covariance graph models (sparsity in 
$\Sigma$), concentration graph models (sparsity in $\Omega = \Sigma^{-1}$), and 
directed acyclic graph (DAG) models (sparsity in the Cholesky parameter of 
$\Omega$). 

In this paper, we focus on Gaussian DAG models. In particular, suppose we have 
i.i.d. observations ${\bf Y}_1, {\bf Y}_2, \cdots, {\bf Y}_n$ from a $p$-variate 
normal distribution with mean vector ${\bf 0}$ and covariance matrix $\Sigma$. Let 
$\Omega = LD^{-1}L^T$ be the modified Cholesky decomposition of the inverse 
covariance matrix $\Omega = \Sigma^{-1}$, i.e., $L$ is a lower triangular matrix 
with unit diagonal entries, and $D$ is a diagonal matrix with positive diagonal 
entries. For a DAG model, this normal distribution is assumed to be Markov with 
respect to a given directed acyclic graph $\mathscr{D}$ with vertices $\{1,2, \cdots, 
p\}$ (edges directed from larger to smaller vertices). This is equivalent to saying that 
$L_{ij}  = 0$ whenever $\mathscr{D}$ does not have a directed edge from $i$ to $j$ 
(these concepts are discussed in detail in Section \ref{sec2}).  Hence a Gaussian 
DAG model restricts $\Sigma$ (and $\Omega$) to a lower dimensional space 
by imposing sparsity constraints encoded in $\mathscr{D}$ on $L$. 

On the frequentist side, a variety of penalized likelihood methods for sparse estimation 
of $L$ exist in the literature, see \cite{HLPL:2006, Rutimann:Buhlmann:2009, Shojaie:Michailidis:2010, RLZ:2010, 
AAZ:2015, Yu:Bien:2016, KORR:2017}. Some of these methods, such as those in \cite{RLZ:2010, 
Yu:Bien:2016}, constrain the sparsity pattern in $L$ to be banded, whereas others, 
such as those in \cite{HLPL:2006, Shojaie:Michailidis:2010,KORR:2017}, put 
no constraints on the sparsity pattern.  Most 
of the above methods derive asymptotic estimation and model selection consistency 
properties for the resulting estimator in a an appropriate high-dimensional regime. 
See Section \ref{sec:penalized} for more details. On the Bayesian side, the first class of priors on the restricted space of covariance matrices corresponding to a Gaussian DAG model was initially developed in \cite{Geiger:Heckerman:2002, Smith:Kohn:2002}. As pointed out in \cite{BLMR:2016}, the priors in \cite{Geiger:Heckerman:2002} can be considered as 
analogs of the $\mathcal{G}$-Wishart distribution for concentration graph models 
(inducing sparsity in $\Omega$). In fact, for the special case of perfect DAGs the priors 
in \cite{Geiger:Heckerman:2002} are same as the $\mathcal{G}$-Wishart priors. 
As with the $\mathcal{G}$-Wishart priors, the priors in \cite{Geiger:Heckerman:2002} 
have a single shape parameter. Letac and Massam \cite{Letac:Massam:2007} 
introduced a flexible class of priors with multiple shape parameters which facilitate 
differential shrinkage in high-dimensional settings. However, these priors are defined 
only for perfect DAG models. Recently, Ben-David et al. \cite{BLMR:2016} introduce 
a class of DAG-Wishart distributions with multiple shape parameters. This class of 
distributions is defined for arbitrary DAG models, and is identical to the Letac-Massam 
$IW_{P_\mathcal{G}}$ priors for the special case of perfect DAG models. Thus, this 
class of DAG-Wishart distributions offers a flexible framework for Bayesian inference 
in Gaussian DAG models, and generalizes previous Wishart-based priors for DAG 
models. 

The priors above are specified for a known DAG $\mathscr{D}$, and provide a 
Bayesian approach for estimating the covariance matrix. However, if the underlying 
DAG is not known and needs to be selected, one can easily extend this framework 
by specifying a prior on the space of DAGs, and looking at the posterior probabilities 
of the DAGs given the data. Such an approach was used in the context of 
concentration graph models in \cite{Banerjee:Ghosal:2015}. The utility of this Bayesian 
approach in substantially improving finite sample graph selection performance (as 
compared to existing penalized likelihood methods) has been demonstrated in 
\cite{BLMR:2016}. We discuss and demonstrate this further in 
Sections \ref{sec:penalized} and \ref{sec:experiments}. 

Despite the developments in Bayesian methods for analyzing Gaussian DAG models, 
a comprehensive evaluation of the high-dimensional consistency properties 
of these methods has not been undertaken to the best of our knowledge. Assuming 
the data comes from a ``true" DAG model, two aspects of the asymptotic behavior of 
the posterior are of interest: (a) assigning high posterior probability to the ``true" 
underlying graph (graph selection consistency), and (b) estimating the ``true" 
covariance matrix accurately (estimation consistency). 

Gaussian concentration graph models, which induce sparsity in the inverse covariance 
matrix $\Omega$, are a related but markedly different class of models as compared 
to Gaussian DAG models. The two classes of models intersect only at perfect DAGs, 
which are equivalent to decomposable concentration graph models. In the context of 
concentration graph models, high dimensional posterior estimation consistency has 
been explored in recent work 
\cite{Banerjee:Ghosal:2014, Banerjee:Ghosal:2015, XKG:2015}. In 
\cite{Banerjee:Ghosal:2014, XKG:2015} estimation consistency is established for the 
decomposable concentration graph models when the underlying concentration graph 
is known, and the number of variables $p$ is allowed to increase at an appropriate 
sub-exponential rate relative to the sample size $n$. 
Banerjee and Ghosal \cite{Banerjee:Ghosal:2015} get rid of the assumption of 
decomposability and do not assume the true concentration graph is known. They use 
independent Laplace priors for the off-diagonal entries of the inverse covariance 
matrix, and use independent Bernoulli priors for the edges of the concentration 
graph. In this framework, estimation consistency is established in 
\cite{Banerjee:Ghosal:2015} under suitable regularity assumptions when $\sqrt{(p + s) 
\log p/n} \rightarrow 0$ ($s$ denotes the total number of non-zero off-diagonal entries 
in the ``true" inverse covariance matrix). The authors do not address model selection 
consistency, but provide high-dimensional Laplace approximations for the marginal 
posterior probabilities for the graphs, along with a proof of the validity of these 
approximations. 

In this paper, our goal is to explore both model selection and estimation consistency in 
a high-dimensional setting for Gaussian DAG models. In particular, we consider a 
hierarchical Gaussian DAG model with DAG-Wishart priors on the covariance matrix 
and independent Bernoulli priors for each edge in the DAG. Under standard regularity 
assumptions, which include letting $p$ increase at an appropriate sub exponential rate 
with $n$, we establish {\it posterior ratio consistency} (Theorem~\ref{thm1}), i.e., the 
ratio of the maximum marginal posterior probability assigned to a ``non-true" DAG to 
the posterior probability assigned to the ``true" DAG converges to zero in probability 
under the true model. In particular, this implies that the true DAG will be the mode of 
the posterior DAG distribution with probability tending to $1$ as $n \rightarrow \infty$.  
An almost sure version of posterior ratio consistency is established in 
Theorem~\ref{thm2}. Next, under the additional assumption that the prior over DAGs 
is restricted to graphs with edge size less than an appropriate function of the sample 
size $n$, we show {\it strong graph selection consistency} (Theorem~\ref{thm3}) and 
establish a posterior convergence rate for estimation of the inverse covariance matrix 
(Theorem E.1 in the Supplemental document). Strong graph selection consistency 
implies that under the true model, the posterior probability of the true graph converges 
in probability to $1$ as $n \rightarrow \infty$. As pointed out in 
Remark \ref{edgerestrictassumption}, the assumption of restricting the prior over 
models with appropriately bounded parameter size has been used in 
\cite{Narisetty:He:2014} for regression models, and in \cite{Banerjee:Ghosal:2015} 
for concentration graph models. 

Narisetty and He \cite{Narisetty:He:2014} establish strong model selection consistency 
of high-dimensional regression models with spike and slab priors. While there are 
some connections between our model and the one in \cite{Narisetty:He:2014} since 
the entries of $L$ can be interpreted as appropriate regression coefficients, there are 
fundamental differences between the the two models and the corresponding analyses. 
A detailed explanation of this is provided in Remark \ref{spikeslabregression}. 

{ In recent work, Altamore et al. \cite{ACR:2013} develop a class of objective 
non-local priors for Gaussian DAG models. This class of priors is structurally 
different from the DAG Wishart priors of \cite{BLMR:2016}, and we also investigate 
posterior model selection consistency under these non-local priors. In fact, we show 
under almost identical assumptions to the DAG Wishart setting that under the true 
model, the posterior probability of the true graph converges in probability to $1$ as 
$n \rightarrow \infty$ (Theorem \ref{thm4}). Another recent paper \cite{CLP:2017} 
tackles the problem of covariate-adjusted DAG selection, i.e., estimating a 
sparse DAG based covariance matrix in the presence of covariates. Establishing 
consistency in this more complex setup is beyond the scope of our paper, and will be 
an excellent topic for future research. 

The rest of the paper is structured as follows. Section \ref{sec2} provides background 
material from graph theory and Gaussian DAG models. In 
Section \ref{sec:modelspecification} we provide the hierarchical Bayesian DAG 
model. Graph selection consistency results are stated in 
Section \ref{sec:modelselection}, and the proofs are provided in 
Section \ref{sec:modelselectionproofs}. In Section \ref{sec:nonlocalpriors}, we 
establish graph selection consistency for non-local priors. A detailed discussion and 
comparison of the Bayesian approach of \cite{BLMR:2016} and existing penalized 
likelihood approaches is undertaken in Section \ref{sec:penalized}. In 
Section \ref{sec:experiments} we use simulation experiments to illustrate the 
posterior ratio consistency result, and demonstrate the benefits of the 
Bayesian approach for graph selection 
vis-a-vis existing penalized likelihood approaches.}

\section{Preliminaries}\label{sec2}

\noindent
In this section, we provide the necessary background material from graph theory, 
Gaussian DAG models, and DAG-Wishart distributions. 

\subsection{Gaussian DAG models} \label{sec2.1}

\noindent
Throughout this paper, a directed acyclic graph (DAG) $\mathscr{D} = (V,E)$ consists 
of the vertex set $V = \{1,\ldots,p\}$ and an edge set $E$ such that there is no directed 
path starting and ending at the same vertex. As in \cite{BLMR:2016}, we will without 
loss of generality assume a parent ordering, where that all the edges are directed from 
larger vertices to smaller vertices. The set of parents of $i$, denoted by $pa_i(\mathscr D)$, is the 
collection of all vertices which are larger than $i$ and share an edge with $i$. 
Similarly, the set of children of $i$, denoted by $chi_i(\mathscr D)$, is the collection of all vertices 
which are smaller than $i$ and share an edge with $i$. 

A Gaussian DAG model over a given DAG $\mathscr{D}$, denoted by 
$\mathscr{N}_{\mathscr{D}}$, consists of all multivariate Gaussian distributions which 
obey the directed Markov property with respect to a DAG $\mathscr{D}$. In particular, 
if $\bm{y}=(y_1, \ldots, y_p)^T \sim N_p(0,\Sigma)$ and $N_p(0,\Sigma) \in 
\mathscr{N}_{\mathscr{D}}$, then $y_i \perp \bm{y}_{\{i+1,\ldots,p\}\backslash pa_i(\mathscr D)}|
\bm{y}_{pa_i(\mathscr D)}$ for each $i$. 

Any positive definite matrix $\Omega$ can be uniquely decomposed as $\Omega = 
LD^{-1}L^T$, where $L$ is a lower triangular matrix with unit diagonal entries, and $D$ 
is a diagonal matrix with positive diagonal entries. This decomposition is known as the 
modified Cholesky decomposition of $\Omega$ (see for example 
\cite{Pourahmadi:2007}). It is well-known that if $\Omega = LD^{-1}L^T$ is the 
modified Cholesky decomposition of $\Omega$, then $N_p(0,\Omega^{-1}) \in 
\mathscr{N}_{\mathscr{D}}$ if and only if $L_{ij} = 0$ whenever $i \notin pa_j (\mathscr D)$. In other 
words, the structure of the DAG $\mathscr{D}$ is reflected in the Cholesky factor of 
the inverse covariance matrix. In light of this, it is often more convenient to 
reparametrize in terms of the Cholesky parameter of the inverse covariance matrix 
as follows. 

Given a DAG $\mathscr{D}$ on $p$ vertices, denote $\mathscr{L}_{\mathscr{D}}$ as 
the set of lower triangular matrices with unit diagonals and $L_{ij} = 0$ if $i \notin 
pa_j(\mathscr D)$, and let $\mathscr{D}_+^p$ be the set of strictly positive diagonal matrices in 
$\mathbb{R}^{p \times p}$. We refer to $\Theta_{\mathscr{D}} = \mathscr{D}_+^p 
\times \mathscr{L}_{\mathscr{D}}$ as the Cholesky space corresponding to 
$\mathscr{D}$, and $(D,L) \in \Theta_{\mathscr{D}}$ as the Cholesky parameter 
corresponding to $\mathscr{D}$. In fact, the relationship between the DAG and the 
Cholesky parameter implies that 
$$
\mathscr{N}_{\mathscr{D}} = \{N_p(0,(L^T)^{-1}DL^{-1}):(D,L) \in 
\Theta_{\mathscr{D}}\}. 
$$

The skeleton of $\mathscr{D}$, denoted by $\mathscr{D}^u = (V,E^u)$, can be 
obtained by replacing all the directed edges of $\mathscr{D}$ by undirected ones. A 
DAG $\mathscr{D}$ is said to be perfect if the parents of all vertices are adjacent. An 
undirected graph is called decomposable if it has no induced cycle of length $n \ge 4$, 
excluding the loops. It is known that if $\mathscr{D}$ is a perfect directed acyclic graph 
(DAG), then $\mathscr{D}^u$ is a decomposable graph. Conversely, given an 
undirected decomposable graph, one can always direct the edges so that the resulting 
graph is a perfect DAG. This fact can be used to show that the class of normal 
distributions satisfying the directed Markov property with respect to $\mathscr{D}$ 
(DAG models, sparsity in $L$) is identical to the class of normal distributions satisfying 
the undirected Markov property with respect to $\mathscr{D}^u$ (concentration 
graph models, sparsity in $\Omega$) {\it if and only if} $\mathscr{D}$ is a perfect DAG. 
(see \cite{PPS:1989}).

\subsection{DAG-Wishart distribution} \label{sec2.2}

\noindent
In this section, we specify the multiple shape parameter DAG-Wishart distributions 
introduced in \cite{BLMR:2016}. First, we provide required notation. Given a directed 
graph $\mathscr{D} = (V,E)$, with $V = \{1, \ldots, p\}$, and a $p \times p$ matrix $A$, 
denote the column vectors $A_{\mathscr D .i}^> = (A_{ij})_{j \in pa_{i}(\mathscr D)}$ and 
$A_{\mathscr D.i}^{\ge} = (A_{ii}, (A_{\mathscr D.i}^>)^T)^T.$ Also, let $A_{\mathscr 
D}^{>i} = (A_{kj})_{k,j \in pa_{i}(\mathscr D)}$, $$ A_{\mathscr D}^{ \ge i} = 
\left[ \begin{matrix}
A_{ii} & (A_{\mathscr D.i}^>)^T \\
A_{\mathscr D.i}^> & A_{\mathscr D}^{>i}
\end{matrix} \right]. 
$$

\noindent
In particular, $A_{\mathscr D.p}^{\ge} = A_{\mathscr D}^{ \ge p} = A_{pp}$. 

The DAG-Wishart distributions in \cite{BLMR:2016} corresponding to a DAG 
$\mathscr{D}$ are defined on the Cholesky space $\Theta_{\mathscr{D}}$. Given a 
positive definite matrix $U$ and a $p$-dimensional vector ${\boldsymbol \alpha} 
(\mathscr{D})$, the (unnormalized) density of the DAG-Wishart distribution on 
$\Theta_{\mathscr{D}}$ is given by 
\begin{equation} \label{a2}
\exp\{-\frac12\mbox{tr}((LD^{-1}L^T)U)\} \prod_{i=1}^p 
D_{ii}^{-\frac{\alpha_i (\mathscr{D})}2},
\end{equation} 

\noindent
for every $(D,L) \in \Theta_{\mathscr{D}}$. Let $\nu_{i}(\mathscr D) = |pa_{i}
(\mathscr{D})| = |\{j: j>i, (j,i) \in E({\mathscr{D}})\}|$. If $\alpha_i(\mathscr D) - 
\nu_i(\mathscr D) >2$, for all $1 \le i \le p$, the density in (\ref{a2}) can be normalized 
to a probability density, and the normalizing constant is given by 
\begin{equation} \label{a1}
z_{\mathscr{D}}(U,{\boldsymbol \alpha} (\mathscr{D})) = \prod_{i=1}^{p}
\frac{\Gamma(\frac{\alpha_i (\mathscr{D})}2 - \frac{\nu_{i}(\mathscr D)}2 - 
1)2^{\frac{\alpha_i (\mathscr{D})}2 - 1}(\sqrt{\pi})^{\nu_{i}(\mathscr D)} 
det(U_{\mathscr D}^{>i})^{\frac{\alpha_i (\mathscr{D})}2 - \frac{\nu_{i}(\mathscr D)}2 - 
\frac32}}{det(U_{\mathscr D}^{\ge i})^{\frac{\alpha_i (\mathscr{D})}2 - \frac{\nu_{i}
(\mathscr D)}2 - 1}},
\end{equation}

\noindent
In this case, we define the DAG-Wishart density 
$\pi_{U,\bm \alpha(\mathscr D)}^{\Theta_{\mathscr{D}}}$ on the Cholesky space 
$\Theta_{\mathscr{D}}$ by 
$$
\pi_{U,\bm \alpha(\mathscr D)}^{\Theta_{\mathscr{D}}} (D,L) = 
\frac{1}{z_{\mathscr{D}}(U,{\boldsymbol \alpha} (\mathscr{D}))} exp\{-\frac12\mbox{tr}
((LD^{-1}L^T)U)\} \prod_{i=1}^p D_{ii}^{-\frac{\alpha_i (\mathscr{D})}2} 
$$

\noindent
for every $(D,L) \in \Theta_{\mathscr{D}}$. The above density has the same form as 
the classical Wishart density, but is defined on the lower dimensional space 
$\Theta_{\mathscr{D}}$ and has $p$ shape parameters 
$\{\alpha_i (\mathscr{D})\}_{i=1}^p$ which can be used for differential shrinkage 
of the variables in high-dimensional settings. 

The class of densities $\pi_{U,\bm \alpha(\mathscr D)}^{\Theta_{\mathscr{D}}}$ form a 
conjugate family of priors for the Gaussian DAG model $\mathscr{N}(\mathscr{D})$. In 
particular, if the prior on $(D,L) \in \Theta_{\mathscr{D}}$ is $\pi_{U,\bm 
\alpha(\mathscr D)}^{\Theta_{\mathscr{D}}}$ and $\bm{Y}_1, \ldots, \bm{Y}_n$ are 
independent, identically distributed $N_p(\bm{0},(L^T)^{-1}DL^{-1})$ random vectors, 
then the posterior distribution of $(D,L)$ is $\pi_{\tilde{U},\tilde{\bm \alpha}(\mathscr 
D)}^{\Theta_{\mathscr{D}}}$, where $S = \frac1n\sum_{i=1}^n\bm{Y}_i\bm{Y}_i^T$ 
denotes the sample covariance matrix, $\tilde{U} = U + nS$, and $\tilde{\bm \alpha}
(\mathscr D) = (n+\alpha_1(\mathscr D), \ldots, n+\alpha_p(\mathscr D))$.

\section{Model specification} \label{sec:modelspecification}

\noindent
Let ${\bm Y}_1, {\bm Y}_2, \cdots, {\bm Y}_n \in \mathbb{R}^p$ be the observed data. 
The class of DAG-Wishart distributions in Section \ref{sec2} can be used for Bayesian 
covariance estimation and DAG selection through the following hierarchical model. 

\begin{eqnarray} \label{a3}
\begin{split}
&\bm Y | \left((D,L), \mathscr D\right) \sim N_p \left( \bm 0, (LD^{-1}L^T)^{-1}\right),
\\
&(D,L) | \mathscr D \sim \pi_{U,\bm \alpha(\mathscr D)}^{\Theta_{\mathscr{D}}},
\\
&\pi (\mathscr D) = \prod_{i=1}^{p-1} q^{\nu_i(\mathscr{D})} (1-q)^{p-i-
\nu_i(\mathscr{D})},
\end{split}
\end{eqnarray}
	
\noindent
The prior density for DAGs above corresponds to an Erdos-Renyi type of 
distribution on the space of DAGs, where each directed edge is present with 
probability $q$ independently of the other edges. In particular, similar to 
\cite{Banerjee:Ghosal:2015}, define $\gamma_{ij} = \mathbb{I} \{(i,j) \in 
E(\mathscr D)\}$, $1 \le i < j \le p$ to be the edge indicator. Let $\gamma_{ij}$, $1 \le 
i < j < p$ be independent identically distributed Bernoulli($q$) random variables. It 
follows that 
$$
\pi (\mathscr D) = \prod_{(i,j):1 \leq i < j \leq p} q^{\gamma_{ij}} \left(1-q 
\right)^{1-\gamma_{ij}} = \prod_{i=1}^{p-1} q^{\nu_i(\mathscr{D})} (1-q)^{p-i-
\nu_i(\mathscr{D})}. 
$$

\noindent
The model in (\ref{a3}) has three hyperparameters: the scale matrix $U$ (positive 
definite), the shape parameter vector ${\boldsymbol \alpha} (\mathscr{D})$, and the 
edge probability $q$. 

The hierarchical model in (\ref{a3}) can be used to estimate a DAG as follows. By 
(\ref{a3}) and Bayes' rule, the (marginal) posterior DAG probabilities are given by 
\begin{align} \label{m3}
\begin{split}
&\pi({\mathscr{D}}|\bm{Y}) \\
=& \int_{\Theta_{\mathscr{D}}} \frac{\pi(\bm{Y}| {\mathscr{D}},(L,D)) \pi_{U,\bm 
\alpha(\mathscr D)}^{\Theta_{\mathscr{D}}} ((L,D)) \pi({\mathscr{D}})}
{\pi(\bm{Y})} dLdD 
\\
=& \frac{\pi({\mathscr{D}})}{\pi(\bm{Y})}\int_{\Theta_{\mathscr{D}}}\pi(\bm{Y}| {\mathscr{D}},(L,D)) \pi((L,D)|{\mathscr{D}}))dLdD 
\\
=& \frac{\pi({\mathscr{D}})}{\pi(\bm{Y})}\int_{\Theta_{\mathscr{D}}} \frac{\exp\left(-\frac{1}{2} \mbox{tr}(LD^{-1}L^T(U+nS))\right) \prod_{i=1}^p D_{ii}^{-\left(n+\frac{\alpha_i(\mathscr D)}{2}\right)}}{z_{\mathscr{D}}(U,\bm{\alpha}(\mathscr D))}dLdD 
\\
=& \frac{\pi({\mathscr{D}})}{\pi(\bm{Y})(\sqrt{2\pi})^n} \frac{z_{\mathscr{D}}(U+nS,n+\bm{\alpha}(\mathscr D))}{z_{\mathscr{D}}(U,\bm{\alpha}(\mathscr D))}.
\end{split}
\end{align}

\noindent
Hence, the marginal posterior density $\pi({\mathscr{D}}|\bm{Y})$ is available in closed 
form (up to the multiplicative constant $\pi(\bm{Y})$). In particular, these posterior 
probabilities can be used to select a DAG by computing the posterior mode defined by
\begin{equation} \label{a4}
\hat{\mathscr D} =  \argmax_{\mathscr{D}} \pi({\mathscr{D}}|\bm{Y}).
\end{equation}

\section{DAG selection consistency: main results} \label{sec:modelselection}

\noindent
In this section we will explore the high-dimensional asymptotic properties of the 
Bayesian DAG selection approach specified in Section \ref{sec:modelspecification}. 
For this purpose, we will work in a setting where the dimension $p = p_n$ of the data 
vectors, and the edge probabilities $q = q_n$ vary with the sample size $n$. We 
assume that the data is actually being generated from a true model which can be 
specified as follows. Let $\bm{Y}_1^n, \bm{Y}_2^n, \ldots, \bm{Y}_n^n$ be 
independent and identically distributed $p_n$-variate Gaussian vectors with mean 
$0$ and covariance matrix $\Sigma_0^n = (\Omega_0^n)^{-1}$. Let $\Omega_0^n = 
L_0^n(D_0^n)^{-1}(L_0^n)^T$ be the modified Cholesky decomposition of 
$\Omega_0^n$. Let $\mathscr D_0^n$ be the true underlying DAG, i.e, $L_0^n \in \mathscr{L}_{\mathscr{D}_0^n}$. Denote $d_n$ as the maximum number of 
non-zero entries in any column of $L_0^n$, $s_n = \min_{1 \leq j \leq p, i \in pa_j(\mathscr{D}
_0^n)} |(L_0^n)_{ji}|$. Let $\bar P$ and $\bar{E}$ respectively denote the probability 
measure and expected value corresponding to the ``true" Gaussian DAG model 
presented above.

In order to establish our asymptotic results, we need the following mild regularity 
assumptions. Each assumption below is followed by an interpretation/discussion. 
Note that for a symmetric $p \times p$ matrix $A = (A_{ij})_{1\le i,j\le p}$, let $eig_1(A) 
\le eig_2(A) \ldots eig_p(A)$ denote the ordered eigenvalues of $A$. 
\begin{assumption}
	There exists $\epsilon_{0,n} \le 1$, such that for every $n \ge 1,$ $0 < \epsilon_{0,n} \le eig_1({\Omega}_0^n) \le eig_{p_n}({\Omega}_0^n) \le \epsilon_{0,n}^{-1}$, where $\frac{\left(\frac{\log p}{n}\right)^{\frac 1 2 - \frac 1 {2+k}}}{\epsilon_{0,n}^4} \rightarrow 0,$ as $n \rightarrow \infty,$ for some 
	$k > 0$.
\end{assumption}

\noindent
This is a much weaker assumption for high dimensional covariance asymptotics than for example, \cite{Bickel:Levina:2008, ElKaroui:2008, Banerjee:Ghosal:2014, XKG:2015, Banerjee:Ghosal:2015}. Here we allow the lower and upper bounds on the eigenvalues to depend on $p$ and $n$. 
\begin{assumption}
$d_n^{2+k} \sqrt{\frac{\log p_n}{n}}\rightarrow 0$, and
$\left( \sqrt{\frac{\log p_n}{n}} \right)^{\frac{k}{2(k+2)}} \log n \rightarrow 0$, as $n \rightarrow \infty$. 
\end{assumption}
\noindent
This assumption essentially states that the number of variables $p_n$ has to 
grow slower than $e^{n/d_n^{4+2k}}$ (and also $e^{n/(\log n)^{2+k}}$). Again, 
similar assumptions are common in high dimensional covariance asymptotics, 
see for example 
\cite{Bickel:Levina:2008, XKG:2015, Banerjee:Ghosal:2014, Banerjee:Ghosal:2015}. 

\begin{assumption}
	Let $q_n = e^{-\eta_nn}$ in (\ref{a3}), where $\eta_n = d_n (\frac{\log p_n}{n})^{\frac{1/2}{1+k/2}}$. Hence, $q_n \rightarrow 0$, as $ n \rightarrow \infty$.
\end{assumption}

\noindent
This assumption provides the rate at which the edge probability $q_n$ needs to 
approach zero. A similar assumption can be found in \cite{Narisetty:He:2014} in the 
context of linear regression. This can be interpreted as apriori penalizing graphs 
with a large number of edges. 
\begin{assumption}
	$\frac{\eta_nd_n}{\epsilon_{0,n}s_n^2} \rightarrow 0$ as $n \rightarrow \infty$.
\end{assumption}

\noindent
Recall that $s_n$ is the smallest (in absolute value) non-zero off-diagonal entry in 
$L_0^n$, and can be interpreted as the `signal size'. Hence, this assumption provides 
a lower bound for the signal size that is needed for establishing consistency. 
\begin{assumption}
For every $n \ge 1$, the hyperparameters for the DAG-Wishart prior 
$\pi_{U_n,\bm{\alpha}(\mathscr{D}_n)}^{\Theta_{\mathscr{D}_n}}$ in (\ref{a3}) satisfy 
(i) $2 < \alpha_i(\mathscr{D}_n) - \nu_i(\mathscr D_n) < c$ for every $\mathscr{D}_n$ 
and $1 \le i \le p_n$, and (ii) $0 < \delta_1 \le eig_1(U_n) \le eig_{p_n}(U_n) \le 
\delta_2 < \infty$. Here $c, \delta_1, \delta_2$ are constants not depending on $n$. 
\end{assumption}

\noindent
This assumption provides mild restrictions on the hyperparameters for the 
DAG-Wishart distribution. The assumption $2 < \alpha_i(\mathscr D) - 
\nu_i(\mathscr D)$ establishes prior propriety. The assumption $\alpha_i(\mathscr D) - 
\nu_i(\mathscr D) < c$ implies that the shape parameter $\alpha_i(\mathscr D)$ can only differ 
from $\nu_i(\mathscr D)$ (number of parents of $i$ in $\mathscr D$) by a constant which 
does not vary with $n$. Additionally, the eigenvalues of the 
scale matrix $U_n$ are assumed to be uniformly bounded in $n$. While the authors 
in \cite{BLMR:2016} do not specifically discuss hyperparameter choice, they do 
provide some recommendations in their experiments section. For the shape 
parameters, they recommend $\alpha_i(\mathscr{D}_n) = c \nu_i(\mathscr D_n) 
+ b$. They mostly use $c = 1$ in which case Assumption 5 is satisfied. The also 
use $c \in (2.5, 3.5)$ in some examples, in which case Assumption 5 is not satisfied. 
Also, they choose the scale matrix to be a constant multiple of the identity matrix, 
which clearly satisfies Assumption 5. 

For the rest of this paper, $p_n, {\Omega}_0^n, \Sigma_0^n, L_0^n, D_0^n, 
\mathscr{D}_0^n, \hat{\mathscr D}^n, \mathscr{D}^n, d_n, q_n, 
A_n$ will be denoted as $p, {\Omega}_0, \Sigma_0, L_0, D_0, \mathscr{D}_0, 
\hat{\mathscr D}, \mathscr{D}, d, q, A$ as needed for notational convenience 
and ease of exposition. 

We now state and prove the main DAG selection consistency results. Our first result 
establishes what we call as posterior ratio consistency. This notion of consistency 
implies that the true DAG will be the mode of the posterior DAG distribution with 
probability tending to $1$ as $n \rightarrow \infty$. 
\begin{theorem}[Posterior ratio consistency] \label{thm1}
Under Assumptions 1-5, the following holds: 
$$
\max_{{\mathscr{D}} \ne {\mathscr{D}}_0} \frac{\pi(\mathscr{D}|\bm{Y})}
{\pi({\mathscr{D}}_0|\bm{Y})} \stackrel{\bar{P}}{\rightarrow} 0, \mbox{ as } n \rightarrow \infty. 
$$
\end{theorem}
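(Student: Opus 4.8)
The plan is to work directly from the closed-form expression (\ref{m3}) for the marginal posterior DAG probabilities. Forming the ratio $\pi(\mathscr D|\bm Y)/\pi(\mathscr D_0|\bm Y)$ cancels $\pi(\bm Y)$ and $(\sqrt{2\pi})^n$, leaving $\tfrac{\pi(\mathscr D)}{\pi(\mathscr D_0)}$ times a ratio of the $z_\bullet$'s evaluated at $(U,\bm\alpha)$ and $(U+nS,\,n+\bm\alpha)$. Both the edge prior $\pi(\cdot)$ and the normalizing constant $z_\bullet(\cdot)$ in (\ref{a1}) factor as products over the vertices $i=1,\dots,p$, with the $i$-th factor depending on the graph only through $pa_i(\cdot)$; hence I would write $\log\{\pi(\mathscr D|\bm Y)/\pi(\mathscr D_0|\bm Y)\}=\sum_{i=1}^{p} g_i\big(pa_i(\mathscr D)\big)$, where $g_i(T)$ is the difference of vertex $i$'s log-contribution with parent set $T$ and with the true parent set $T_0:=pa_i(\mathscr D_0)$, so that $g_i(T_0)=0$. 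Since every $\mathscr D\ne\mathscr D_0$ disagrees with $\mathscr D_0$ in at least one parent set, it then suffices to show that on an event of $\bar P$-probability tending to one, $g_i(T)\le -K_n$ simultaneously for all $i$ and all $T\subseteq\{i+1,\dots,p\}$ with $T\ne T_0$, for some deterministic $K_n\to\infty$: the agreeing vertices then contribute $0$ and at least one disagreeing vertex contributes $\le -K_n$, uniformly in $\mathscr D\ne\mathscr D_0$.

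Next I would make the leading behaviour of $g_i(T)$ explicit. Using (\ref{a1}) and the Schur identity $\det(A_{\mathscr D}^{\ge i})/\det(A_{\mathscr D}^{>i})=A_{ii}-(A_{\mathscr D.i}^{>})^T(A_{\mathscr D}^{>i})^{-1}A_{\mathscr D.i}^{>}$, which for $A=U+nS$ equals $n\,\widehat\sigma_i^2(T)$, the ridge-regularized residual variance of coordinate $i$ regressed on the coordinates in $T$, together with a digamma/Stirling expansion of the Gamma ratios, one obtains
\[
g_i(T)=\frac n2\log\frac{\widehat\sigma_i^2(T_0)}{\widehat\sigma_i^2(T)}+\big(|T|-|T_0|\big)\log\frac{q}{1-q}-\frac{|T|-|T_0|}{2}\log\frac n2+R_i(T),
\]
where under Assumption 5 (bounded $\alpha_i-\nu_i$, bounded eigenvalues of $U$) the remainder $R_i(T)$, collecting the bounded-argument Gamma ratios, the $\det U$-type terms, and the constant powers of $2$ and $\sqrt\pi$, is of strictly smaller order than each displayed term, uniformly in $i$ and $T$.

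The key probabilistic input is a uniform comparison of $\widehat\sigma_i^2(T)$ with its population analogue $\sigma_i^2(T)$, obtained from concentration of $S$ around $\Sigma_0$, supplemented by the crude deterministic bounds $\delta_1/n\le\widehat\sigma_i^2(T)\le(\Sigma_0)_{ii}+o(1)$ (valid since $U+nS\succeq U$) which dispatch parent sets of very large cardinality. I would then split into two regimes for $T\ne T_0$. In the overfitting regime $T\supsetneq T_0$ one has $\widehat\sigma_i^2(T)\le\widehat\sigma_i^2(T_0)$, the likelihood gain $\tfrac n2\log(\widehat\sigma_i^2(T_0)/\widehat\sigma_i^2(T))$ is at most a constant multiple of $(|T|-|T_0|)(\log p)\,\epsilon_{0,n}^{-2}$ uniformly, and the prior term is $-(|T|-|T_0|)\eta_n n(1+o(1))$; Assumptions 1--3 force $\eta_n n\gg(\log p)\,\epsilon_{0,n}^{-2}$, so $g_i(T)\le -c(|T|-|T_0|)\eta_n n$. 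In the underfitting/misspecified regime $T\not\supseteq T_0$, dropping a genuine parent inflates the population residual variance by at least $c\,s_n^2\,\epsilon_{0,n}$ (the dropped entry of $L_0$ has magnitude $\ge s_n$ and $eig_1(\Sigma_0)\ge\epsilon_{0,n}$), so the likelihood term is at most $-c\,n\,s_n^2\,\epsilon_{0,n}^2$, while the only positive contributions are the prior term (at most $d_n\eta_n n(1+o(1))$, using $|T_0|\le d_n$) and the Gamma term (at most $\tfrac{d_n}{2}\log n$), both dominated by $n s_n^2\epsilon_{0,n}^2$ by Assumptions 1 and 4. Either way $g_i(T)\le -K_n$ with $K_n\to\infty$, uniformly over $i$ and $T\ne T_0$; the union bound over the at most $p\,2^{p}$ pairs $(i,T)$ is harmless because in the overfitting regime, where the number of candidate sets is largest, the concentration inequality permits exponentially small failure probabilities.

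I expect the main obstacle to be the underfitting regime: establishing the sharp lower bound on $\widehat\sigma_i^2(T)-\widehat\sigma_i^2(T_0)$ uniformly over all $T\not\supseteq T_0$ and all $i$, and then carefully tracking the powers of $\epsilon_{0,n}$ to confirm that the gain of order $n s_n^2\epsilon_{0,n}^2$ genuinely dominates the worst-case prior and Gamma ``bonuses'' of order $d_n\eta_n n$ and $d_n\log n$ — this is exactly where Assumptions 1 and 4 are used at full strength and the bookkeeping is delicate. A secondary technical point is making the Stirling/Taylor control of the remainder $R_i(T)$ uniform over parent sets whose size is allowed to grow with $n$.
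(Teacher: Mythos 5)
Your strategy is essentially the paper's own: factor the log posterior ratio over vertices so that only vertices with $pa_i(\mathscr D)\neq pa_i(\mathscr D_0)$ contribute, condition on an event where $\tilde S=S+U/n$ is uniformly close to $\Sigma_0$ entrywise, lower-bound the population residual-variance inflation caused by omitting a true parent (the paper's Proposition~\ref{f1}(b)), and dispatch very large parent sets with the deterministic bound $\widehat\sigma_i^2(T)\ge\delta_1/n$ played against the per-edge prior penalty $\eta_n n$. The genuine gap is your treatment of the regime $T\not\supseteq T_0$. The claim that there ``the likelihood term is at most $-c\,n\,s_n^2\,\epsilon_{0,n}^2$'' is only justified when $\widehat\sigma_i^2(T)$ can be tied to its population analogue, which entrywise concentration delivers only for $|T|\lesssim \epsilon_{0,n}^2\sqrt{n/\log p}$. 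In the mixed case ($T\not\supseteq T_0$ and $T\not\subseteq T_0$) with $|T|$ large, $\widehat\sigma_i^2(T)$ can be as small as $\delta_1/n$ even though $T$ misses a true parent, so the likelihood term can be of order $+n\log n$ rather than negative; your accounting that ``the only positive contributions are the prior term $\le d_n\eta_n n$ and the Gamma term'' then fails, and what must save the day is the prior penalty $-(|T|-|T_0|)\eta_n n$, not the signal-strength gain. The paper resolves this (Lemma~\ref{lm5}) by factoring $B_i(\mathscr D,\mathscr D_0)\le B_i(\mathscr D,\mathscr D^*)B_i(\mathscr D^*,\mathscr D_0)$ through the intersection parent set $pa_i(\mathscr D^*)=pa_i(\mathscr D)\cap pa_i(\mathscr D_0)$, so the first factor inherits the full overfitting analysis (including its large-$|T|$ branch, Lemmas~\ref{lm3}--\ref{lm3}) and the second the pure underfitting analysis. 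You need either this telescoping device or an explicit replication of the large-cardinality case split inside the mixed regime; as written that case is not covered.

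A secondary bookkeeping issue: the determinant ratio $|\tilde S_{\mathscr D_0}^{\ge i}|^{1/2}/|\tilde S_{\mathscr D}^{\ge i}|^{1/2}$ does not belong in a lower-order remainder $R_i(T)$. For $T\supset T_0$ it equals $|R^{-1}|^{1/2}$ for a Schur complement whose inverse has eigenvalues as large as $n/\delta_1$, so it can contribute $+\tfrac{|T|-|T_0|}{2}\log(n/\delta_1)$ to $g_i(T)$, exactly cancelling your displayed term $-\tfrac{|T|-|T_0|}{2}\log\tfrac n2$ (this cancellation is visible in (\ref{new7})--(\ref{new8})). The conclusion survives because the operative per-edge penalty is $\eta_n n\gg\log n$ from the prior, not a BIC-type Occam factor, but your expansion misattributes the source of penalization. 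Relatedly, your bound $(|T|-|T_0|)(\log p)\epsilon_{0,n}^{-2}$ on the overfitting likelihood gain is sharper than what $\Vert\tilde S-\Sigma_0\Vert_{\max}$ control yields and would require a genuinely uniform-over-supersets (chi-squared type) concentration argument; the cruder bound of order $d\sqrt{n\log p}\,\epsilon_{0,n}^{-4}$ that the paper obtains already suffices, since Assumptions 1--3 make it $o(\eta_n n)$.
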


\noindent
A proof of this result is provided in Section \ref{sec:modelselectionproofs}. If one was interested in a point estimate of the 
underlying DAG using the Bayesian approach considered here, the most obvious choice would be the posterior mode 
$\hat{\mathscr D}$ defined in (\ref{a4}). From a frequentist point of view, it would be natural to enquire if we have 
model selection consistency, i.e., if $\hat{\mathscr D}$ is a consistent estimate of $\mathscr D_0$. In fact, the 
model selection consistency of the posterior mode follows immediately from posterior ratio 
consistency established in Theorem \ref{thm1}, by noting that 
$$
\max_{{\mathscr{D}} \ne {\mathscr{D}}_0} \frac{\pi(\mathscr{D}|\bm{Y})}
{\pi({\mathscr{D}}_0|\bm{Y})} < 1 \Rightarrow \hat{\mathscr D}= \mathscr D_0. 
$$

\noindent
We state this result formally in the corollary below. 
\begin{cor}[Model selection consistency for posterior mode] \label{cor1}
Under Assumptions 1-5, the posterior mode $\hat{\mathscr D}$ is equal to the 
true DAG $\mathscr D_0$ with probability tending to $1$, i.e., 
$$
\bar{P}(\hat{\mathscr D} = \mathscr D_0) \rightarrow 1, \mbox{ as } n \rightarrow \infty. 
$$ 
\end{cor}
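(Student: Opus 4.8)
The plan is to deduce this immediately from the posterior ratio consistency of Theorem~\ref{thm1}, via the elementary implication already recorded above. First I would introduce the event
$$
E_n := \left\{ \max_{\mathscr{D} \ne \mathscr{D}_0} \frac{\pi(\mathscr{D}|\bm{Y})}{\pi(\mathscr{D}_0|\bm{Y})} < 1 \right\}
$$
and observe that $\{\hat{\mathscr{D}} = \mathscr{D}_0\} \supseteq E_n$. Indeed, on $E_n$ every DAG $\mathscr{D} \ne \mathscr{D}_0$ satisfies $\pi(\mathscr{D}|\bm{Y}) < \pi(\mathscr{D}_0|\bm{Y})$, so $\mathscr{D}_0$ is the \emph{unique} maximizer of $\mathscr{D} \mapsto \pi(\mathscr{D}|\bm{Y})$, and hence $\hat{\mathscr{D}} = \mathscr{D}_0$ by the definition (\ref{a4}) of the posterior mode. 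Consequently $\bar{P}(\hat{\mathscr{D}} = \mathscr{D}_0) \ge \bar{P}(E_n)$.

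Next I would invoke Theorem~\ref{thm1}, which gives $\max_{\mathscr{D} \ne \mathscr{D}_0} \pi(\mathscr{D}|\bm{Y})/\pi(\mathscr{D}_0|\bm{Y}) \stackrel{\bar{P}}{\rightarrow} 0$. By the definition of convergence in probability (applied with threshold $1$), $\bar{P}\big( \max_{\mathscr{D} \ne \mathscr{D}_0} \pi(\mathscr{D}|\bm{Y})/\pi(\mathscr{D}_0|\bm{Y}) \ge 1 \big) \rightarrow 0$, so $\bar{P}(E_n) \rightarrow 1$. Combining this with the previous inequality yields $\bar{P}(\hat{\mathscr{D}} = \mathscr{D}_0) \rightarrow 1$ as $n \rightarrow \infty$, which is the assertion of the corollary. (The ratio in $E_n$ is well defined since $\pi(\mathscr{D}_0|\bm{Y}) > 0$ almost surely, as is already implicit in Theorem~\ref{thm1}.)

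There is no genuine obstacle here: all the substance resides in Theorem~\ref{thm1}, and the corollary is a one-line logical consequence of it. The only point warranting a remark is the use of a \emph{strict} inequality in $E_n$, which is what guarantees uniqueness of the posterior mode and rules out a tie between $\mathscr{D}_0$ and a competing DAG; this is exactly why posterior ratio consistency — convergence of the maximal ratio all the way down to $0$, hence eventually strictly below $1$ — is precisely the right tool, and why, conversely, merely showing the ratio is bounded by $1$ in the limit would not by itself be enough.
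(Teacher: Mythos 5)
Your argument is correct and is essentially identical to the paper's own reasoning, which notes that $\max_{\mathscr{D}\ne\mathscr{D}_0}\pi(\mathscr{D}|\bm{Y})/\pi(\mathscr{D}_0|\bm{Y})<1$ forces $\hat{\mathscr{D}}=\mathscr{D}_0$ and then applies Theorem~\ref{thm1} to conclude this event has probability tending to one. Your added remark on why the strict inequality (and hence uniqueness of the maximizer) matters is a fair clarification but not a departure from the paper's route.
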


\noindent
If $p$ is of a larger order than a positive power of $n$, then a stronger version of the posterior ratio consistency 
in Theorem \ref{thm1} can be established. 
\begin{theorem}[Almost sure posterior ratio consistency] \label{thm2}
If $p/n^{\widetilde{k}} \rightarrow \infty$ for some $\widetilde{k} > 0$, then under Assumption 1-5 the following holds: 
$$
\max_{{\mathscr{D}} \ne {\mathscr{D}}_0} \frac{\pi(\mathscr{D}|\bm{Y})}
{\pi({\mathscr{D}}_0|\bm{Y})} \rightarrow 0 \mbox{ almost surely } \bar{P}, 
$$

\noindent
as $n \rightarrow \infty$.
\end{theorem}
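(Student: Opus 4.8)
The plan is to upgrade the in-probability convergence of Theorem~\ref{thm1} to almost-sure convergence by exploiting the extra room afforded by the assumption $p/n^{\widetilde{k}} \to \infty$, which makes $\log p$ grow at least like $\widetilde{k}\log n$ and hence lets tail bounds beat the summability threshold needed for Borel--Cantelli. First I would revisit the proof of Theorem~\ref{thm1} and isolate the random quantities that drove the argument: these are deviations of the sample covariance $S$ from $\Sigma_0$, appearing through expressions such as $\mathrm{tr}\big((S-\Sigma_0)A\big)$ for suitable matrices $A$, and through quadratic-form-type statistics $\widehat{\Omega}$-residuals that arise when one compares $z_{\mathscr D}(U+nS, n+\bm\alpha(\mathscr D))$ across DAGs. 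The key point is that in Theorem~\ref{thm1} one shows the maximum posterior ratio is bounded above by a deterministic sequence $\to 0$ on an event $E_n$ with $\bar P(E_n^c)\to 0$; the goal now is to show instead $\sum_n \bar P(E_n^c) < \infty$, so that by Borel--Cantelli $E_n$ holds eventually almost surely, and on $E_n$ the maximum ratio is deterministically below a vanishing bound.

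The main technical step is therefore a concentration inequality for the relevant functionals of $S$ that is summable in $n$. For i.i.d.\ Gaussian data, the standard Gaussian/chi-square concentration bounds (e.g.\ of Bernstein type, as in Bickel--Levina or the references cited after Assumption~2) give, for the entrywise or operator-norm deviation of $S$, a bound of the form $\bar P\big(\|S-\Sigma_0\|_{\max} > t\big) \le 2 p^2 \exp(-cnt^2)$ for $t$ in a suitable range, with $c$ depending on $\epsilon_{0,n}$. Choosing $t$ of the order $\sqrt{(\log p)/n}/\epsilon_{0,n}$ — which is exactly the scale already used implicitly in Theorem~\ref{thm1} via Assumptions 1--2 — the right-hand side becomes $2\exp(2\log p - c'\log p) = 2 p^{-(c'-2)}$. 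Under $p/n^{\widetilde k}\to\infty$ we have $\log p \ge \widetilde k \log n$ for large $n$, so this is bounded by $2 n^{-\widetilde k (c'-2)}$, which is summable once the constant in Assumption~1 is used to make $c'-2$ exceed $1/\widetilde k$. One must also handle the union over all DAGs $\mathscr D$ and over the coordinatewise submatrices $U_{\mathscr D}^{\ge i}, U_{\mathscr D}^{>i}$ entering $z_{\mathscr D}$: this union has size at most $2^{\binom p 2}\le e^{p^2}$, which is far too large, so — exactly as in the proof of Theorem~\ref{thm1} — one does not union-bound over DAGs but rather reduces everything to the single event $\{\|S-\Sigma_0\|_{\max} \le t\}$ (and possibly $\{\mathrm{eig}_1(S)\ge \epsilon_{0,n}/2\}$), on which all the per-DAG inequalities hold simultaneously and deterministically.

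The hard part will be verifying that every probabilistic estimate used in Theorem~\ref{thm1} can indeed be reduced to such a small collection of ``master'' events with summable complements, rather than requiring a union bound that grows exponentially in $p$; in particular one must be careful about the quadratic forms $\mathbf Y_i^T A \mathbf Y_i$ with data-dependent $A$, which are handled by conditioning/decoupling and then bounded uniformly over the (deterministic) range of $A$ dictated by the event $\{\|S-\Sigma_0\|_{\max}\le t\}$. Once the finitely-many master events are shown to have summable complementary probabilities — using sub-Gaussian/sub-exponential tail bounds together with $\log p \gtrsim \log n$ — Borel--Cantelli gives that almost surely these events hold for all large $n$, and then the deterministic bound from Theorem~\ref{thm1} yields $\max_{\mathscr D\ne\mathscr D_0}\pi(\mathscr D\mid\mathbf Y)/\pi(\mathscr D_0\mid\mathbf Y)\to 0$ almost surely. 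I would also note that the role of the hypothesis $p/n^{\widetilde k}\to\infty$ is solely to convert ``polynomial in $p$'' tail bounds into ``summable in $n$'' ones; no new assumption on the model is needed beyond Assumptions 1--5.
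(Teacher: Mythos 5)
Your proposal is correct and follows essentially the same route as the paper: the proof of Theorem~\ref{thm1} already confines all randomness to the single event $C_n$ of (\ref{smplbound2}), and the paper simply enlarges the constant $c'$ (taking $m_2(c')^2/4 = 2 + 2/\widetilde{k}$) so that $\bar P(C_n) \le m_1 p^{-2/\widetilde{k}} \le m_1/n^2$ is summable, then invokes Borel--Cantelli together with the deterministic bound (\ref{pp6.1}). Your identification of the ``master event'' reduction and of the role of $p/n^{\widetilde{k}}\to\infty$ in converting the polynomial-in-$p$ tail into a summable-in-$n$ one is exactly the paper's argument.
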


\noindent
Next we establish another stronger result (compared to Theorem \ref{thm1}) which implies that the posterior mass assigned to 
the true DAG $\mathscr{D}_0$ converges to 1 in probability (under the true model). Following \cite{Narisetty:He:2014}, we 
refer to this notion of consistency as strong selection consistency. To establish this stronger notion of consistency, we restrict 
our analysis to DAGs with total number of edges bounded by an appropriate function of $n$ (see also 
Remark \ref{edgerestrictassumption}). 
\begin{theorem}[Strong selection consistency] \label{thm3}
Under Assumptions 1-5, if we restrict only to DAG's with number of edges at 
most $\frac{1}{8} d \left( \frac{n}{\log p}\right)^{\frac{1+k}{2+k}}$, the following holds:
$$
\pi(\mathscr{D}_0 | \bm{Y}) \stackrel{\bar{P}}{\rightarrow} 1, \mbox{ as } n \rightarrow 
\infty. 
$$
\end{theorem}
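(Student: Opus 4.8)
\noindent
The plan is to control the full posterior odds against the true DAG, not merely the maximum ratio. Using $\sum_{\mathscr{D}}\pi(\mathscr{D}|\bm{Y})=1$ (the posterior here being normalised over the restricted class of DAGs with at most $R_n:=\tfrac{1}{8}d\,(n/\log p)^{(1+k)/(2+k)}$ edges), we have
\[
\frac{1}{\pi(\mathscr{D}_0|\bm{Y})}=1+\sum_{\mathscr{D}\ne\mathscr{D}_0}\frac{\pi(\mathscr{D}|\bm{Y})}{\pi(\mathscr{D}_0|\bm{Y})},
\]
so it suffices to show the sum on the right converges to $0$ in $\bar{P}$-probability. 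The crucial bookkeeping identity is $R_n\log p=\tfrac{1}{8}\,\eta_n n$, which gives that the number of DAGs in the restricted class is at most $\sum_{j=0}^{R_n}\binom{\binom{p}{2}}{j}\le (R_n+1)\,p^{2R_n}\le e^{\,\eta_n n}$ for $n$ large. All stochastic statements below are made on a single event $\mathcal{A}_n$ with $\bar{P}(\mathcal{A}_n)\to1$ on which $S$ is close to $\Sigma_0$ and the relevant quadratic forms in the data are uniformly controlled; this event and the underlying concentration bounds are essentially those already produced in the proof of Theorem~\ref{thm1} from Assumptions~1, 2 and~5.

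I would split $\{\mathscr{D}\ne\mathscr{D}_0\}$ into the DAGs $\mathscr{D}\not\supseteq\mathscr{D}_0$ that omit at least one true edge and the strictly overfitting DAGs $\mathscr{D}\supsetneq\mathscr{D}_0$, and from the proof of Theorem~\ref{thm1} extract two \emph{uniform} per-DAG bounds, now keeping track of the number of differing edges. On $\mathcal{A}_n$, first, for every $\mathscr{D}\not\supseteq\mathscr{D}_0$ in the restricted class,
\[
\frac{\pi(\mathscr{D}|\bm{Y})}{\pi(\mathscr{D}_0|\bm{Y})}\ \le\ \exp\!\big\{-c_1\,n\,\epsilon_{0,n}s_n^2+c_2\,R_n\log p\big\},
\]
where the first term is the penalty for mis-specifying at least one nodewise regression by a coefficient of magnitude at least $s_n$ and the second dominates the total Bayes-factor-plus-prior gain from the at most $R_n$ spurious edges; second, for every $\mathscr{D}=\mathscr{D}_0\cup\mathcal{E}$ with $|\mathcal{E}|=\ell\ge1$,
\[
\frac{\pi(\mathscr{D}|\bm{Y})}{\pi(\mathscr{D}_0|\bm{Y})}\ \le\ \exp\!\big\{\ell\,\big(c_3\log(pn)-\eta_n n\big)\big\},
\]
where $-\eta_n n=\log\big(q_n/(1-q_n)\big)$ is the per-edge prior log-penalty and $c_3\log(pn)$ absorbs the $\Gamma$- and determinant ratios appearing in $z_{\mathscr{D}}(U+nS,\cdot)/z_{\mathscr{D}_0}(U+nS,\cdot)$ together with a uniform (over the at most $\binom{p}{2}$ candidate edges) bound on the random, $\chi^2$-type, per-edge Bayes-factor increment.

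For the non-superset block, multiplying the first bound by the cardinality bound $e^{\eta_n n}$ gives a contribution at most $\exp\{-c_1 n\epsilon_{0,n}s_n^2+O(\eta_n n)\}$; since Assumption~4 forces $\epsilon_{0,n}s_n^2\gg\eta_n d\ge\eta_n$ (using $d\ge1$), the exponent tends to $-\infty$ and the block vanishes in $\bar{P}$. For the overfitting block, grouping by $\ell=|\mathcal{E}|$ and using that there are at most $\binom{\binom{p}{2}}{\ell}\le p^{2\ell}$ DAGs for each $\ell$, the second bound gives a contribution at most $\sum_{\ell\ge1}\big(p^{2}e^{\,c_3\log(pn)-\eta_n n}\big)^{\ell}$; since $\eta_n n=\log p\cdot d\,(n/\log p)^{(1+k)/(2+k)}$ dominates $\log(pn)$ (by Assumptions~1--2, which in particular give $\log p/n\to0$ and $\log n=o(\eta_n n)$), the geometric ratio tends to $0$, and so does the sum. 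Combining the two blocks on $\mathcal{A}_n$ completes the argument.

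The step I expect to be the real obstacle is the uniform control underlying the $c_3\log(pn)$ term: one must bound \emph{simultaneously} over all $\binom{p}{2}$ potential extra edges the random increments of $\log z_{\mathscr{D}}(U+nS,\cdot)$, which reduces to a maximal inequality for a large correlated family of quadratic forms in the Gaussian data, together with a uniform lower bound on the smallest eigenvalues of the $O(d)$-dimensional principal submatrices of $U+nS$ entering those ratios. Assumption~2 (so that $d$-dimensional submatrices of $S$ concentrate around the corresponding submatrices of $\Sigma_0$) and Assumption~5 (eigenvalue control on $U_n$) are what make this feasible; most of the needed estimates are already available from the proof of Theorem~\ref{thm1}, and the new work is essentially to expose the dependence on the number of differing edges, so that the sum — and not merely the maximum — of the posterior ratios can be bounded.
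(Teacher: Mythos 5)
Your reduction of the problem to controlling $\sum_{\mathscr{D}\ne\mathscr{D}_0}\pi(\mathscr{D}|\bm{Y})/\pi(\mathscr{D}_0|\bm{Y})$ over the restricted class, and the bookkeeping identity $R_n\log p=\tfrac{1}{8}\eta_n n$, are exactly the paper's starting point. But the paper then finishes in one line: the restricted class has at most $\sum_{i=0}^{R_n}\binom{\binom{p}{2}}{i}\le p^{3R_n}=e^{\frac{3}{8}\eta_n n}$ members, and the \emph{uniform} per-DAG bound already established in the proof of Theorem~\ref{thm1} (Lemmas \ref{lm1}--\ref{lm5}, on the event $C_n^c$) gives $\max_{\mathscr{D}\ne\mathscr{D}_0}\pi(\mathscr{D}|\bm{Y})/\pi(\mathscr{D}_0|\bm{Y})\le 2e^{-\eta_n n/2}$, so count times maximum is of order $e^{\frac{3}{8}\eta_n n-\frac{1}{2}\eta_n n}\to 0$. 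Your cardinality bound $e^{\eta_n n}$ is correct but needlessly loose (note $p^{2R_n}=e^{\eta_n n/4}$, and the polynomial factor $R_n+1$ is negligible on this scale), and it is precisely this looseness that forces you into the two-block refinement with per-DAG bounds tracking the number of differing edges.

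That refinement is where the proposal becomes shaky. The non-superset block is salvageable from Lemmas \ref{lm4}--\ref{lm5} (though the signal penalty there scales as $e^{-c\,n\epsilon_{0,n}^2 s_n^2}$, with $\epsilon_{0,n}^2$ rather than $\epsilon_{0,n}$, which is why the paper invokes Assumptions 2--4 jointly at that point rather than Assumption 4 alone). But the overfitting bound $\exp\{\ell(c_3\log(pn)-\eta_n n)\}$ is not something the paper proves and is not a routine extraction: the paper's control of the data term for supersets, obtained from the max-norm concentration event $C_n^c$, is of the form $\exp\{O(d\sqrt{n\log p}/\epsilon_{0,n}^4)\}$ --- uniform over DAGs but not linear in $\ell\log(pn)$ --- and obtaining your sharper form would require exactly the maximal inequality over all $\binom{p}{2}$ candidate edges that you yourself flag as the obstacle. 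Since the simple count-times-maximum argument already closes the proof, the refinement should be dropped rather than completed.
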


 \begin{remark} \label{spikeslabregression}
 		In the context of linear regression, Narisetty and He \cite{Narisetty:He:2014} consider 
 		the following hierarchical Bayesian model. 
 		\begin{eqnarray*}
 			& & \bm Y \mid X, \bm{\beta}, \sigma^2 \sim N(X \bm{\beta}, \sigma^2 I)\\
 			& & \beta_i \mid \sigma^2, Z_i = 0 \sim N(0, \sigma^2 \tau_{0,n}^2)\\
 			& & \beta_i \mid \sigma^2, Z_i = 1 \sim N(0, \sigma^2 \tau_{1,n}^2)\\
 			& & P(Z_i = 1) = 1 - P(Z_i = 0) = q_n\\
 			& & \sigma^2 \sim \mbox{Inverse-Gamma}(\alpha_1, \alpha_2). 
 		\end{eqnarray*}
 	
 	\noindent
 	In particular, they put an independent spike and slab prior on each linear regression 
 			coefficient (conditional on the variance parameter $\sigma^2$), and an inverse Gamma prior on the variance. 
 			Also, each regression coefficient is present in the model with a probability $q_n$. In 
 			this setting, the authors in \cite{Narisetty:He:2014} establish strong selection 
 			consistency for the regression coefficients (assuming the prior is constrained to 
 			leave out unrealistically large models). There are similarities between the models and the consistency analysis 
			in \cite{Narisetty:He:2014} and this paper. Note that the off-diagonal entries in the $i^{th}$ 
 			column of $L$ are the linear regression coefficients corresponding to fitting the $i^{th}$ 
 			variable against all variables with label greater than $i$, and in our model (\ref{a3}) each coefficient is present 
			independently with a given probability $q_n$. Also, similar to \cite{Narisetty:He:2014}, in terms of proving posterior 
			consistency, we bound the ratio of posterior probabilities for a non-true model and the true model by a `prior term' 
			which is a power of $q_n/(1-q_n)$, and a `data term'. The consistency proof is then a careful exercise in 
			balancing these two terms against each other on a case-by-case basis. However, despite these similarities, there 
			are some fundamental differences in the two models and the corresponding analysis. Firstly, the 
 			DAG-Wishart prior does not in general correspond to assigning an independent spike 
 			and slab prior to each entry of $L$. The columns of $L$ are independent of each other 
 			under this prior, but it introduces correlations among the entries in each column of $L$. 
 			Also, the DAG-Wishart prior introduces exact sparsity in $L$, which is not the case in 
 			\cite{Narisetty:He:2014} as $\tau_{0,n}^2$ is assumed to be strictly positive.   
 			Hence it is structurally different than the prior in \cite{Narisetty:He:2014}. Secondly, 
 			the `design' matrices corresponding to the regression coefficients in each column 
 			of $L$ are random (they are functions of the sample covariance matrix $S$) and are 
 			correlated with each other. In particular, this leads to major differences and further 
 			challenges in analyzing the ratio of posterior graph probabilities (a crucial step in 
 			establishing consistency). 
 	
\end{remark}

\begin{remark} \label{edgerestrictassumption}
We would like to point out that posterior ratio consistency 
		(Theorems \ref{thm1} and \ref{thm2}) does not require any restriction on the maximum number of 
		edges, this requirement is only needed for strong selection consistency 
		(Theorem \ref{thm3}). Similar restrictions on the prior model size have been 
		considered for establishing consistency properties in other contexts. For 
		concentration graph models, Banerjee and Ghosal \cite{Banerjee:Ghosal:2015} use 
		a hierarchical prior where each edge of the concentration graph is independently 
		present with a given probability $q$. For establishing high-dimensional posterior 
		convergence rates, they restrict the prior to graphs with total number of edges 
		bounded by an appropriate fixed constant. A variation where the upper bound 
		on the number of edges is a random variable with sub-exponential tails is 
		also considered. For linear regression, Narisetty and He \cite{Narisetty:He:2014} 
		too restrict the prior model size to an appropriate function of $n$ (number of non-zero regression coefficients) for 
		establishing strong selection consistency (when the variance parameter is 
		random). 
\end{remark}

\section{Proof of Theorems \ref{thm1}, \ref{thm2} and \ref{thm3}} \label{sec:modelselectionproofs}

\noindent
The proof of Theorems \ref{thm1}, \ref{thm2} and \ref{thm3} will be broken up into various steps. 
We begin by presenting a useful lemma that provides an upper bound for the ratio of 
posterior DAG probabilities.
\begin{lemma} \label{newlemma1}
Under Assumption 5, for a large enough constant $M$ and large enough $n$, the ratio 
of posterior probabilities of any DAG $\mathscr{D}$ and the true DAG 
$\mathscr{D}_0$ satisfies:
	\begin{align*}
	\begin{split}
	&\frac{\pi({\mathscr{D}}|\bm{Y})}{\pi({\mathscr{D}}_0|\bm{Y})} \\
	\le& \prod_{i=1}^{p}M\left(\frac{\delta_2}{\delta_1}\right)^{\frac d 2} n^{2c}\left (\sqrt {\frac{\delta_2} n}\frac{q}{1-q}\right)^{\nu_i({\mathscr{D}}) - \nu_i({\mathscr{D}}_0)}  \frac{|\tilde{S}_{\mathscr{D}_0}^{\ge i}|^{\frac12}}{|\tilde{S}_{\mathscr{D}}^{\ge i}|^{\frac12}} \frac{\left(\tilde{S}_{i|pa_i({\mathscr{D}}_0)}\right)^{\frac{n+c_i(\mathscr D_0)-3}{2}}}{\left(\tilde{S}_{i|pa_i({\mathscr{D}})}\right)^{\frac{n+c_i(\mathscr D)-3}{2}}}\\
	\triangleq& \prod_{i=1}^{p}B_i(\mathscr{D},\mathscr{D}_0),
	\end{split}
	\end{align*}
	where $c_i(\mathscr D) = \alpha_i(\mathscr D) - \nu_i(\mathscr D), c_i(\mathscr D_0) = \alpha_i(\mathscr D_0) - \nu_i(\mathscr D_0)$, $\tilde{S} = S + \frac Un$, and 
$\tilde{S}_{i|pa_i({\mathscr{D}})} = \tilde{S}_{ii} - (\tilde{S}_{\mathscr D \cdot i}^>)^T 
(\tilde{S}_{\mathscr{D}}^{>i})^{-1} \tilde{S}_{\mathscr D \cdot i}^>$. 
\end{lemma}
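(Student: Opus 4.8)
The plan is to start from the exact closed-form expression for the posterior ratio in \eqref{m3}, which gives
$$
\frac{\pi(\mathscr{D}|\bm{Y})}{\pi(\mathscr{D}_0|\bm{Y})}
= \frac{\pi(\mathscr{D})}{\pi(\mathscr{D}_0)}\cdot
\frac{z_{\mathscr{D}}(U+nS,\, n+\bm\alpha(\mathscr{D}))/z_{\mathscr{D}}(U,\bm\alpha(\mathscr{D}))}
{z_{\mathscr{D}_0}(U+nS,\, n+\bm\alpha(\mathscr{D}_0))/z_{\mathscr{D}_0}(U,\bm\alpha(\mathscr{D}_0))},
$$
since the common factor $\pi(\bm Y)(\sqrt{2\pi})^n$ cancels. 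Both the prior factor and each $z$ factor are products over $i=1,\dots,p$, so the entire ratio factorizes as $\prod_{i=1}^p B_i(\mathscr{D},\mathscr{D}_0)$, and it suffices to bound the $i$-th factor. The prior contributes $q^{\nu_i(\mathscr{D})-\nu_i(\mathscr{D}_0)}(1-q)^{\nu_i(\mathscr{D}_0)-\nu_i(\mathscr{D})}$, which is exactly the $\left(\tfrac{q}{1-q}\right)^{\nu_i(\mathscr{D})-\nu_i(\mathscr{D}_0)}$ piece.

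Next I would plug the formula \eqref{a1} for $z_{\mathscr{D}}$ into the $i$-th factor. Writing $\tilde S = S + U/n$ so that $U+nS = n\tilde S$, and using $\det((n\tilde S)_{\mathscr D}^{\ge i}) = n^{\nu_i(\mathscr D)+1}\det(\tilde S_{\mathscr D}^{\ge i})$ and $\det((n\tilde S)_{\mathscr D}^{> i}) = n^{\nu_i(\mathscr D)}\det(\tilde S_{\mathscr D}^{> i})$, the determinant ratios collapse via the Schur complement identity $\det(\tilde S_{\mathscr D}^{\ge i})/\det(\tilde S_{\mathscr D}^{>i}) = \tilde S_{i|pa_i(\mathscr D)}$. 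Collecting the powers of $n$, the powers of $\sqrt\pi$, the $\Gamma$ ratios, and the powers of $2$ coming from the $n$-shifted shape parameters for $\mathscr{D}$ versus $\mathscr{D}_0$, one obtains an exact expression for $B_i$. The steps here are: (i) separate the $U$-dependent normalizers (the "prior normalizing constants" $z_{\mathscr D}(U,\bm\alpha(\mathscr D))$) from the data-dependent ones; (ii) simplify each determinant ratio to a single conditional-variance term $\tilde S_{i|pa_i(\cdot)}$; (iii) track the exponents carefully so the data term becomes $\big(\tilde S_{i|pa_i(\mathscr D_0)}\big)^{(n+c_i(\mathscr D_0)-3)/2}\big/\big(\tilde S_{i|pa_i(\mathscr D)}\big)^{(n+c_i(\mathscr D)-3)/2}$.

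The remaining work is to absorb everything that is \emph{not} already displayed in the statement into the uniform constant $M$ and the factors $(\delta_2/\delta_1)^{d/2}$, $n^{2c}$, and $(\delta_2/n)^{(\nu_i(\mathscr D)-\nu_i(\mathscr D_0))/2}$. This is where Assumption 5 enters: since $2 < c_i(\mathscr D) < c$ uniformly, the ratio of $\Gamma$ factors $\Gamma(\tfrac{c_i(\mathscr D_0)}{2}-1)/\Gamma(\tfrac{c_i(\mathscr D)}{2}-1)$ together with its $U$-normalizer counterpart is bounded above and below by absolute constants, contributing to $M$; the leftover powers of $2$ are likewise $O(1)$ per index but must be checked not to accumulate badly — they contribute at most $2^{(c_i(\mathscr D_0)-c_i(\mathscr D))/2}$, again $O(1)$. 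The factor $n^{2c}$ comes from bounding residual powers of $n$ of the form $n^{(c_i(\mathscr D_0)-c_i(\mathscr D))/2}$, which is at most $n^{c}$ in either direction, hence $n^{2c}$ suffices crudely. The $(\delta_2/\delta_1)^{d/2}$ and $\sqrt{\delta_2}$ factors arise from replacing the $U$-dependent determinant subratios $\det(U_{\mathscr D_0}^{>i})/\det(U_{\mathscr D}^{>i})$ and similar by their eigenvalue bounds $\delta_1 \le eig(U_n) \le \delta_2$; note that such a subdeterminant of a $\le d$-dimensional principal submatrix lies in $[\delta_1^{d},\delta_2^{d}]$, which is exactly why the $d/2$ exponent appears. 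I expect the main obstacle to be the bookkeeping in step (ii)–(iii): one has to be scrupulous that the Schur-complement reductions for $\mathscr D$ and $\mathscr D_0$ use the \emph{same} matrix $\tilde S$ (only the conditioning set $pa_i$ differs), and that the $n$-shift $\bm\alpha(\mathscr D)\mapsto n+\bm\alpha(\mathscr D)$ is applied consistently in both numerator and denominator so that the dominant $n^{n/2}$-type terms cancel and only the stated residual powers of $n$ survive. Once the exact $B_i$ is in hand, the stated inequality follows by the uniform $O(1)$ bounds above with a single large enough $M$, valid for all $n$ beyond some threshold.
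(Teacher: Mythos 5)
Your proposal follows essentially the same route as the paper's proof: starting from the closed-form posterior ratio in (\ref{m3}), factorizing over $i$, expressing the ratio $z_{\mathscr{D}}(U+nS,n+\bm{\alpha}(\mathscr D))/z_{\mathscr{D}}(U,\bm{\alpha}(\mathscr D))$ via (\ref{a1}), collapsing the determinant ratios through the Schur-complement identity $|\tilde{S}_{\mathscr{D}}^{\ge i}| = |\tilde{S}_{\mathscr{D}}^{>i}|\,\tilde{S}_{i|pa_i(\mathscr{D})}$, bounding the $U$-dependent subdeterminants by the eigenvalue bounds of Assumption 5, and absorbing the Gamma-function and residual $n$-power ratios into $M$ and $n^{2c}$ (the paper does this last step via the inequality $\sqrt{x+1/4}\le \Gamma(x+1)/\Gamma(x+1/2)\le\sqrt{x+1/2}$). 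The plan is correct and the bookkeeping you flag as the main obstacle is exactly where the paper's proof spends its effort.
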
 

\noindent
The proof of this lemma is provided in the Supplemental Document. Our goal is to find an upper bound (independent of 
$\mathscr{D}$ and $i$) for $B_i(\mathscr{D},\mathscr{D}_0)$, such that the upper bound converges to $0$ as 
$n \rightarrow \infty$. By Lemma \ref{newlemma1}, this will be enough to establish Theorem \ref{thm1}. Before we 
undertake this goal, we present a proposition that will be useful in further analysis. Note that for any positive definite 
matrix $A$, and $M \subseteq \{1,2 \ldots, p\} \setminus \{i\}$, we denote $A_{i \mid M} = A_{ii} - 
A_{iM} A_{MM}^{-1} A_{Mi}$. 
\begin{proposition} \label{f1}
	Given a DAG $\mathscr D$ with $p$ vertices,  
	\begin{enumerate}[label=(\alph*)]
		\item If $pa_{i}(\mathscr D) \supseteq pa_{i}({\mathscr{D}}_0)$, then 
		$(\Sigma_0)_{i|pa_{i}(\mathscr D)} = (D_0)_{ii} = (\Sigma_0)_{i|pa_{i}({\mathscr{D}}_0)}$ doesn't depend on $\mathscr D$.
		\item If $pa_{i}(\mathscr{D}) \subseteq pa_i (\mathscr D _0)$, then 
		$(\Sigma_0)_{i|pa_{i}(\mathscr{D})} - (\Sigma_0)_{i|pa_i (\mathscr D _0)} \ge \epsilon_{0,n}(\nu_i(\mathscr{D}_0) - \nu_i(\mathscr{D}))s^2,$ where $\epsilon_{0,n} >0$ and $s = \min_{j \in pa_i (\mathscr D _0)}|(L_0)_{ji}|$.
	\end{enumerate}
\end{proposition}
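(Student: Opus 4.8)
The plan is to recast everything in terms of conditional variances of the jointly Gaussian vector $\bm y \sim N_p(\bm 0,\Sigma_0)$, using the regression interpretation of the modified Cholesky parameter. Put $\epsilon_i := (L_0^T\bm y)_i = y_i + \sum_{j\in pa_i(\mathscr D_0)}(L_0)_{ji}y_j$; then $\mathrm{Cov}(L_0^T\bm y) = D_0$, so the $\epsilon_i$ are independent with $\epsilon_i\sim N(0,(D_0)_{ii})$, and since $\bm y_{\{i+1,\dots,p\}}$ is a linear function of $\epsilon_{i+1},\dots,\epsilon_p$ alone, $\epsilon_i$ is independent of $\bm y_{\{i+1,\dots,p\}}$. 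Hence $E[y_i\mid\bm y_{pa_i(\mathscr D_0)}] = -\sum_{j\in pa_i(\mathscr D_0)}(L_0)_{ji}y_j$ and $(D_0)_{ii} = \mathrm{Var}(y_i\mid\bm y_{\{i+1,\dots,p\}}) = \mathrm{Var}(y_i\mid\bm y_{pa_i(\mathscr D_0)}) = (\Sigma_0)_{i\mid pa_i(\mathscr D_0)}$, the middle equality being the directed Markov property of $\mathscr D_0$. Part (a) then drops out: parents are larger vertices, so $pa_i(\mathscr D_0)\subseteq pa_i(\mathscr D)\subseteq\{i+1,\dots,p\}$, and since enlarging the conditioning set cannot increase a Gaussian conditional variance, $(D_0)_{ii} = \mathrm{Var}(y_i\mid\bm y_{\{i+1,\dots,p\}})\le(\Sigma_0)_{i\mid pa_i(\mathscr D)}\le(\Sigma_0)_{i\mid pa_i(\mathscr D_0)} = (D_0)_{ii}$, forcing equality throughout, so $(\Sigma_0)_{i\mid pa_i(\mathscr D)}$ does not depend on $\mathscr D$.

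For part (b), write $T = pa_i(\mathscr D_0)$, $M = pa_i(\mathscr D)\subseteq T$, $R = T\setminus M$, so $|R| = \nu_i(\mathscr D_0)-\nu_i(\mathscr D)$. First I would apply the law of total variance, using that $\mathrm{Var}(y_i\mid\bm y_T) = (D_0)_{ii}$ is constant, to get $(\Sigma_0)_{i\mid M}-(\Sigma_0)_{i\mid T} = \mathrm{Var}\big(E[y_i\mid\bm y_T]\,\big|\,\bm y_M\big) = \mathrm{Var}\big(\sum_{j\in R}(L_0)_{ji}y_j\mid\bm y_M\big) = v^T(\Sigma_0)_{R\mid M}v$, where $v = ((L_0)_{ji})_{j\in R}$ and $(\Sigma_0)_{R\mid M} = (\Sigma_0)_{RR}-(\Sigma_0)_{RM}(\Sigma_0)_{MM}^{-1}(\Sigma_0)_{MR}$ is the conditional covariance of $\bm y_R$ given $\bm y_M$. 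The key step is the Loewner lower bound $(\Sigma_0)_{R\mid M}\succeq\epsilon_{0,n}I_{|R|}$: since $M$ is disjoint from $R$, enlarging the conditioning set to all of $V\setminus R$ only decreases the conditional covariance, so $(\Sigma_0)_{R\mid M}\succeq(\Sigma_0)_{R\mid V\setminus R} = ((\Omega_0)_{RR})^{-1}$ by the Schur-complement identity; the largest eigenvalue of the principal submatrix $(\Omega_0)_{RR}$ is at most $eig_p(\Omega_0)\le\epsilon_{0,n}^{-1}$ by Assumption 1, so $((\Omega_0)_{RR})^{-1}\succeq\epsilon_{0,n}I_{|R|}$. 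Consequently $v^T(\Sigma_0)_{R\mid M}v\ge\epsilon_{0,n}\|v\|^2 = \epsilon_{0,n}\sum_{j\in R}(L_0)_{ji}^2\ge\epsilon_{0,n}|R|s^2$, since $R\subseteq pa_i(\mathscr D_0)$ and $s = \min_{j\in pa_i(\mathscr D_0)}|(L_0)_{ji}|$; this is exactly the asserted inequality.

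I expect the Loewner lower bound on $(\Sigma_0)_{R\mid M}$ to be the only real obstacle: it must hold uniformly over the model $\mathscr D$ (hence over $M$), and the way around this is to pass to the largest admissible conditioning set $V\setminus R$, identify the resulting conditional covariance with the inverse of a principal submatrix of $\Omega_0$, and use the eigenvalue control in Assumption 1. Everything else is bookkeeping with Gaussian conditioning identities; I would nonetheless state explicitly the two auxiliary facts that carry the argument — monotonicity of conditional covariance in the conditioning set (from the law of total covariance, the cross term being positive semidefinite) and the identity $(\Sigma_0)_{R\mid V\setminus R} = ((\Omega_0)_{RR})^{-1}$ — before invoking them.
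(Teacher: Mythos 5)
Your proof is correct and follows essentially the same route as the paper's: part (a) via the independence of $Z_i=(L_0^T\bm y)_i$ from $\bm y_{\{i+1,\dots,p\}}$, and part (b) via the decomposition $(\Sigma_0)_{i\mid M}=(D_0)_{ii}+v^T(\Sigma_0)_{R\mid M}v$ together with the Loewner bound $(\Sigma_0)_{R\mid M}\succeq\epsilon_{0,n}I$ from Assumption 1. The only difference is presentational — you reach the decomposition by the law of total variance and justify the eigenvalue bound explicitly through $((\Omega_0)_{RR})^{-1}$, where the paper computes $Var(Z_i-\sum_j(L_0)_{ji}Y_j\mid\bm Y_M)$ directly and cites Assumption 1 without elaboration.
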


\noindent
The proof of this proposition is provided in the Supplemental Document. Next, we show that in our setting, the sample and 
population covariance matrices are sufficiently close with high probability. It follows by Assumptions 1,2,5, Lemma A.3 
of \cite{Bickel:Levina:2008} and Hanson-Wright inequality from \cite{rudelson2013} that there exists constants $m_1,m_2$ and $\delta$ 
depending on $\epsilon_{0,n}$ only such that for $1 \le i,j \le p$, we have:
\begin{equation*}
\bar{P}(| S_{ij} - (\Sigma_0)_{ij} | \ge t) \le m_1 \exp\{-m_2n(t\epsilon_{0,n})^2\}, \, |t| \le \delta.
\end{equation*}

\noindent
By the union-sum inequality, for a large enough $c'$ such that $2-m_2(c')^2/4 < 0$, we 
get that 
\begin{equation} \label{smplbound1}
\bar{P}\left(\Vert \tilde{S}-\Sigma_0 \Vert_{\max} \ge c' \sqrt{\frac{\log p}{n\epsilon_{0,n}^2}}\right) 
\leq m_1p^{2-m_2 (c')^2/4} \rightarrow 0. 
\end{equation}

\noindent
Define the event $C_n$ as 
\begin{equation} \label{smplbound2}
C_n = \left\{\Vert \tilde{S}-\Sigma_0 \Vert_{\max} \ge c' \sqrt{\frac{\log p}{n\epsilon_{0,n}^2}}\right\}. 
\end{equation}

\noindent
It follows from (\ref{smplbound1}) and (\ref{smplbound2}) that $\bar{P}(C_n) 
\rightarrow 0$ as $n \rightarrow \infty$. 

We now analyze the behavior of $B_i(\mathscr{D},\mathscr{D}_0)$ under different 
scenarios in a sequence of five lemmas (Lemmas \ref{lm1} - \ref{lm5}). Recall that our 
goal is to find an upper bound (independent of $\mathscr{D}$ and $i$) for 
$B_i(\mathscr{D},\mathscr{D}_0)$, such that the upper bound converges to $0$ as $n 
\rightarrow \infty$. {\bf For all these lemmas, we will restrict ourselves to the event 
$C_n^c$}. 
\begin{lemma} \label{lm1}
If $pa_i({\mathscr{D}}) \supset pa_i({\mathscr{D}}_0)$ and $\nu_i({\mathscr{D}}) 
\le 3 \nu_i({\mathscr{D}}_0) + 2$, then there exists $N_1$ (not depending on $i$ or $
\mathscr{D}$) such that for $n \geq N_1$ we have $B_i({\mathscr{D}},{\mathscr{D}}_0) 
\le \epsilon_{1,n}$, where $\epsilon_{1,n}=2e^{-\frac{\eta_n}{2}n}$.
\end{lemma}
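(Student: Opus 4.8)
The plan is to bound $B_i(\mathscr D,\mathscr D_0)$ from Lemma~\ref{newlemma1} by $\epsilon_{1,n}=2e^{-\eta_n n/2}$, uniformly over $\mathscr D$ and $i$, on the good event $C_n^c$. Put $V=pa_i(\mathscr D_0)$, $W=pa_i(\mathscr D)\setminus pa_i(\mathscr D_0)$, and $k_i=\nu_i(\mathscr D)-\nu_i(\mathscr D_0)=|W|$. Since $pa_i(\mathscr D)\supsetneq pa_i(\mathscr D_0)$ we have $k_i\ge 1$, and the standing hypothesis $\nu_i(\mathscr D)\le 3\nu_i(\mathscr D_0)+2$ forces $1\le k_i\le 2\nu_i(\mathscr D_0)+2\le 2d$ and $|\{i\}\cup pa_i(\mathscr D)|\le 3d$. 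This uniform bound on the sizes of the relevant index sets is what lets all the perturbation estimates below be taken independently of $\mathscr D$ and $i$, so that $N_1$ depends only on the constants and sequences in Assumptions~1--5. I estimate the four factors of $B_i$ in turn: the prior factor $\big(\sqrt{\delta_2/n}\,q/(1-q)\big)^{k_i}$; the polynomial factor $M(\delta_2/\delta_1)^{d/2}n^{2c}$; the determinant ratio $|\tilde S_{\mathscr D_0}^{\ge i}|^{1/2}/|\tilde S_{\mathscr D}^{\ge i}|^{1/2}$; and the residual ratio $a^{(n+c_i(\mathscr D_0)-3)/2}/b^{(n+c_i(\mathscr D)-3)/2}$, where $a=\tilde S_{i\mid V}$ and $b=\tilde S_{i\mid pa_i(\mathscr D)}$; note $0<b\le a$ because $V\subseteq pa_i(\mathscr D)$.

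By Assumption~3, $q=e^{-\eta_n n}\to 0$, so for $n$ large $\sqrt{\delta_2/n}\,q/(1-q)<1$, and since $k_i\ge 1$ the prior factor is at most $\sqrt{\delta_2/n}\,q/(1-q)\le\sqrt{\delta_2}\,e^{-\eta_n n}$. Hence it suffices to show that, on $C_n^c$ and uniformly in $\mathscr D,i$, the product of the other three factors is $e^{o(\eta_n n)}$; together with the $e^{-\eta_n n}$ this gives $B_i\le M\sqrt{\delta_2}\,e^{-\eta_n n+o(\eta_n n)}\le 2e^{-\eta_n n/2}$ for $n$ large. The polynomial factor equals $\exp\{\tfrac d2\log(\delta_2/\delta_1)+2c\log n\}$, which is $e^{o(\eta_n n)}$ because $\eta_n n=dn(\log p/n)^{1/(2+k)}$ grows faster than both $d$ and $\log n$. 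For the determinant ratio, on $C_n^c$ the deviation bound (\ref{smplbound1}), eigenvalue interlacing, and Assumptions~1--2 imply that for $n$ large every principal submatrix $\tilde S_M$ with $|M|\le 3d$ has all eigenvalues in $[\epsilon_{0,n}/2,\,2\epsilon_{0,n}^{-1}]$. Since $\{i\}\cup V\subseteq\{i\}\cup pa_i(\mathscr D)$, the ratio $|\tilde S_{\mathscr D_0}^{\ge i}|/|\tilde S_{\mathscr D}^{\ge i}|$ is the reciprocal of the determinant of a $k_i$-dimensional Schur complement of $\tilde S_{\mathscr D}^{\ge i}$, so the determinant ratio is at most $(\epsilon_{0,n}/2)^{-k_i/2}=\exp\{O(d\log(1/\epsilon_{0,n}))\}$, which is $e^{o(\eta_n n)}$ since Assumption~1 forces $\log(1/\epsilon_{0,n})=O(\log n)$ and $\eta_n n\gg d\log n$.

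The main work is the residual ratio, which I write as $(a/b)^{(n-3)/2}\cdot a^{c_i(\mathscr D_0)/2}b^{-c_i(\mathscr D)/2}$. By Assumption~5, $c_i(\mathscr D_0),c_i(\mathscr D)\in(2,c)$, and together with the submatrix eigenvalue bounds this makes $a^{c_i(\mathscr D_0)/2}b^{-c_i(\mathscr D)/2}\le(2/\epsilon_{0,n})^{c}=e^{o(\eta_n n)}$. For $(a/b)^{(n-3)/2}$ the crucial input is Proposition~\ref{f1}(a): since $pa_i(\mathscr D)\supseteq V$, the population conditional variance is unchanged, $(\Sigma_0)_{i\mid pa_i(\mathscr D)}=(\Sigma_0)_{i\mid V}=(D_0)_{ii}$, and the extra parents satisfy $y_i\perp\bm y_W\mid\bm y_V$ under the true model, so the population partial covariance $(\Sigma_0)_{iW\cdot V}$ vanishes. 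Writing the overfitting reduction as $a-b=\tilde S_{iW\cdot V}\,(\tilde S_{WW\cdot V})^{-1}\,\tilde S_{Wi\cdot V}\ge 0$ and propagating (\ref{smplbound1}) through the Schur-complement formulas --- using only index sets of size $O(d)$ --- bounds $\|\tilde S_{iW\cdot V}\|_2$ by a fixed power of $\epsilon_{0,n}^{-1}$ times $d\sqrt{\log p/n}$, hence $0\le a-b\le(\text{power of }\epsilon_{0,n}^{-1})\cdot d^{2}\,\tfrac{\log p}{n}$; since $b\ge(D_0)_{ii}-o(1)\ge\epsilon_{0,n}/2$, this gives $a/b\le 1+(\text{power of }\epsilon_{0,n}^{-1})\,d^{2}\tfrac{\log p}{n}$ and so $(a/b)^{(n-3)/2}\le\exp\{(\text{power of }\epsilon_{0,n}^{-1})\,d^{2}\log p\}$. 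Assumptions~1 and 2 --- with the fourth power of $\epsilon_{0,n}$ and the exponent $\tfrac12-\tfrac1{2+k}$ --- are calibrated precisely so that this exponent is $o(\eta_n n)$. Combining the four estimates gives $B_i(\mathscr D,\mathscr D_0)\le\epsilon_{1,n}$ on $C_n^c$ for all $n\ge N_1$, uniformly in $\mathscr D,i$, which is the claim.

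I expect the residual-ratio step to be the main obstacle: one must carry out the perturbation analysis of the sample conditional variances $\tilde S_{i\mid V}$ and $\tilde S_{i\mid pa_i(\mathscr D)}$ --- equivalently of the sample partial covariance $\tilde S_{iW\cdot V}$, whose population counterpart is $0$ --- sharply enough in both $d$ and $\epsilon_{0,n}^{-1}$ that the resulting $\exp\{(\text{power of }\epsilon_{0,n}^{-1})\,d^2\log p\}$ is dominated by the prior factor's $e^{-\eta_n n}$. Controlling the power of $\epsilon_{0,n}^{-1}$ is exactly why Assumption~1 is stated in terms of $(\log p/n)^{1/2-1/(2+k)}/\epsilon_{0,n}^4$ rather than merely $\sqrt{\log p/n}/\epsilon_{0,n}^4\to0$.
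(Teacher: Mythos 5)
Your overall strategy is the same as the paper's: start from Lemma \ref{newlemma1}, work on $C_n^c$, let the prior factor contribute $e^{-\eta_n n}$, and show the remaining factors are $e^{o(\eta_n n)}$ uniformly in $i$ and $\mathscr D$ using Proposition \ref{f1}(a) and the concentration bound (\ref{smplbound1}). Your handling of the prior, polynomial, and determinant factors is correct; in fact your bound $(2/\epsilon_{0,n})^{k_i/2}$ on the determinant ratio (via eigenvalue bounds on principal submatrices of size $O(d)$) is cleaner than the paper's, which uses the crude bound $\|R_{\tilde S^{\ge i}_{\mathscr D_0}}^{-1}\|_{(2,2)}\le n/\delta_1$ from $\tilde S\ge U/n$ and then cancels the resulting $n^{k_i/2}$ against the $n^{-k_i/2}$ sitting inside the prior factor.

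The gap is in the residual-ratio step, and it is exactly where you predicted trouble. You bound $a-b=\tilde S_{iW\cdot V}(\tilde S_{WW\cdot V})^{-1}\tilde S_{Wi\cdot V}$ using the vanishing of the population partial covariance, obtaining $a/b\le 1+(\text{power of }\epsilon_{0,n}^{-1})\,d^2\tfrac{\log p}{n}$, and then assert that Assumptions 1--2 are ``calibrated precisely'' so that the resulting exponent is $o(\eta_n n)$. They are not: Assumption 1 is calibrated for the \emph{fourth} power of $\epsilon_{0,n}^{-1}$, and your route accumulates a substantially higher power. Propagating (\ref{smplbound1}) through $\tilde S_{iW\cdot V}=\tilde S_{iW}-\tilde S_{iV}\tilde S_{VV}^{-1}\tilde S_{VW}$ gives $\|\tilde S_{iW\cdot V}\|_2=O(d\sqrt{\log p/n}\,\epsilon_{0,n}^{-4})$ (the term $(\Sigma_0)_{iV}[\tilde S_{VV}^{-1}-(\Sigma_0)_{VV}^{-1}]\tilde S_{VW}$ dominates); squaring and dividing by $\|(\tilde S_{WW\cdot V})^{-1}\|_{(2,2)}^{-1}$ and by $b\ge\epsilon_{0,n}/2$ yields an exponent of order $d^2(\log p)\,\epsilon_{0,n}^{-10}$. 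One checks that $d^2(\log p)\,\epsilon_{0,n}^{-m}=o(\eta_n n)$ follows from Assumptions 1--2 only when $4>(m-8)k$, so for $m\approx 10$ the argument fails for admissible sequences once $k\ge 2$ (e.g.\ $k=4$, $\epsilon_{0,n}=(\log p/n)^{1/13}$, $d=(n/\log p)^{1/13}$ satisfy Assumptions 1--2 but make your exponent of strictly larger order than $\eta_n n$). The quadratic gain of $d\sqrt{\log p/n}$ in the numerator does not compensate for the extra powers of $\epsilon_{0,n}^{-1}$ in general. The paper avoids this by staying first order: it bounds $\|(\tilde S^{\ge i}_{\mathscr D})^{-1}-((\Sigma_0)^{\ge i}_{\mathscr D})^{-1}\|_{(2,2)}$ and $\|(\tilde S^{\ge i}_{\mathscr D_0})^{-1}-((\Sigma_0)^{\ge i}_{\mathscr D_0})^{-1}\|_{(2,2)}$ each by $O(d\sqrt{\log p/n}\,\epsilon_{0,n}^{-3})$, uses Proposition \ref{f1}(a) to identify the two population values $[((\Sigma_0)^{\ge i}_{\mathscr D})^{-1}]_{ii}=[((\Sigma_0)^{\ge i}_{\mathscr D_0})^{-1}]_{ii}=1/(D_0)_{ii}$, and concludes $a/b\le 1+O(d\sqrt{\log p/n}\,\epsilon_{0,n}^{-4})$, whence the exponent $O(d\sqrt{n\log p}\,\epsilon_{0,n}^{-4})=o(\eta_n n)$ follows directly from Assumption 1. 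You should replace your second-order bound on $a-b$ with this first-order comparison of $1/a$ and $1/b$ around their common population value.
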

\begin{proof}
Since $pa_i (\mathscr D _0) \subset pa_{i}(\mathscr{D})$, we can write 
$|\tilde{S}_{{\mathscr{D}} }^{\ge i}| = |\tilde{S}_{\mathscr{D}_0}^{\ge i}| 
|R_{{\tilde{S}}_{\mathscr{D}_0}^{\ge i}}|$. Here 
$R_{{\tilde{S}}_{\mathscr{D}_0}^{\ge i}}$ is the Schur complement of 
${\tilde{S}}_{\mathscr{D}_0}^{\ge i}$, defined by 
$$
R_{{\tilde{S}}_{\mathscr{D}_0}^{\ge i}} = D - B^T ({\tilde{S}}_{\mathscr{D}_0}^{\ge 
i})^{-1} B 
$$ 

\noindent
for appropriate sub matrices $B$ and $D$ of $\tilde{S}_{{\mathscr{D}} }^{\ge i}$. Since 
${\tilde{S}}_{{\mathscr{D}}}^{\ge i} \geq 
\left(\frac U n\right)_{{\mathscr{D}}}^{\ge i}$~\footnote{For matrices $A$ and $B$, we 
say $A \ge B$ if $A - B$ is positive semi-definite}, and 
$R_{{\tilde{S}}_{\mathscr{D}_0}^{\ge i}}^{-1}$ is a principal submatrix of 
$\left( \tilde{S}_{{\mathscr{D}} }^{\ge i} \right)^{-1}$, it follows from Assumption 5 that 
the largest eigenvalue of $R_{{\tilde{S}}_{\mathscr{D}_0}^{\ge i}}^{-1}$ is bounded 
above by $\frac {n}{\delta_1}$. Therefore,
\begin{equation} \label{new7}
\left(\frac{|\tilde{S}_{{\mathscr{D}_0}}^{\ge i}|}{|\tilde{S}_{\mathscr{D}}^{\ge i}|} \right)^{\frac12} = {|R_{{\tilde{S}}_{\mathscr{D}_0}^{\ge i}}^{-1}|^{1/2}} 
	\le \left(\sqrt{\frac {n} {\delta_1}}\right)^{\nu_i({\mathscr{D}}) - \nu_i({\mathscr{D}_0})}.
	\end{equation}

\noindent
Since we are restricting ourselves to the event $C_n^c$, it follows by 
(\ref{smplbound1}) that 
\begin{equation*}
|| \tilde{S}_{\mathscr{D}_0}^{\ge i} - (\Sigma_0)_{\mathscr{D}_0}^{\ge i} ||_{(2,2)} \le 
(\nu_i(\mathscr{D}_0 )+1)c^\prime \sqrt{\frac{\log p}{n\epsilon_{0,n}^2}}.
\end{equation*}
	Therefore,
	\begin{align} \label{pp1}
	\begin{split}
	&|| (\tilde{S}_{\mathscr{D}_0}^{\ge i})^{-1} - ((\Sigma_0)_{\mathscr{D}_0}^{\ge i})^{-1} ||_{(2,2)} \\ 
	=& || (\tilde{S}_{\mathscr{D}_0}^{\ge i})^{-1}||_{(2,2)} || (\tilde{S}_{\mathscr{D}_0}^{\ge i})^{-1} - ((\Sigma_0)_{\mathscr{D}_0}^{\ge i})^{-1} ||_{(2,2)} || 
	((\Sigma_0)_{\mathscr{D}_0}^{\ge i})^{-1} ||_{(2,2)} \\
	\le &	(|| (\tilde{S}_{\mathscr{D}_0}^{\ge i})^{-1} - ((\Sigma_0)_{\mathscr{D}_0}^{\ge i})^{-1} ||_{(2,2)} + \frac 1 {\epsilon_{0,n}})(\nu_i(\mathscr{D}_0)+1)c^\prime \sqrt{\frac{\log p}{n\epsilon_{0,n}^2}}\frac 1 {\epsilon_{0,n}}, 
	\end{split}
	\end{align}

\noindent
By Assumptions 1,2 and $d > 0$, we have 
\begin{equation} \label{assumptioneigenvalue}
\frac{d\sqrt{\frac{\log p_n}{n}}}{\epsilon_{0,n}^4} \rightarrow 0, \mbox{ as } n \rightarrow \infty.
\end{equation}
Hence, there exists $N_1'$ such that for $n \geq N_1'$, 
$$
\frac {c^\prime} {\epsilon_{0,n}^2}(d +1)\sqrt{\frac{\log p }n} < \frac12, \mbox{ and }
2\frac {c^\prime} {\epsilon_{0,n}^3}(d + 1)\sqrt{\frac{\log p }{n}} < \frac 1 {\epsilon_{0,n}}. 
$$

\noindent
Since $\nu_i (\mathscr{D}_0) \leq d$, it follows by (\ref{pp1}) and Assumption 1 that 
$$
||(\tilde{S}_{\mathscr{D}_0}^{\ge i})^{-1}||_{(2,2)} \leq \frac{2}{\epsilon_{0,n}}, 
$$

\noindent
and 
	\begin{align} \label{nnew9}
	\frac 1 {\tilde{S}_{i|pa_i({\mathscr{D}_0})}} = \left[(\tilde{S}_{\mathscr{D}_0}^{\ge i})^{-1}\right]_{ii}
	\le \frac{2}{\epsilon_{0,n}}.
	\end{align}

\noindent
for $n \geq N_1'$. Since, $pa_i (\mathscr D _0) \subset pa_{i}(\mathscr{D})$, we get
$$
\tilde{S}_{i|pa_i({\mathscr{D}_0})}  \ge \tilde{S}_{i|pa_i({\mathscr{D}})}. 
$$

\noindent
Let $N_1''$ be such that for $n \geq N_1''$, $q \leq \frac{\sqrt{\delta_1}}
{2\sqrt{\delta_2}} \leq \frac{1}{2}$. Using $2 < c_i(\mathscr D), c_i(\mathscr D_0) < c$, 
(\ref{new7}), (\ref{nnew9}), and Lemma \ref{newlemma1}, we get 
	\begin{align} \label{new8}
	\begin{split}
	B_i(\mathscr{D} ,\mathscr{D}_0) 
	\le& M\left(\frac{\delta_2}{\delta_1}\right)^{\frac d 2} n^{2c}\left(\sqrt {\frac{\delta_2} {\delta_1}}\frac {q}{1-q}\right)^{\nu_i({\mathscr{D}})-\nu_i({\mathscr{D}}_0)} \left(\frac 2 {\epsilon_{0,n}}\right)^c \\
	&\times \left (\frac{\tilde{S}_{i|pa_i({\mathscr{D}_0} )}}{\tilde{S}_{i|pa_i({\mathscr{D}})}} \right )^{\frac{n+c-3}2} \\
	\le& M \left(\frac 2 {\epsilon_{0,n}}\right)^c \left(\frac{\delta_2}{\delta_1}\right)^{\frac d 2} n^{2c} \left(\sqrt {\frac{\delta_2} {\delta_1}} 2q\right)^{\nu_i({\mathscr{D}})-\nu_i({\mathscr{D}}_0)}
	\left (\frac {\frac 1 {\tilde{S}_{i|pa_i({\mathscr{D}})}}} {\frac 1 {\tilde{S}_{i|pa_i({\mathscr{D}_0} )}}} \right) ^{\frac{n+c-3}2}, 
	\end{split}
	\end{align}

\noindent
for $n \geq \max(N_1', N_1'')$. We would like to note that the arguments leading up 
to (\ref{new8}) only require the assumption $pa_i ({\mathscr{D}_0}) \subset pa_{i}
(\mathscr{D})$. This observation enables us to use (\ref{new8}) in the proof of 
Lemma \ref{lm2} and Lemma \ref{lm3}. 
	
By following exactly the same sequence of arguments leading up to (\ref{pp1}), 
and replacing $\mathscr{D}$ by $\mathscr{D}_0$, we get 
\begin{align} \label{pp1general}
	&|| (\tilde{S}_{\mathscr{D}}^{\ge i})^{-1} - ((\Sigma_0)_{\mathscr{D}}^{\ge i})^{-1} ||_{(2,2)} \\ 
	\le &	(|| (\tilde{S}_{\mathscr{D}}^{\ge i})^{-1} - ((\Sigma_0)_{\mathscr{D}}^{\ge i})^{-1} ||_{(2,2)} + \frac 1 {\epsilon_{0,n}})(\nu_i(\mathscr{D} )+1)c^\prime \sqrt{\frac{\log p}{n\epsilon_{0,n}^2}}\frac 1 {\epsilon_{0,n}}, 
	\end{align}

\noindent
By (\ref{assumptioneigenvalue}), there exists $N_1'''$ such that for $n \geq N_1'''$, 
\begin{equation} \label{new6}
\frac {c^\prime} {\epsilon_{0,n}^2}(3d +3)\sqrt{\frac{\log p }n} < \frac12, \mbox{ and }
2\frac {c^\prime} {\epsilon_{0,n}^3}(3d + 3)\sqrt{\frac{\log p }n} < \frac{\epsilon_{0,n}}{2}. 
\end{equation}

\noindent
Note that by hypothesis $\nu_i(\mathscr D) + 1 \leq 3\nu_i(\mathscr D_0) + 3 \leq 3d + 
3$. It follows from (\ref{pp1general}) that 
\begin{equation} \label{pp7}
|| (\tilde{S}_{\mathscr{D}}^{\ge i})^{-1} - ((\Sigma_0)_{\mathscr{D}}^{\ge i})^{-1} ||_{(2,2)} \le 
2\frac {c^\prime} {\epsilon_{0,n}^3}(3d +3)\sqrt{\frac{\log p }n} .
\end{equation}

\noindent
for $n \geq N_1'''$. Using $\nu_i (\mathscr{D}) - \nu_i (\mathscr{D}_0) \geq 1$, 
(\ref{new8}), (\ref{pp7}), Proposition \ref{f1} (a) and the definition of $q_n$, it follows 
that for $n \geq \max(N_1', N_1'', N_1''')$, 
\begin{align*}
&B_i(\mathscr{D} ,\mathscr{D}_0 ) \\
	\le& 2 \tilde{M}\frac 1 {\epsilon_{0,n}^c}\left(\frac{\delta_2}{\delta_1}\right)^{\frac d 2} n^{2c} q\left (\frac{|| ((\Sigma_0)_{\mathscr{D}_0}^{\ge i})^{-1} ||_{(2,2)} + 2 \frac {c^\prime} {\epsilon_{0,n}^3}(3d +3)\sqrt{\frac{\log p }n}}{||(\Sigma_0)_{\mathscr{D}_0}^{\ge i})^{-1}||_{(2,2)} - 2\frac {c^\prime} {\epsilon_{0,n}^3}(3d +3)\sqrt{\frac{\log p }n}} \right)^{\frac{n-c+3}2}\\
	\le& 2\exp\left\{-d \left(\frac{\log p}{n}\right)^{\frac{1/2}{1+k/2}} n + d \log\left(\frac{\delta_2}{\delta_1}\right) + 2c\log n + \frac c 4 \log\left(\frac 1 {\epsilon_{0,n}^4}\right) +\log \tilde{M}\right\} \\
	&\times \left (1+\frac{2\frac {c^\prime} {\epsilon_{0,n}^3}(3d +3)\sqrt{\frac{\log p }n}}{|| ((\Sigma_0)_{\mathscr{D}_0}^{\ge i})^{-1} ||_{(2,2)} - \frac {\epsilon_{0,n}}2} \right)^n \\
	\le& 2\exp\left\{-d \left(\frac{\log p}{n}\right)^{\frac{1/2}{1+k/2}} n + d \log\left(\frac{\delta_2}{\delta_1}\right) + 2c\log n + \frac c 4 \log\left(\frac 1 {\epsilon_{0,n}^4}\right) + \log \tilde{M} \right\}\\
	&\times \exp \left\{\frac {12c^\prime} {\epsilon_{0,n}^4}(d +1)\sqrt{n\log p }\right\},
\end{align*}

\noindent
where $\tilde{M} = M2^c\sqrt{\frac{\delta_2}{\delta_1}}$. Since $\eta_n = d \left(\frac{\log p}{n}\right)^{\frac{1/2}
{1+k/2}}$ has a strictly larger order than $\frac{d}{n}$, $\frac{\log n}{n}$,$\frac{\log\left(\frac 1 {\epsilon_{0,n}^4}\right)}{n}$  and $d\frac{\sqrt{\frac{\log p}{n}}}{\epsilon_{0,n}^4}$~\footnote{We say $a_n$ is of a larger order than $b_n$ if 
$\frac{b_n}{a_n} \rightarrow 0$ as $n \rightarrow \infty$} by Assumptions 1 and 2, it follows that there exists 
$N_1''''$ such that for $n \geq N_1''''$, the expression in the exponent is dominated by 
$-\frac{\eta_n}{2}$. It follows that 
$$
B_i(\mathscr{D} ,\mathscr{D}_0 ) \leq 2e^{-\frac{\eta_n}{2} n} 
$$

\noindent
for $n \geq N_1 \stackrel{\Delta}{=} \max(N_1', N_1'', N_1''', N_1'''')$. 
\end{proof}

\begin{lemma} \label{lm2}
Assume $pa_i({\mathscr{D}}) \supset pa_i({\mathscr{D}}_0), \nu_i({\mathscr{D}}) > 3 
\nu_i({\mathscr{D}}_0) + 2$ and $\frac 1 {\epsilon_{0,n}^2}(\nu_i({\mathscr{D}})+1) \sqrt{\frac{\log p}{n}} \le 
\frac{1}{2{c^\prime}},$ then there exists $N_2$ (not depending on $i$ or $
\mathscr{D}$), such that for $n \geq N_2$, $B_i({\mathscr{D}} ,{\mathscr{D}}_0) \le 
\epsilon_{2,n},$ where $\epsilon_{2,n} = e^{-\eta_n n}.$
\end{lemma}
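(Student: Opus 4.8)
The plan is to start from the intermediate bound (\ref{new8}). As the authors remark immediately after it, (\ref{new8}) was derived using \emph{only} the inclusion $pa_i(\mathscr{D}_0) \subset pa_i(\mathscr{D})$, which is exactly the hypothesis here, so it remains valid. Thus it suffices to control the two non-constant ingredients of (\ref{new8}): the prior factor $\left(\sqrt{\delta_2/\delta_1}\,2q\right)^{\nu_i(\mathscr{D})-\nu_i(\mathscr{D}_0)}$ and the data factor $\left(\frac{1/\tilde{S}_{i|pa_i(\mathscr{D})}}{1/\tilde{S}_{i|pa_i(\mathscr{D}_0)}}\right)^{(n+c-3)/2}$. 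The new feature relative to Lemma~\ref{lm1} is that $\nu_i(\mathscr{D})>3\nu_i(\mathscr{D}_0)+2$, so the prior factor now supplies a penalty whose exponent $m:=\nu_i(\mathscr{D})-\nu_i(\mathscr{D}_0)\ge 3$ is large; the whole task is to show the data factor cannot undo it.

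First I would convert the Lemma's hypothesis into an operator-norm statement. On $C_n^c$, using $\|\tilde{S}_{\mathscr{D}}^{\ge i}-(\Sigma_0)_{\mathscr{D}}^{\ge i}\|_{(2,2)}\le(\nu_i(\mathscr{D})+1)\|\tilde{S}-\Sigma_0\|_{\max}$ together with (\ref{smplbound1}) and the assumed bound $\epsilon_{0,n}^{-2}(\nu_i(\mathscr{D})+1)\sqrt{\log p/n}\le(2c')^{-1}$, one gets $\|\tilde{S}_{\mathscr{D}}^{\ge i}-(\Sigma_0)_{\mathscr{D}}^{\ge i}\|_{(2,2)}\le\epsilon_{0,n}/2$. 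By Cauchy interlacing the eigenvalues of $(\Sigma_0)_{\mathscr{D}}^{\ge i}$ lie in $[\epsilon_{0,n},\epsilon_{0,n}^{-1}]$ (Assumption~1), so $\tilde{S}_{\mathscr{D}}^{\ge i}$ has smallest eigenvalue at least $\epsilon_{0,n}/2$ and $\|(\tilde{S}_{\mathscr{D}}^{\ge i})^{-1}\|_{(2,2)}\le 2/\epsilon_{0,n}$. Feeding this into the resolvent identity exactly as in (\ref{pp1general}), but bounding $\nu_i(\mathscr{D})+1$ by the hypothesis rather than by $3d+3$, yields $\|(\tilde{S}_{\mathscr{D}}^{\ge i})^{-1}-((\Sigma_0)_{\mathscr{D}}^{\ge i})^{-1}\|_{(2,2)}\le 2c'(\nu_i(\mathscr{D})+1)\epsilon_{0,n}^{-3}\sqrt{\log p/n}$, with the analogous bound (involving $\nu_i(\mathscr{D}_0)+1\le d+1$) for $\mathscr{D}_0$.

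The crux is the data factor. Here I would invoke Proposition~\ref{f1}(a): since $pa_i(\mathscr{D})\supseteq pa_i(\mathscr{D}_0)$, both population conditional variances equal $(D_0)_{ii}$, so with $a:=(D_0)_{ii}^{-1}=[((\Sigma_0)_{\mathscr{D}}^{\ge i})^{-1}]_{ii}=[((\Sigma_0)_{\mathscr{D}_0}^{\ge i})^{-1}]_{ii}\in[\epsilon_{0,n},\epsilon_{0,n}^{-1}]$ the \emph{population} ratio is exactly $1$. Writing $1/\tilde{S}_{i|pa_i(\mathscr{D})}=[(\tilde{S}_{\mathscr{D}}^{\ge i})^{-1}]_{ii}=a+\delta_{\mathscr{D}}$ and $1/\tilde{S}_{i|pa_i(\mathscr{D}_0)}=a+\delta_{\mathscr{D}_0}$, where $|\delta_{\mathscr{D}}|,|\delta_{\mathscr{D}_0}|$ are controlled by the resolvent bounds above and $a+\delta_{\mathscr{D}_0}\ge\epsilon_{0,n}/2$ for large $n$ (by (\ref{nnew9})), I obtain
$$
\frac{1/\tilde{S}_{i|pa_i(\mathscr{D})}}{1/\tilde{S}_{i|pa_i(\mathscr{D}_0)}}
=\frac{a+\delta_{\mathscr{D}}}{a+\delta_{\mathscr{D}_0}}
\le 1+\frac{4c'(\nu_i(\mathscr{D})+d+2)}{\epsilon_{0,n}^4}\sqrt{\frac{\log p}{n}}.
$$
Applying $\log(1+x)\le x$, the data factor is at most $\exp\{2c'\epsilon_{0,n}^{-4}(\nu_i(\mathscr{D})+d+2)\sqrt{n\log p}\}$.

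Finally I would balance. The prior factor contributes $-m\,\eta_n n\,(1-o(1))$ to $\log B_i(\mathscr{D},\mathscr{D}_0)$ (since $\eta_n n\to\infty$), while the remaining constants in (\ref{new8}) are $o(\eta_n n)$ by Assumptions~1,2. Splitting $\nu_i(\mathscr{D})+d+2\le m+2d+2$ and noting that $\eta_n n=d\,n^{(1+k)/(2+k)}(\log p)^{1/(2+k)}$ gives $\frac{(d+2)\sqrt{n\log p}/\epsilon_{0,n}^4}{\eta_n n}=\frac{d+2}{d}\cdot\frac{(\log p/n)^{k/(2(2+k))}}{\epsilon_{0,n}^4}\to 0$ by Assumption~1, the data exponent is $o(m\,\eta_n n)$. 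Hence, uniformly in $i$ and $\mathscr{D}$, $\log B_i(\mathscr{D},\mathscr{D}_0)\le -m\,\eta_n n(1-o(1))\le-\eta_n n$ for all large $n$ (using $m\ge 3$), i.e. $B_i(\mathscr{D},\mathscr{D}_0)\le e^{-\eta_n n}=\epsilon_{2,n}$. The main obstacle is precisely this data factor: a crude bound such as $(4/\epsilon_{0,n}^2)^{n/2}$ would overwhelm the prior penalty. The decisive point is that Proposition~\ref{f1}(a) forces the population ratio to equal $1$, so only the $O(\sqrt{\log p/n})$ sampling fluctuation is exponentiated, and the Lemma's hypothesis on $\nu_i(\mathscr{D})+1$ is exactly what keeps that fluctuation of smaller order than $m\,\eta_n n$.
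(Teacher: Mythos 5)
Your proposal is correct and follows essentially the same route as the paper's own proof: both start from (\ref{new8}), invoke Proposition \ref{f1}(a) so that the population conditional-variance ratio is exactly $1$ and only the $O\bigl(\sqrt{\log p/n}\bigr)$ sampling fluctuation gets exponentiated, control that fluctuation through the resolvent bounds made available by the lemma's hypothesis (the paper's (\ref{pp7.1})--(\ref{e2})), and then balance the prior penalty $q^{\,\nu_i(\mathscr{D})-\nu_i(\mathscr{D}_0)}$ against the data exponent using the order relations from Assumptions 1--3. The only cosmetic difference is bookkeeping: the paper bounds the error coefficient $\nu_i(\mathscr{D})+\nu_i(\mathscr{D}_0)+2$ by $2\bigl(\nu_i(\mathscr{D})-\nu_i(\mathscr{D}_0)\bigr)$ directly from the hypothesis $\nu_i(\mathscr{D})>3\nu_i(\mathscr{D}_0)+2$, whereas you split it as $m+2d+2$ and absorb the $d$-part via Assumption 1; both give $B_i(\mathscr{D},\mathscr{D}_0)\le\bigl(e^{-\eta_n n/2}\bigr)^{m}\le e^{-\eta_n n}$ uniformly in $i$ and $\mathscr{D}$.
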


\begin{proof}
By following exactly the same sequence of arguments leading up to (\ref{pp1}), 
and replacing $\mathscr{D}$ by $\mathscr{D}_0$, we get 
\begin{align*}
	&|| (\tilde{S}_{\mathscr{D}}^{\ge i})^{-1} - ((\Sigma_0)_{\mathscr{D}}^{\ge i})^{-1} ||_{(2,2)} \\ 
	\le &	(|| (\tilde{S}_{\mathscr{D}}^{\ge i})^{-1} - ((\Sigma_0)_{\mathscr{D}}^{\ge i})^{-1} ||_{(2,2)} + \frac 1 {\epsilon_{0,n}})(\nu_i(\mathscr{D} )+1)c^\prime \sqrt{\frac{\log p}{n\epsilon_{0,n}^2}}\frac 1 {\epsilon_{0,n}}, 
\end{align*}

\noindent
Using $\frac 1 {\epsilon_{0,n}^2}(\nu_i({\mathscr{D}})+1) \sqrt{\frac{\log p}{n}} \le 
\frac{1}{2{c^\prime}}$, $\nu_i (\mathscr{D}_0) < \nu_i (\mathscr{D})$, (\ref{pp1}) and (\ref{assumptioneigenvalue}), for large enough $n \ge N_2'$, we get
\begin{equation} \label{pp7.1}
|| (\tilde{S}_{\mathscr{D}}^{\ge i})^{-1} - ((\Sigma_0)_{\mathscr{D}}^{\ge i})^{-1} ||_{(2,2)} \le 
\frac{2c^\prime} {\epsilon_{0,n}^3}(\nu_i({\mathscr{D}})+1)\sqrt{\frac{\log p }{n}},
\end{equation}

\begin{equation} \label{e1}
|| (\tilde{S}_{\mathscr{D}_0}^{\ge i})^{-1} - 
((\Sigma_0)_{\mathscr{D}_0}^{\ge i})^{-1} ||_{(2,2)} \le \frac{2c^\prime} {\epsilon_{0,n}^3}
(\nu_i({\mathscr{D}_0})+1)\sqrt{\frac{\log p }{n}}
\end{equation}

\noindent
and
\begin{equation} \label{e2}
\frac {2c^\prime} {\epsilon_{0,n}^3}(\nu_i({\mathscr{D}}_0 )+1)\sqrt{\frac{\log p }{n}} \le 
\frac{\epsilon_{0,n}}{2}. 
\end{equation}

\noindent
Note that the arguments leading up to (\ref{new8}) only use $pa_i (\mathscr{D}_0) 
\subset pa_i (\mathscr{D})$. It follows from (\ref{new8}), Proposition \ref{f1}, 
(\ref{pp7.1}) and (\ref{e1}) that these exists $N_2''$ such that 
\begin{align*}
&B_i({\mathscr{D}} ,{\mathscr{D}}_0)\\ 
\le& \exp\left\{d \log\left(\frac{\delta_2}{\delta_1}\right) + 2c\log n + \frac c 4 \log\left(\frac 1 {\epsilon_{0,n}^4}\right) + \log \tilde{M}\right\}\left( 2q \sqrt{\frac{\delta_1}{\delta_2}} \right)^{\nu_i({\mathscr{D}})-\nu_i({\mathscr{D}}_0 )}\\
&\times \left(\frac{\frac{1}{\tilde{S}_{i|pa_i({\mathscr{D}})}}}{\frac{1}{\tilde{S}_{i|pa_i({\mathscr{D}}_0)}}} \right)^{\frac{n+c-3}{2}} \\
	\le& \exp\left\{d \log\left(\frac{\delta_2}{\delta_1}\right) + 3c\log n\right\}\left( 2q \sqrt{\frac{\delta_1}{\delta_2}} \right)^{\nu_i({\mathscr{D}} )-\nu_i({\mathscr{D}}_0)}\\
	&\times \left(\frac{\frac{1}{(\Sigma_0)_{i|pa_i({\mathscr{D}_0} )}}+\frac{2c^\prime} {\epsilon_{0,n}^3}(\nu_i({\mathscr{D}})+1)\sqrt{\frac{\log p }{n}}}{\frac{1}{(\Sigma_0)_{i|pa_i({\mathscr{D}}_0)}}-\frac {2c^\prime} {\epsilon_{0,n}^3}(\nu_i({\mathscr{D}}_0 )+1)\sqrt{\frac{\log p }{n}}} \right)^{\frac{n+c-3}{2}} 
\end{align*}

\noindent
for $n \geq N_2''$. Note that $\nu_i({\mathscr{D}}) > 3 \nu_i({\mathscr{D}}_0) + 2$ 
implies $\nu_i({\mathscr{D}} ) + \nu_i({\mathscr{D}}_0 ) + 2 \le 2(\nu_i({\mathscr{D}} )-
\nu_i({\mathscr{D}}_0 ))$ It follows by Assumption 1, (\ref{e2}) and $q = q_n = 
e^{-\eta_n n}$ that 
\begin{align*}
&B_i({\mathscr{D}} ,{\mathscr{D}}_0)\\
	\le& \exp\left\{d \log\left(\frac{\delta_2}{\delta_1}\right) + 3c\log n\right\}\left( 2q \sqrt{\frac{\delta_1}{\delta_2}} \right)^{\nu_i({\mathscr{D}})-\nu_i({\mathscr{D}}_0)} \\
	&\times \left(1 + \frac{\frac {2c^\prime} {\epsilon_{0,n}^3}(\nu_i({\mathscr{D}} )+\nu_i({\mathscr{D}}_0)+2)\sqrt{\frac{\log p }{n}}}{\epsilon_{0,n}/2} \right )^{\frac{n+c-3}{2}} \\
	\le& \exp\left\{d \log\left(\frac{\delta_2}{\delta_1}\right) + 3c\log n\right\}\left(2q \sqrt{\frac{\delta_1}{\delta_2}} \right)^{\nu_i({\mathscr{D}})-\nu_i({\mathscr{D}}_0)} \\
	& \times \exp\left\{\frac{8{c^\prime}}{\epsilon_{0,n}^4}(\nu_i({\mathscr{D}} )-\nu_i({\mathscr{D}}_0 ))\sqrt{n \log p}\right\} \\
	\le& \left(2 \sqrt{\frac{\delta_1}{\delta_2}} \exp\left\{-\eta_n n+\frac{8{c^\prime}}{\epsilon_{0,n}^4}\sqrt{n\log p } + d \log\left(\frac{\delta_2}{\delta_1}\right) + 3c\log n\right\} \right)^{\nu_i({\mathscr{D}} )-\nu_i({\mathscr{D}}_0 )}. 
\end{align*}

\noindent
Since $\eta_n$ has a strictly larger order than $\frac{\sqrt{\frac{\log p}{n}}}{\epsilon_{0,n}^4}$, $\frac{d}{n}$ 
and $\frac{\log n}{n}$, there exists $N_2$ such that 
$$
B_i({\mathscr{D}} ,{\mathscr{D}}_0) \le \left(e^{-\frac{\eta_n}{2}n} 
\right)^{\nu_i({\mathscr{D}} )-\nu_i({\mathscr{D}}_0 )} \le e^{-\eta_n n} 
$$

\noindent
for $n \geq N_2$. 
\end{proof}

\begin{lemma} \label{lm3}
If $pa_i({\mathscr{D}}) \supset pa_i({\mathscr{D}}_0)$, $\nu_i({\mathscr{D}}) > 
3\nu_i({\mathscr{D}}_0) + 2$ and $\frac 1 {\epsilon_{0,n}^2}(\nu_i({\mathscr{D}}) + 1)\sqrt{\frac{\log p}{n}} > 
\frac1{2{c^\prime}}$, then there exists $N_3$(not depending on $i$ or 
$\mathscr{D}$), such that for $n \geq N_3$, $B_i({\mathscr{D}} , {\mathscr{D}} _0) \le 
\epsilon_{3,n}$, where $\epsilon_{3,n} = (\frac{1}{\delta_1 n})^n.$	
\end{lemma}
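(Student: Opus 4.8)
The plan is to exploit the fact that in this regime the number of parents $\nu_i(\mathscr{D})$ is necessarily large --- of order at least $\epsilon_{0,n}^2\sqrt{n/\log p}$ --- so that the exponentially small prior factor $q^{\nu_i(\mathscr{D})-\nu_i(\mathscr{D}_0)}$ appearing in the bound of Lemma~\ref{newlemma1} overwhelms every other factor, even though we can no longer control $\tilde S_{\mathscr D}^{\ge i}$ in operator norm (which is exactly the estimate that fails under the standing hypothesis here). Since $pa_i(\mathscr{D})\supset pa_i(\mathscr{D}_0)$, I will reuse the Schur-complement bound (\ref{new7}) from the proof of Lemma~\ref{lm1}, namely $|\tilde S_{\mathscr{D}_0}^{\ge i}|^{1/2}/|\tilde S_{\mathscr{D}}^{\ge i}|^{1/2}\le (\sqrt{n/\delta_1})^{\nu_i(\mathscr{D})-\nu_i(\mathscr{D}_0)}$, which uses only Assumption~5.

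For the conditional-variance ratio I will use only the crude two-sided bounds available without operator-norm control. From $\tilde S_{\mathscr{D}}^{\ge i}\ge (U/n)_{\mathscr{D}}^{\ge i}$ and Assumption~5 one gets $[(\tilde S_{\mathscr{D}}^{\ge i})^{-1}]_{ii}\le n/\delta_1$, hence $\tilde S_{i\mid pa_i(\mathscr{D})}\ge \delta_1/n$; and on the event $C_n^c$, $\tilde S_{i\mid pa_i(\mathscr{D}_0)}\le \tilde S_{ii}\le (\Sigma_0)_{ii}+\|\tilde S-\Sigma_0\|_{\max}\le 2/\epsilon_{0,n}$ for $n$ large, by Assumptions~1 and~2. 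Substituting these three estimates together with $2<c_i(\mathscr{D}),c_i(\mathscr{D}_0)<c$ into Lemma~\ref{newlemma1}, the powers of $n$ coming from the Schur complement and from $\sqrt{\delta_2/n}$ cancel, yielding
$$B_i(\mathscr{D},\mathscr{D}_0)\le M\left(\frac{\delta_2}{\delta_1}\right)^{\frac d2}n^{2c}\left(\sqrt{\frac{\delta_2}{\delta_1}}\,\frac{q}{1-q}\right)^{\nu_i(\mathscr{D})-\nu_i(\mathscr{D}_0)}\left(\frac{2}{\epsilon_{0,n}}\right)^{\frac{n+c-3}{2}}\left(\frac{n}{\delta_1}\right)^{\frac{n+c-3}{2}}.$$

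Next I will quantify the prior factor. The hypothesis $\epsilon_{0,n}^{-2}(\nu_i(\mathscr{D})+1)\sqrt{\log p/n}>\tfrac{1}{2c'}$ together with $\nu_i(\mathscr{D})>3\nu_i(\mathscr{D}_0)+2$ gives $\nu_i(\mathscr{D})-\nu_i(\mathscr{D}_0)>\tfrac23(\nu_i(\mathscr{D})+1)>\tfrac{\epsilon_{0,n}^2}{3c'}\sqrt{n/\log p}$. Since $q_n\to0$, for $n$ large $\sqrt{\delta_2/\delta_1}\,q/(1-q)\le 2\sqrt{\delta_2/\delta_1}\,q_n\le e^{-\eta_n n/2}<1$, so the prior factor is at most $\exp\{-\tfrac{\eta_n n}{2}\cdot\tfrac{\epsilon_{0,n}^2}{3c'}\sqrt{n/\log p}\}=\exp\{-\tfrac{d}{6c'}\,n\,\epsilon_{0,n}^2(n/\log p)^{k/(2(2+k))}\}$, using $\eta_n=d(\log p/n)^{1/(2+k)}$ so that $\eta_n n\sqrt{n/\log p}=d\,n\,(n/\log p)^{k/(2(2+k))}$. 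Taking logarithms in the displayed bound, all remaining terms are $O(n\log n)$; in particular $\log(1/\epsilon_{0,n})=O(\log n)$ because Assumption~1 forces $\epsilon_{0,n}^{-4}\le (n/\log p)^{k/(2(2+k))}$ for $n$ large.

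It then remains to verify that the prior exponent wins, i.e.\ $\epsilon_{0,n}^2(n/\log p)^{k/(2(2+k))}/\log n\to\infty$, which I will do by factoring it as $\bigl[\epsilon_{0,n}^2(n/\log p)^{k/(4(2+k))}\bigr]\cdot\bigl[(n/\log p)^{k/(4(2+k))}/\log n\bigr]$: the first bracket equals $\sqrt{\epsilon_{0,n}^4(n/\log p)^{k/(2(2+k))}}\to\infty$ by Assumption~1, and the second bracket $\to\infty$ by Assumption~2. Consequently $\log B_i(\mathscr{D},\mathscr{D}_0)\le -n\log(\delta_1 n)$ for all $n$ beyond some $N_3$ independent of $i$ and $\mathscr{D}$, which is the assertion. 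The one genuinely delicate point --- and the main obstacle --- is exactly this final order comparison: one must certify that the (large but $p,n$-dependent) penalty coming from the prior beats the $(n/\delta_1)^{n/2}$- and $(2/\epsilon_{0,n})^{n/2}$-type blow-ups forced by the crude lower bound $\tilde S_{i\mid pa_i(\mathscr{D})}\ge\delta_1/n$ on the residual variance; the rest is bookkeeping with the constants of Assumption~5.
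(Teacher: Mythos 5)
Your proposal is correct and follows essentially the same route as the paper's proof: reuse the Schur-complement bound so the $n^{1/2}$ factors per extra parent cancel, fall back on the crude bounds $\tilde S_{i\mid pa_i(\mathscr{D})}\ge \delta_1/n$ and $\tilde S_{i\mid pa_i(\mathscr{D}_0)}\le 2/\epsilon_{0,n}$, and then let the prior factor, whose exponent the hypothesis forces to be of order $\epsilon_{0,n}^2\sqrt{n/\log p}$, beat the resulting $(2n/(\delta_1\epsilon_{0,n}))^{n/2}$ blow-up via exactly the order comparison $\epsilon_{0,n}^2\eta_n\sqrt{n/\log p}\gg \log n,\ \log(1/\epsilon_{0,n})$ from Assumptions 1--3. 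Your two small deviations (bounding $\tilde S_{i\mid pa_i(\mathscr{D}_0)}$ by the marginal variance $\tilde S_{ii}$ rather than via the operator-norm perturbation bound, and using $\nu_i(\mathscr{D})-\nu_i(\mathscr{D}_0)>\tfrac23(\nu_i(\mathscr{D})+1)$ instead of subtracting $2d$ and arguing it is lower order) are harmless simplifications of the same argument.
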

\begin{proof}		
Since $\frac 1 {\tilde{S}_{i|pa_i({\mathscr{D}})}} = \left[(\tilde{S}_{\mathscr{D}}^{\ge 
i})^{-1}\right]_{ii}$ and $\tilde{S}_{\mathscr{D}}^{\ge i} \ge \left(\frac U n
\right)_{\mathscr{D}}^{\ge i}$, we get 
$$
\frac 1 {\tilde{S}_{i|pa_i({\mathscr{D}})}} \le \frac n {\delta_1}. 
$$

\noindent
By (\ref{pp1}) and (\ref{assumptioneigenvalue}), there exists $N_3'$ such that for $n \geq N_3'$, 
$$
\frac {c^\prime} {\epsilon_{0,n}^2}(d +1)\sqrt{\frac{\log p }n} < \frac12, \mbox{ and }
2\frac {c^\prime} {\epsilon_{0,n}^3}(d + 1)\sqrt{\frac{\log p }n} < 
\frac{\epsilon_{0,n}}{2}. 
$$

\noindent
Since $\nu_i (\mathscr{D}_0) \leq d$, it follows by (\ref{pp1}) and Assumption 1 that 
$$
||(\tilde{S}_{\mathscr{D}_0}^{\ge i})^{-1}||_{(2,2)} \geq \frac{\epsilon_{0,n}}{2}, 
\mbox{ and } \frac 1 {\tilde{S}_{i|pa_i({\mathscr{D}_0})}} = \left[ 
(\tilde{S}_{\mathscr{D}_0}^{\ge i})^{-1} \right]_{ii} \geq \frac{\epsilon_{0,n}}{2}. 
$$

\noindent
for $n \geq N_3'$. Note that by hypothesis, we have 
$$
\nu_i({\mathscr{D}}) > \frac{\epsilon_{0,n}^2}{2{c^\prime}}\sqrt{\frac{n}{\log p}} - 1. 
$$

\noindent
Since the arguments leading up to (\ref{new8}) only require $pa_i (\mathscr{D}) 
\supset pa_i (\mathscr{D}_0)$, using the above facts along with Assumption 2, there 
exists $N_3''$ such that for $n \geq N_3''$, we get 
\begin{align*}
B_i({\mathscr{D}} , {\mathscr{D}}_0 ) & \le \exp\left\{d \log\left(\frac{\delta_2}{\delta_1}\right) + 
3c\log n\right\}\left( 2q \sqrt{\frac{\delta_1}{\delta_2}} 
\right)^{\frac{\epsilon_{0,n}^2}{2{c^\prime}}\sqrt{\frac{n}{\log p}} - 2d}\left(\frac{2n}{\delta_1 
\epsilon_{0,n}} \right)^{\frac{n+c-3}{2}}\\
&\le \exp\left\{d \log\left(\frac{\delta_2}{\delta_1}\right) + 3c\log n\right\}\left (2q 
\sqrt{\frac{\delta_1}{\delta_2}} \right)^{\frac{\epsilon_{0,n}^2}{2{c^\prime}}\sqrt{\frac{n}{\log 
p}} - 2d}\left(\frac{2n}{\delta_1 \epsilon_{0,n}}\right)^n\\
&\le \left(\frac{n^2}{\epsilon_{0,n} \delta_1} \right)^n (2q)^{\frac{\epsilon_{0,n}^2}{2{c^\prime}}
\sqrt{\frac{n}{\log p}} - 2d} \exp\left\{d\log{\left(\frac{\delta_2}{\delta_1}\right)} + 3c\log n\right\}\\
&= \left (\frac{1}{\delta_1}\exp\left\{-\frac{\epsilon_{0,n}^2}{{2c^\prime}} \eta_n  
\sqrt{\frac{n}{\log p }} + 2 \eta_n d + (2 + 3c)\log n + d\log{\left(\frac{\delta_2}{\delta_1}\right)}+ \frac 1 4 \log\left(\frac 1 {\epsilon_{0,n}^4}\right)\right\} \right)^n. 
\end{align*}

\noindent
By Assumption 1, we have $\frac 1 {\epsilon_{0,n}^2} = o\left(\left(\frac {\log p} n\right)^{-\frac 1 2\left(\frac 1 2 - \frac 1 {2+k}\right)}\right)$. Then, by (\ref{assumptioneigenvalue}), Assumptions 2 and 3, we obtain $ \epsilon_{0,n}^2 \eta_n  
\sqrt{\frac{n}{\log p }}$ has 
a larger order than $\eta_n d$, $\log n$ and $\log\left(\frac 1 {\epsilon_{0,n}^4}\right)$. It follows that there exists $N_3$ such 
that 
\begin{align*}
B_i({\mathscr{D}} , {\mathscr{D}}_0 ) & \leq \left(\frac{1}{\delta_1}\exp\left\{- 
\frac{\epsilon_{0,n}^2}{{4c^\prime}} \eta_n  
\sqrt{\frac{n}{\log p }} \right\} \right)^n\\
&\leq \left(\frac{1}{\delta_1}\exp\left\{-\log n\right\} \right)^n\\
&= \left(\frac{1}{\delta_1 n} \right)^n 
\end{align*}

\noindent
for $n \geq N_3$. 
\end{proof}

\begin{lemma} \label{lm4}
If $pa_i({\mathscr{D}}) \subset pa_i({\mathscr{D}}_0)$, then there exists $N_4$ (not 
depending on $i$ or $\mathscr{D}$), such that for $n \geq N_4$, $B_i({\mathscr{D}} , 
{\mathscr{D}}_0) \le \epsilon_{4,n}$, where $\epsilon_{4,n} = e^{-d\eta_n n}.$
\end{lemma}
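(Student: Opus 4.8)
The plan is to start from the bound of Lemma \ref{newlemma1} and control the six factors of $B_i(\mathscr{D},\mathscr{D}_0)$ one at a time on the event $C_n^c$, exploiting that $pa_i(\mathscr{D}) \subseteq pa_i(\mathscr{D}_0)$ keeps us in the ``small'' regime where every submatrix involved has at most $d+1$ rows. Write $m := \nu_i(\mathscr{D}_0) - \nu_i(\mathscr{D}) \ge 1$. The polynomial prefactor $M(\delta_2/\delta_1)^{d/2}n^{2c}$ contributes only an $O(d + \log n)$ term to $\log B_i$, and the ``prior'' factor $\bigl(\sqrt{\delta_2/n}\,q/(1-q)\bigr)^{\nu_i(\mathscr{D})-\nu_i(\mathscr{D}_0)}$ now works against us: since $q = q_n = e^{-\eta_n n} \le 1/2$ for large $n$, it is at most $\bigl(\sqrt{n/\delta_2}\,e^{\eta_n n}\bigr)^m$, contributing $m\bigl(\tfrac12\log(n/\delta_2) + \eta_n n\bigr)$ to $\log B_i$.

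Next I would bound the two ``data'' factors. Because $pa_i(\mathscr{D}) \subseteq pa_i(\mathscr{D}_0)$, the matrix $\tilde S_{\mathscr{D}}^{\ge i}$ is a principal submatrix of $\tilde S_{\mathscr{D}_0}^{\ge i}$, so $|\tilde S_{\mathscr{D}_0}^{\ge i}| / |\tilde S_{\mathscr{D}}^{\ge i}|$ is the determinant of the corresponding $m\times m$ Schur complement, whose eigenvalues are bounded above by the largest eigenvalue of $\tilde S_{\mathscr{D}_0}^{\ge i}$; arguing as for (\ref{pp1}) on $C_n^c$ and using (\ref{assumptioneigenvalue}), the latter is at most $2/\epsilon_{0,n}$ for large $n$, giving $|\tilde S_{\mathscr{D}_0}^{\ge i}|^{1/2}/|\tilde S_{\mathscr{D}}^{\ge i}|^{1/2} \le (2/\epsilon_{0,n})^{m/2}$. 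For the conditional-variance factor, conditioning on a larger set of variables reduces variance, so $\tilde S_{i\mid pa_i(\mathscr{D})} \ge \tilde S_{i\mid pa_i(\mathscr{D}_0)} > 0$ and the factor is already $\le 1$; to make this quantitative I would transfer the population gap of Proposition \ref{f1}(b), $(\Sigma_0)_{i\mid pa_i(\mathscr{D})} - (\Sigma_0)_{i\mid pa_i(\mathscr{D}_0)} \ge \epsilon_{0,n} m s^2 \ge \epsilon_{0,n} m s_n^2$, to the sample quantities using the operator-norm closeness of $(\tilde S_{\mathscr{D}}^{\ge i})^{-1}$ and $(\tilde S_{\mathscr{D}_0}^{\ge i})^{-1}$ to their population counterparts (valid here since both submatrices have at most $d+1$ rows, exactly as in (\ref{pp1}), (\ref{pp7})). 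Combined with $(\Sigma_0)_{i\mid pa_i(\mathscr{D}_0)} = (D_0)_{ii} \le \epsilon_{0,n}^{-1}$, this gives $\tilde S_{i\mid pa_i(\mathscr{D}_0)}/\tilde S_{i\mid pa_i(\mathscr{D})} \le 1 - c_1\epsilon_{0,n}^2 m s_n^2 + c_2(d+1)\epsilon_{0,n}^{-4}\sqrt{\log p/n}$ for constants $c_1,c_2>0$; splitting off the factor $(\tilde S_{i\mid pa_i(\mathscr{D}_0)})^{(c_i(\mathscr{D}_0)-c_i(\mathscr{D}))/2}$, whose exponent lies in $(-c/2,c/2)$ and which therefore contributes only an $O(\log(1/\epsilon_{0,n}))$ term, and raising the remaining ratio to the power $(n+c_i(\mathscr{D})-3)/2$, yields for large $n$ a factor at most $\exp\{-c_1 n\epsilon_{0,n}^2 m s_n^2/4 + c_2(d+1)\epsilon_{0,n}^{-4}\sqrt{n\log p}\}$.

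Collecting logarithms, the part of $\log B_i$ linear in $m$ is $m\bigl(\tfrac12\log n + \eta_n n + \tfrac12\log(2/\epsilon_{0,n}) - c_1 n\epsilon_{0,n}^2 s_n^2/4\bigr)$, and Assumptions 1--4 are calibrated so that $n\epsilon_{0,n}^2 s_n^2$ eventually dominates $\eta_n n$ and the logarithmic terms — here Assumption 4 (forcing $\eta_n d \ll \epsilon_{0,n} s_n^2$) together with Assumption 1 (controlling $\epsilon_{0,n}^{-1}$) is what is used — so this bracket is eventually negative; since $m \ge 1$ this already produces decay of order at least $\exp\{-c_1 n\epsilon_{0,n}^2 s_n^2/8\}$. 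It then remains to check that this negative contribution also dominates the remaining positive pieces — the $O(d+\log n)$ prefactor, the $c_2(d+1)\epsilon_{0,n}^{-4}\sqrt{n\log p}$ error, and the target exponent $d\eta_n n$ — which again follows from Assumptions 1, 2 and 4. Taking $N_4$ large enough to make all these ``eventually'' statements hold simultaneously (uniformly in $i$ and $\mathscr{D}$, since every estimate used only $\nu_i(\mathscr{D}_0) \le d$) then gives $B_i(\mathscr{D},\mathscr{D}_0) \le e^{-d\eta_n n} = \epsilon_{4,n}$.

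The step I expect to be the main obstacle is the bookkeeping in the middle paragraph: cleanly transferring the population conditional-variance gap of Proposition \ref{f1}(b) to the sample conditional variance $\tilde S_{i\mid pa_i(\mathscr{D})}$ while keeping explicit track of the powers of $\epsilon_{0,n}$, and then verifying that the signal-strength Assumption 4 is exactly strong enough for the negative term $n\epsilon_{0,n}^2 s_n^2$ to beat the prior-penalty exponent $d\eta_n n$. This is the only place in this lemma where Assumption 4 is used in an essential way.
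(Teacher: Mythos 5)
Your proposal follows essentially the same route as the paper's proof: bound the prior factor by $(\sqrt{n}\,e^{\eta_n n})^m$ up to constants, control the determinant ratio via the Schur complement of $\tilde S_{\mathscr{D}}^{\ge i}$ inside $\tilde S_{\mathscr{D}_0}^{\ge i}$ (yielding the same $(2/\epsilon_{0,n})^{m/2}$ bound), transfer the population conditional-variance gap of Proposition \ref{f1}(b) to $\tilde S_{i\mid pa_i(\mathscr{D})}$ using the operator-norm perturbation bound from (\ref{pp1}), and finally invoke Assumption 4 so that $n\epsilon_{0,n}^2 s_n^2$ dominates $d\eta_n n$ and the error terms. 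The bookkeeping you flag as the main obstacle is exactly what the paper carries out in its display (\ref{pp6}) and the subsequent estimates, and your outline is consistent with it.
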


\begin{proof}
Since $pa_i (\mathscr D _0) \supset pa_{i}(\mathscr{D})$, we can write 
$|\tilde{S}_{{\mathscr{D}_0}}^{\ge i}| = |\tilde{S}_{\mathscr{D}}^{\ge i}| 
|R_{{\tilde{S}}_{\mathscr{D}}^{\ge i}}|$. Here $R_{{\tilde{S}}_{\mathscr{D}}^{\ge i}}$ is 
the Schur complement of ${\tilde{S}}_{\mathscr{D}}^{\ge i}$, defined by 
$$
R_{{\tilde{S}}_{\mathscr{D}}^{\ge i}} = \tilde{D} - \tilde{B}^T 
({\tilde{S}}_{\mathscr{D}}^{\ge i})^{-1} \tilde{B} 
$$ 

\noindent
for appropriate sub matrices $\tilde{B}$ and $\tilde{D}$ of 
$\tilde{S}_{{\mathscr{D}_0}}^{\ge i}$. It follows by (\ref{pp1}) that if restrict to $C_n^c$, 
$$
||(\tilde{S}_{{\mathscr{D}_0}}^{\ge i})^{-1} - 
({(\Sigma_0)}_{{\mathscr{D}_0}}^{\ge i})^{-1}||_{(2,2)} \le \frac{4c'}{\epsilon_{0,n}^3} d 
\sqrt{\frac{\log p }{n}}, 
$$

\noindent
for $n > N_4'$. It follows that 
$$
||R_{{\tilde{S}}_{\mathscr{D}}^{\ge i}}^{-1} - 
R_{(\Sigma_0)_{\mathscr{D}}^{\ge i}}^{-1}||_{(2,2)} \le \frac{4c'}{\epsilon_{0,n}^3} d 
\sqrt{\frac{\log p }{n}}, 
$$

\noindent
where $R_{(\Sigma_0)_{\mathscr{D}}^{\ge i}}$ represents the Schur complement of 
$(\Sigma_0)_{\mathscr{D}}^{\ge i}$ defined by 
$$
R_{(\Sigma_0)_{\mathscr{D}}^{\ge i}} = \bar{D} - \bar{B}^T 
((\Sigma_0)_{\mathscr{D}}^{\ge i})^{-1} \bar{B} 
$$ 

\noindent
for appropriate sub matrices $\bar{B}$ and $\bar{D}$ of 
$(\Sigma_0)_{{\mathscr{D}_0}}^{\ge i}$. Let $\lambda_{min} (A)$ denote the smallest 
eigenvalue of a positive definite matrix $A$. By Assumptions 1 and 2, it follows that 
there exists $N_4''$ such that 
\begin{align*}
\left(\frac{|\tilde{S}_{{\mathscr{D}}_0 }^{\ge i}|}{|\tilde{S}_{\mathscr{D}}^{\ge i}|} \right)^{\frac12} = \frac{1}{|R_{{\tilde{S}}_{\mathscr{D}}^{\ge i}}^{-1}|^{1/2}} 
	&\le \frac{1}{\left(\lambda_{\mbox{min}}\left(R_{{(\Sigma_0)}_{\mathscr{D}}^{\ge i}}^{-1}\right)- K\frac d {\epsilon_{0,n}^3}\sqrt{\frac{\log p }{n}} \right)^{\frac{\nu_i({\mathscr{D}}_0) - \nu_i({\mathscr{D}})}{2}}}\\
	&\le \left(\frac{1}{\epsilon_{0,n}/2} \right)^{\frac{\nu_i({\mathscr{D}}_0 ) - \nu_i({\mathscr{D}} )}{2}} \mbox{ for large enough $n$}.
	\end{align*}
	
\noindent
for $n \geq N_4''$. Since $pa_{i}(\mathscr{D}) \subset pa_i (\mathscr D _0)$, we get
$$
\tilde{S}_{i|pa_i({\mathscr{D}_0})}  \le \tilde{S}_{i|pa_i({\mathscr{D}})}. 
$$

\noindent
Let $K_1 = 4c'$. By Lemma \ref{newlemma1} and 
Proposition \ref{f1}, and $2 < c_i (\mathscr{D}), c_i (\mathscr{D}_0) < c$, it follows 
that there exists $N_4'''$ such that for $n \geq N_4'''$, we get 
\begin{align} \label{pp6}
\begin{split}
&B_i({\mathscr{D}} ,{\mathscr{D}}_0 ) \\
\le& M \left( \frac{2}{\epsilon_{0,n}} \right)^c \left(\frac{\delta_2}
{\delta_1}\right)^{\frac d 2} n^{2c}\left(\sqrt{\frac{2n}{\delta_2\epsilon_{0,n}}} q^{-1} 
\right)^{\nu_i({\mathscr{D}}_0 ) - \nu_i({\mathscr{D}} )} \\
&\times \left (\frac{\frac{1}{(\Sigma_0)_{i|pa_i({\mathscr{D}} )}}+K_1\frac d {\epsilon_{0,n}^3} \sqrt{\frac{\log p }{n}}}
{\frac{1}{(\Sigma_0)_{i|pa_i({\mathscr{D}}_0 )}}-K_1\frac d {\epsilon_{0,n}^3}\sqrt{\frac{\log p }{n}}} \right)^{\frac{n
		+2-3}{2}} \\
\le& \left(\exp\left\{\frac{2d \log\left(\frac{\delta_2}{\delta_1}\right) + 6c\log n + (c+d)\log\left(\frac 1 {\epsilon_{0,n}}\right)}{n-1} + 
8\eta_n \left(\nu_i({\mathscr{D}}_0 )-\nu_i({\mathscr{D}} )\right) \right\}\right)^{\frac{n-1}
	{2}} \\
&\times \left(1+ \frac{(\frac{1}{(\Sigma_0)_{i|pa_i({\mathscr{D}}_0 )}} - \frac{1}{(\Sigma_0)_{i|
			pa_i({\mathscr{D}} )}}) - 2K_1 \frac d {\epsilon_{0,n}^3} \sqrt{\frac{\log p }{n}} }{\frac{1}{(\Sigma_0)_{i|
			pa_i({\mathscr{D}} )}} + K_1\frac d {\epsilon_{0,n}^3} \sqrt{\frac{\log p }{n}}} \right)^{- \frac{n-1}{2}} \\
\le& \left(\exp\left\{\frac{2d \log\left(\frac{\delta_2}{\delta_1}\right) + 6c\log n + (c+d)\log\left(\frac 1 {\epsilon_{0,n}}\right)}{n-1} + 
8\eta_n \left(\nu_i({\mathscr{D}}_0 )-\nu_i({\mathscr{D}} )\right) \right\}\right)^{\frac{n-1}
	{2}} \\
&\times \left(1+ \frac{\epsilon_{0,n} s_n^2 (\nu_i({\mathscr{D}} _0) - \nu_i({\mathscr{D}} )) - 
	2K_1\frac d {\epsilon_{0,n}^3} \sqrt{\frac{\log p }{n}}}{2/\epsilon_{0,n}} \right)^{- \frac{n-1}{2}}. 
\end{split}
\end{align}

\noindent
Note that, by Assumptions 2, 3 and 4, $\frac{d \eta_n}{\epsilon_{0,n}^2 s_n^2} \rightarrow 0$, 
${\frac{2d \log\left(\frac{\delta_2}{\delta_1}\right) + 6c\log n + (c+d)\log\left(\frac 1 {\epsilon_{0,n}}\right)}{(n-1)\epsilon_{0,n}^2 s_n^2}} \rightarrow 0$, and 
$\frac{\eta_n}{s_n^2} \rightarrow 0$ as $n \rightarrow \infty$. Since $e^x \leq 1 + 2x$ for $x < \frac{1}{2}$, by Assumptions 1 and 4, there exists $N_4''''$ 
such that for $n \geq N_4''''$, 
$$
2K_1\frac d {\epsilon_{0,n}^3} \sqrt{\frac{\log p }{n}} \le \epsilon_{0,n}\eta \le \frac{\epsilon_{0,n} s_n^2}{2}, 
$$

\noindent
and 
\begin{align*}
&\exp\left\{\frac{2d \log\left(\frac{\delta_2}{\delta_1}\right) + 6c\log n + (c+d)\log\left(\frac 1 {\epsilon_{0,n}}\right)}{n-1} + 4\eta 
\left(\nu_i({\mathscr{D}}_0 )-\nu_i({\mathscr{D}} )\right) \right\} \\
\le& 1 + 8\eta (\nu_i({\mathscr{D}}_0 )-\nu_i({\mathscr{D}} )) + \frac{4d \log
\left(\frac{\delta_2}{\delta_1}\right) + 12c\log n + 2(c+d)\log\left(\frac 1 {\epsilon_{0,n}}\right)}{n-1}\\
\leq& 1 + \frac{\epsilon_{0,n}^2 s_n^2}{8}. 
\end{align*}

\noindent
It follows by (\ref{pp6}) and the above observations that 
\begin{align*}
B_i({\mathscr{D}} ,{\mathscr{D}}_0 ) \le \left (\frac{1+\frac{\epsilon_{0,n}^2}{8}s_n^2}{1+\frac{\epsilon_{0,n}^2}{4}s_n^2} 
\right)^{\frac{n-1}{2}} 
\end{align*}

\noindent
for $n \geq \max(N_4',N_4'', N_4''',N_4'''')$. The last step follows by noting that $\nu_i 
(\mathscr{D}_0) - \nu_i (\mathscr{D}) \geq 1$. Since $pa_i (\mathscr{D}_0)$ is 
non-empty by hypothesis, $\exists j \in \nu_i({\mathscr{D}}_0 )$ such that $|(L_0)_{ji}| 
\ge s_n$, which implies that $s_n^2 \le (L_0) _{ji}^2 \le \frac{1}
{\epsilon_{0,n}}\left(\frac{[(L_0)_{ji}]^2}{(D_0)_{ii}}\right) \le \frac{(\Omega_0)_{jj}}
{\epsilon_{0,n}} \le \frac{1}{\epsilon_{0,n}^2}$ and $\epsilon_{0,n}^2 s_n^2 \le 1$. 
Hence, following from Assumption 4, there exists $N_4$ such that for $n \ge N_4$, 
we get 
\begin{align*}
B_i({\mathscr{D}} , {\mathscr{D}}_0 ) \le \left (1 - \frac{\frac{\epsilon_{0,n}^2}{8}s_n^2}{1 
+ \frac{\epsilon_{0,n}^2}{4}s_n^2} \right)^{\frac{n-1}{2}} &\le 
\exp\left\{-\left(\frac{\frac{\epsilon_{0,n}^2}{8}s_n^2}{1 + \frac{\epsilon_{0,n}^2}{4}
s_n^2}\right)\left(\frac{n-1}{2} \right)\right\} \\
&\le e^{-\frac{1}{10}\epsilon_{0,n}^2 s_n^2(\frac{n-1}{2})} \le e^{-d\eta_n n}. 
\end{align*}
\end{proof}

\begin{lemma} \label{lm5}
Suppose ${\mathscr{D}}$ is such that $pa_i({\mathscr{D}}_0) \neq pa_i({\mathscr{D}})$, $pa_i({\mathscr{D}}_0) \nsubseteq 
pa_i({\mathscr{D}})$, and $pa_i({\mathscr{D}}_0) \nsupseteq pa_i({\mathscr{D}})$, then, for $n \ge N_5$(not depending 
on $i$ or $\mathscr{D}$), $B_i({\mathscr{D}} , {\mathscr{D}} _0) \le \epsilon_{5,n}$, where $\epsilon_{5,n} = 
\max(\epsilon_{1,n},\epsilon_{2,n},\epsilon_{3,n})\epsilon_{4,n}.$
\end{lemma}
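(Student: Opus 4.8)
The plan is to reduce the mixed case of Lemma~\ref{lm5} to the ``nested'' cases already handled in Lemmas~\ref{lm1}--\ref{lm4} by interposing an intermediate DAG. Fix $i$ and a DAG $\mathscr{D}$ satisfying the hypothesis, i.e. $pa_i(\mathscr{D})$ and $pa_i(\mathscr{D}_0)$ are incomparable under inclusion and unequal. The key observation is that $B_i(\mathscr{D},\mathscr{D}_0)$ as defined in Lemma~\ref{newlemma1} only depends on $\mathscr{D}$ through the parent set $pa_i(\mathscr{D})$, so I am free to choose any DAG with the prescribed $i$th parent set. Introduce the parent set $P_i = pa_i(\mathscr{D}) \cap pa_i(\mathscr{D}_0)$, and let $\mathscr{D}^*$ be a DAG (on the same vertex set, respecting the parent ordering) with $pa_i(\mathscr{D}^*) = P_i$. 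Then $pa_i(\mathscr{D}^*) \subseteq pa_i(\mathscr{D}_0)$ and $pa_i(\mathscr{D}^*) \subseteq pa_i(\mathscr{D})$, so $pa_i(\mathscr{D}^*) \subset pa_i(\mathscr{D}_0)$ strictly (incomparability forces $P_i \ne pa_i(\mathscr{D}_0)$) and $pa_i(\mathscr{D}^*) \subset pa_i(\mathscr{D})$ strictly.

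\textbf{Factorization.} The next step is to write $B_i(\mathscr{D},\mathscr{D}_0) = B_i(\mathscr{D},\mathscr{D}^*)\, B_i(\mathscr{D}^*,\mathscr{D}_0)$. This multiplicativity is immediate from the explicit product form in Lemma~\ref{newlemma1}: each of the four factors there (the $(\delta_2/\delta_1)^{d/2}n^{2c}$-type constant aside, which I will absorb carefully — actually better to work with the tight expression and track it) telescopes, since $\nu_i(\mathscr{D})-\nu_i(\mathscr{D}_0) = (\nu_i(\mathscr{D})-\nu_i(\mathscr{D}^*)) + (\nu_i(\mathscr{D}^*)-\nu_i(\mathscr{D}_0))$, the determinant ratios $|\tilde S_{\mathscr{D}}^{\ge i}|^{-1/2}$ telescope through $|\tilde S_{\mathscr{D}^*}^{\ge i}|^{-1/2}$, and likewise for the conditional-variance powers. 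The only subtlety is the leading $M(\delta_2/\delta_1)^{d/2}n^{2c}$ factor, which appears once per $B_i$; I will handle this by re-deriving the bound in Lemma~\ref{newlemma1} as a genuine equality up to that single harmless factor, or equivalently by noting that the extra copy of $M(\delta_2/\delta_1)^{d/2}n^{2c}$ incurred in the product is polynomially bounded in $n$ and $e^d$ and is dominated by the exponential decay, just as in the proofs of Lemmas~\ref{lm1}--\ref{lm4}.

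\textbf{Applying the earlier lemmas.} Now $B_i(\mathscr{D},\mathscr{D}^*)$ falls under the case $pa_i(\mathscr{D}^*) \subset pa_i(\mathscr{D})$, which is exactly the hypothesis of Lemmas~\ref{lm1}, \ref{lm2}, \ref{lm3} (with $\mathscr{D}^*$ playing the role of $\mathscr{D}_0$): depending on whether $\nu_i(\mathscr{D}) \le 3\nu_i(\mathscr{D}^*)+2$ or not, and in the latter case depending on the size of $\frac{1}{\epsilon_{0,n}^2}(\nu_i(\mathscr{D})+1)\sqrt{\log p/n}$, we get $B_i(\mathscr{D},\mathscr{D}^*) \le \max(\epsilon_{1,n},\epsilon_{2,n},\epsilon_{3,n})$ for $n$ large. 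Meanwhile $B_i(\mathscr{D}^*,\mathscr{D}_0)$ falls under the case $pa_i(\mathscr{D}^*) \subset pa_i(\mathscr{D}_0)$, i.e. Lemma~\ref{lm4} (with $\mathscr{D}^*$ as $\mathscr{D}$), giving $B_i(\mathscr{D}^*,\mathscr{D}_0) \le \epsilon_{4,n}$. Multiplying, $B_i(\mathscr{D},\mathscr{D}_0) \le \max(\epsilon_{1,n},\epsilon_{2,n},\epsilon_{3,n})\,\epsilon_{4,n} = \epsilon_{5,n}$ for $n \ge N_5 := \max(N_1,N_2,N_3,N_4)$, with $N_5$ not depending on $i$ or $\mathscr{D}$ since none of $N_1,\dots,N_4$ do. The one point requiring care — and the main obstacle — is verifying that Lemmas~\ref{lm1}--\ref{lm4} remain literally applicable when the ``true'' DAG in their statement is replaced by the auxiliary $\mathscr{D}^*$: their proofs use that $\nu_i(\mathscr{D}^*) \le \nu_i(\mathscr{D}_0) \le d$ (so all the $\epsilon_{0,n}$-sandwich bounds and the ``$\le d$'' estimates go through unchanged), and that $(\Sigma_0)_{\mathscr{D}^*}^{\ge i}$ is still a principal submatrix of $\Sigma_0$ with eigenvalues in $[\epsilon_{0,n},\epsilon_{0,n}^{-1}]$ — both of which hold. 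One must also re-examine Proposition~\ref{f1}: part (a) was invoked with the true DAG, but in the factor $B_i(\mathscr{D},\mathscr{D}^*)$ we have $pa_i(\mathscr{D}) \supseteq pa_i(\mathscr{D}^*)$ yet \emph{not} $pa_i(\mathscr{D}) \supseteq pa_i(\mathscr{D}_0)$, so $(\Sigma_0)_{i|pa_i(\mathscr{D})}$ need not equal $(D_0)_{ii}$; however, in Lemmas~\ref{lm2} and \ref{lm3} what is actually used is the comparison $(\Sigma_0)_{i|pa_i(\mathscr{D})} \le (\Sigma_0)_{i|pa_i(\mathscr{D}^*)}$, which is just monotonicity of conditional variance under enlarging the conditioning set and holds here, so the argument survives with this cosmetic substitution. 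I would therefore state the proof as: choose $\mathscr{D}^*$, factorize, quote Lemmas~\ref{lm1}--\ref{lm4} with $\mathscr{D}^*$ in place of $\mathscr{D}_0$ (noting the monotonicity remark in place of Proposition~\ref{f1}(a)), and multiply.
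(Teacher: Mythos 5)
Your proposal is correct and follows essentially the same route as the paper's own proof: the paper likewise interposes an auxiliary DAG $\mathscr{D}^*$ with $pa_i(\mathscr{D}^*) = pa_i(\mathscr{D}) \cap pa_i(\mathscr{D}_0)$, bounds $B_i(\mathscr{D},\mathscr{D}_0) \le B_i(\mathscr{D},\mathscr{D}^*)\,B_i(\mathscr{D}^*,\mathscr{D}_0)$ by the same telescoping (accepting the extra polynomial prefactor), and then invokes Lemmas \ref{lm1}--\ref{lm3} for the first factor and Lemma \ref{lm4} for the second. Your additional remarks on replacing Proposition \ref{f1}(a) by monotonicity of the conditional variance are consistent with the paper's observation that the arguments leading to (\ref{new8}) only require the inclusion $pa_i(\mathscr{D}^*) \subset pa_i(\mathscr{D})$.
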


\noindent
The proof of this lemma is provided in the Supplemental Document. With these lemmas in hand, 
Theorem \ref{thm1} can be proved as follows. By Lemmas \ref{lm1} - \ref{lm5}, if we restrict to 
the event $C_n^c$, and $pa_i (\mathscr{D}) \neq pa_i (\mathscr{D}_0)$, then 
$B_i({\mathscr{D}} , {\mathscr{D}} _0) \le \epsilon^*_n$ for every $n \geq 
\max (N_1, N_2, N_3, N_4)$, where $\epsilon^*_n \stackrel{\Delta}
{=} \max \{\epsilon_{1,n}, \epsilon_{2,n},\epsilon_{3,n}, \epsilon_{4,n}, \epsilon_{5,n}\}$ 
converges to $0$ as $n \rightarrow \infty$ (by Assumption 3). Note that if $\mathscr{D} 
\neq \mathscr{D}_0$, then there exists at least one i, such that $pa_i (\mathscr{D}) 
\neq pa_i (\mathscr{D}_0)$. It follows by Lemma \ref{newlemma1}, that if we restrict to 
$C_n^c$, then 
\begin{equation} \label{pp6.1}
\max_{\mathscr{D} \neq \mathscr{D}_0} \frac{\pi({\mathscr{D}}|\bm{Y})}
{\pi({\mathscr{D}}_0 |\bm{Y})} \leq \max_{\mathscr{D} \neq \mathscr{D}_0} 
\prod_{i=1}^{p}B_i(\mathscr{D} ,\mathscr{D}_0) \leq \epsilon^*_n 
\end{equation}

\noindent
for every $n \geq \max (N_1, N_2, N_3, N_4)$. By (\ref{smplbound1}), $P(C_n^c) 
\rightarrow 1$ as $n \rightarrow \infty$. Theorem \ref{thm1} follows immediately. 

To prove Theorem \ref{thm2}, note that if $p/n^{\widetilde{k}} \rightarrow \infty$, then one can 
choose $c'$ in (\ref{smplbound1}) such that $m_2(c')^2/4 = 2 + 2/\widetilde{k}$. It follows that 
$P(C_n) \leq m_1/n^2$ for large enough $n$. The result follows by (\ref{pp6.1}) and the 
Borel-Cantelli lemma. 

We now move on to the proof of Theorem \ref{thm3}, and only consider DAGs with 
number of edges at most $h = \frac{1}{8} d \left( \frac{n}{\log p}
\right)^{\frac{1+k}{2+k}}$. By Lemmas \ref{lm1} - \ref{lm5}, it follows that if we 
restrict to $C_n^c$, then  
\begin{eqnarray} \label{pfm2}
\frac{1 - \pi(\mathscr{D}_0 | \bm{Y})}{\pi(\mathscr{D}_0 | \bm{Y})} 
&=& \sum_{\mathscr D \neq \mathscr D_0, \mathscr{D} \mbox{ has atmost } h 
\mbox{ edges}} \frac{\pi(\mathscr{D} | \bm{Y})}{\pi(\mathscr{D}_0 | \bm{Y})} \nonumber\\
&\le& \sum_{i=0}^{h}\binom{\binom{p}{2}}{i} \max_{{\mathscr{D}} \ne {\mathscr{D}}_0} 
\frac{\pi(\mathscr{D}|\bm{Y})}{\pi({\mathscr{D}}_0|\bm{Y})} \nonumber\\
&\le& p^{3h} e^{-\frac{\eta_n n}{2}} = e^{3h \log p - \frac{\eta_n n}{2}} = 
e^{-\frac{1}{8} d n^{\frac{1+k}{2+k}} (\log p)^{\frac{1}{2+k}}} 
\end{eqnarray}

\noindent
for $n \geq \max (N_1, N_2, N_3, N_4)$. Theorem \ref{thm3} follows immediately.

\section{Results for non-local priors} \label{sec:nonlocalpriors}

\noindent
{ In \cite{ACR:2013}, the authors present an alternative to the Wishart-based Bayesian framework for Gaussian DAG 
models by using non-local priors. Non-local priors were first introduced in \cite{John:Rossell:nonlocaltesting:2010} as densities 
that are identically zero whenever a model parameter is equal to its null value in the context of hypothesis testing (compared to 
local priors, which still preserve positive values at null parameter values). Non-local priors tend to discard spurious covariates 
faster as the sample size $n$ grows, while preserving exponential learning rates to detect non-zero coefficients as indicated in 
\cite{John:Rossell:nonlocaltesting:2010}. These priors were further extended to Bayesian model selection problems in 
\cite{Johnson:Rossell:2012} by imposing non-local prior densities on a vector of regression coefficients. The non-local 
prior based approach for Gaussian DAG models proposed in \cite{ACR:2013}, adapted to our notation and framework, can be
described as follows: 
\begin{eqnarray} \label{nl1}
\begin{split}
&\bm Y | \left((D,L), \mathscr D\right) \sim N_p \left( \bm 0, (LD^{-1}L^T)^{-1}\right),
\\
& \pi_{NL} ((D,L) | \mathscr D) \propto \prod_{j=1}^p \frac{1}{D_{jj}} \left( \prod_{i \in pa_j (\mathscr D)} L_{ij}^{2r} \right), 
\\
&\pi_{NL} (\mathscr D) = \prod_{i=1}^{p-1} q^{\nu_i(\mathscr{D})} (1-q)^{p-i-
\nu_i(\mathscr{D})},
\end{split}
\end{eqnarray}

\noindent
where $r$ is a fixed positive integer. Note that the prior on $(D,L)$ is an improper objective prior. If $\max_{1 \leq j \leq p} 
pa_j (\mathscr D) > n$, then this objective prior leads to an improper posterior for $(D,L)$ as well. In such cases, the authors 
in \cite{ACR:2013} propose using fractional Bayes factors. However, for the purposes of proving strong selection consistency, 
similar to Theorem \ref{thm3}, we will restrict the prior on the space of DAGs to graphs whose total number of edges is 
appropriately bounded (leaving out unrealistically large models, in the terminology of \cite{Narisetty:He:2014}). This will ensure 
that the posterior impropriety issue never arises. 

The next result establishes strong selection consistency for the objective non-local prior based approach of \cite{ACR:2013}. 
The proof is provided in the Supplementary Document. 
\begin{theorem}[(Strong selection consistency for non-local priors] \label{thm4}
Consider the non-local prior based model described in (\ref{nl1}). Under Assumptions 1-4, if we restrict the prior to DAG's 
with total number of edges at most $ d \left( \frac{n}{\log p} \right)^{\frac{1}{2(2 + k)}}$, the following holds: 
$$
\pi_{NL} (\mathscr D_0 \mid \bm{Y}) \stackrel{\bar{P}}{\rightarrow} 1, 
$$

\noindent
as $n \rightarrow \infty$. 
\end{theorem}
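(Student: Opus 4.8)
\textbf{Proof strategy for Theorem~\ref{thm4}.} The plan is to mimic the route used for Theorem~\ref{thm3}: obtain a closed form for the marginal posterior $\pi_{NL}(\mathscr D\mid\bm Y)$, bound the ratio $\pi_{NL}(\mathscr D\mid\bm Y)/\pi_{NL}(\mathscr D_0\mid\bm Y)$ vertex by vertex on the high-probability event $C_n^c$, and then sum over the combinatorially restricted set of DAGs. The genuinely new work lies in handling the \emph{non-local moment integrals} that replace the DAG-Wishart normalizing constants. First I would integrate out $(D,L)$. Since $\mbox{tr}(LD^{-1}L^T(nS)) = \sum_{j=1}^p D_{jj}^{-1} L_{\cdot j}^T (nS) L_{\cdot j}$ and the prior $\pi_{NL}((D,L)\mid\mathscr D)$ factors over the columns of $L$ and over the $D_{jj}$, the integral factors over vertices. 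Completing the square in the free entries $L_{\mathscr D\cdot j}^{>}=(L_{ij})_{i\in pa_j(\mathscr D)}$ yields $L_{\cdot j}^T(nS)L_{\cdot j}=nS_{j\mid pa_j(\mathscr D)}+n(L_{\mathscr D\cdot j}^{>}-\hat L_j)^T S_{\mathscr D}^{>j}(L_{\mathscr D\cdot j}^{>}-\hat L_j)$ with $\hat L_j=-(S_{\mathscr D}^{>j})^{-1}S_{\mathscr D\cdot j}^{>}$; integrating $D_{jj}$ against its $D_{jj}^{-1}$ prior and the $D_{jj}^{-n/2}$ likelihood produces an inverse-Gamma constant $\Gamma(n/2)2^{n/2}$ times a power of the quadratic form, and the remaining integral over $L_{\mathscr D\cdot j}^{>}$ is the non-local moment integral
\[
T_j(\mathscr D)=\int_{\mathbb R^{\nu_j(\mathscr D)}}\Big(S_{j\mid pa_j(\mathscr D)}+(\ell-\hat L_j)^T S_{\mathscr D}^{>j}(\ell-\hat L_j)\Big)^{-n/2}\prod_{i\in pa_j(\mathscr D)}\ell_i^{2r}\,d\ell,
\]
which is finite as soon as $(2r+1)\nu_j(\mathscr D)<n$. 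The edge restriction $|E(\mathscr D)|\le d(n/\log p)^{1/(2(2+k))}$ guarantees this for large $n$, and simultaneously removes the posterior impropriety issue noted in the text. Assembling these factors gives $\pi_{NL}(\mathscr D\mid\bm Y)$ in closed form up to the data-independent constant $\pi_{NL}(\bm Y)$; cancelling the common $\Gamma(n/2)2^{n/2}$ terms yields, in analogy with Lemma~\ref{newlemma1}, $\pi_{NL}(\mathscr D\mid\bm Y)/\pi_{NL}(\mathscr D_0\mid\bm Y)=\prod_{i=1}^p B_i^{NL}(\mathscr D,\mathscr D_0)$, where $B_i^{NL}$ is a product of a prior term $(q/(1-q))^{\nu_i(\mathscr D)-\nu_i(\mathscr D_0)}$, a determinant ratio $(|S_{\mathscr D_0}^{\ge i}|/|S_{\mathscr D}^{\ge i}|)^{1/2}$, a residual-variance ratio of the form $(S_{i\mid pa_i(\mathscr D_0)})^{(n-(2r+1)\nu_i(\mathscr D_0))/2}/(S_{i\mid pa_i(\mathscr D)})^{(n-(2r+1)\nu_i(\mathscr D))/2}$, and the new non-local factor $T_i(\mathscr D)/T_i(\mathscr D_0)$.

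Next I would estimate $B_i^{NL}$ on $C_n^c$ under the same case split as Lemmas~\ref{lm1}--\ref{lm5}. The determinant and residual-variance pieces are treated essentially as before: the determinant ratio by the Schur-complement argument behind (\ref{new7}), now using that the eigenvalues of $(S_{\mathscr D}^{\ge i})^{-1}$ are $O(1/\epsilon_{0,n})$ on $C_n^c$ (there is no scale matrix $U$ here, so the relevant concentration inequality is the analogue of (\ref{smplbound1}) stated directly for $S$, which only uses Assumptions~1 and 2), and the residual-variance ratio by Proposition~\ref{f1} together with that same concentration bound. The new ingredient is a two-sided bound on $T_i(\mathscr D)$: rescaling $\ell$ by $\sqrt n$ reduces $T_i(\mathscr D)$ to the classical multivariate-$t$ normalizing constant (absorbing a power of $\det S_{\mathscr D}^{>i}$ and of $S_{i\mid pa_i(\mathscr D)}$) times a noncentral moment $\bar E\big[\prod_{i\in pa_i(\mathscr D)}(\hat L_{j,i}+n^{-1/2}Z_i)^{2r}\big]$, where $Z$ is a multivariate-$t$ vector with dispersion $(S_{\mathscr D}^{>i})^{-1}S_{i\mid pa_i(\mathscr D)}$. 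For a spurious parent ($i$ a true zero), $|\hat L_{j,i}|=O(\sqrt{\log p/(n\epsilon_{0,n}^2)})$ on $C_n^c$, so each such coordinate contributes a small multiplicative factor of order $(\log p/(n\epsilon_{0,n}^2))^{r}$ --- strictly stronger shrinkage than the local DAG-Wishart prior supplies. For a genuinely nonzero parent, $|\hat L_{j,i}|$ is bounded away from $0$ on $C_n^c$ (by roughly $s_n$), so that coordinate contributes a factor that is $\Theta(1)$, at worst of order $s_n^{2r}$ in a denominator; since Assumptions~1--4 force $s_n^2\gg\eta_n$, the term $s_n^{-2r}$ is dominated by $q^{-1}=e^{\eta_n n}$ (indeed only logarithmically), and is absorbed.

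With the per-vertex estimates in hand, I would reproduce the five-case analysis of Lemmas~\ref{lm1}--\ref{lm5}, restricting throughout to $C_n^c$ (overfitting with few extra parents; overfitting with $\nu_i(\mathscr D)$ large; the regime where $\nu_i(\mathscr D)$ is comparable to $\sqrt{n/\log p}$; underfitting; and the mixed case). The same balancing of $\eta_n=d(\log p/n)^{\frac{1/2}{1+k/2}}$ against $\log n$, $d/n$ and $\epsilon_{0,n}^{-4}\sqrt{\log p/n}$ that drives those lemmas goes through verbatim, while the extra non-local factors only strengthen the overfitting bounds. This produces a uniform estimate $\max_{\mathscr D\neq\mathscr D_0}\prod_{i=1}^p B_i^{NL}(\mathscr D,\mathscr D_0)\le e^{-\eta_n n/2}$ on $C_n^c$ for all large $n$.

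Finally, restricting to DAGs with at most $h=d(n/\log p)^{1/(2(2+k))}$ edges and using $\sum_{i=0}^h\binom{\binom p2}{i}\le p^{2h+1}$, the computation of (\ref{pfm2}) gives, on $C_n^c$,
\[
\frac{1-\pi_{NL}(\mathscr D_0\mid\bm Y)}{\pi_{NL}(\mathscr D_0\mid\bm Y)}\le p^{2h+1}e^{-\eta_n n/2}=\exp\Big\{(2h+1)\log p-\tfrac12 d\,n^{\frac{1+k}{2+k}}(\log p)^{\frac{1}{2+k}}\Big\}\longrightarrow 0,
\]
the exponent tending to $-\infty$ because $h\log p$ has strictly smaller order in $n$ than $\eta_n n$; since $\bar P(C_n)\to 0$ by the analogue of (\ref{smplbound1}), Theorem~\ref{thm4} follows. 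The main obstacle is precisely the two-sided control of the moment integrals $T_i(\mathscr D)$ uniformly in $\mathscr D$ and $i$ --- in particular verifying the claimed $(\log p/n)^{r}$ gain per spurious parent and ensuring the bounds do not degrade as $\nu_i(\mathscr D)$ approaches the convergence threshold $(2r+1)\nu_i(\mathscr D)=n$ where $T_i(\mathscr D)$ blows up; a secondary technical point is justifying finiteness of all these integrals (hence of $\pi_{NL}(\bm Y)$ and the legitimacy of the ratios), which is exactly what the edge restriction provides.
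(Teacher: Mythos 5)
Your strategy is essentially the paper's own proof: the paper integrates out $(D,L)$ column by column after completing the square, upper-bounds the non-local moment integral via a Gaussian-moment lemma of Johnson and Rossell, lower-bounds it for $\mathscr D_0$ using $E[(\sigma Z+\mu)^{2r}]\ge\mu^{2r}$ with $|\hat L_{ji}|\ge s/2$ on $C_n^c$, forms the per-vertex ratio $C_j(\mathscr D,\mathscr D_0)$, runs a case analysis through the intersection DAG $\mathscr D^*$ exactly as in Lemmas \ref{lm1}--\ref{lm5}, and closes with the combinatorial sum over the edge-restricted class. The one point your sketch underplays is that the tighter edge bound $d(n/\log p)^{1/(2(2+k))}$ is needed not only for integrability but also in the overfitting cases, where terms like $r\nu_j(\mathscr D)^2\log(2/\epsilon_{0,n})$ and $\nu_j(\mathscr D)\sqrt{\log p/n}/\epsilon_{0,n}^4$ must be dominated by $\eta_n n$ --- precisely the obstacle you flag at the end.
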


\noindent
Note that the only difference between the assumptions need for Theorem \ref{thm3} (for DAG-Wshart priors) and 
Theorem \ref{thm4} (non-local priors) is that the in Theorem \ref{thm4} we restrict to DAG's with number of edges at most 
$d \left( \frac{n}{\log p} \right)^{\frac{1}{2(2 + k)}}$ (as opposed to 
$\frac{1}{8} d \left( \frac{n}{\log p} \right)^{\frac{1+k}{2 + k}}$ for Theorem \ref{thm3}). All the remaining assumptions 
(Assumptions 1-4) are identical. Assumption 5 relates to the hyperparameters of the DAG-Wishart distribution, and hence is 
not relevant for the non-local prior setting.}

\section{Discussion: Comparison of penalized likelihood and Bayesian approaches} 
\label{sec:penalized}

\noindent
As mentioned in the introduction, several penalized likelihood approaches for sparse 
estimation in Gaussian DAG models have been proposed in the literature. The objective 
of this section is to compare and contrast these methods with the Bayesian approach of 
Ben-David et al. \cite{BLMR:2016} considered in this paper, and discuss advantages and 
disadvantages. 

For this discussion, we will focus on the approaches in \cite{HLPL:2006, Shojaie:Michailidis:2010, 
KORR:2017}, because these do not put any restrictions on the resulting sparsity pattern and 
focus on DAG models with ordering similar to the work of 
Ben-David et al. \cite{BLMR:2016}. For several applications in genetics, finance, and 
climate sciences, a location or time based ordering of variables is naturally available. For temporal data, 
a natural ordering of variables is provided by the time at which they are observed. In genetic datasets, 
the variables can be genes or SNPs located contiguously on a chromosome, and their spatial location 
provides a natural ordering.  More examples can be found in 
\cite{HLPL:2006, Shojaie:Michailidis:2010, Yu:Bien:2016, KORR:2017}. 

The more complex case of DAG models where a domain-specific ordering of the vertices is not known 
has also been studied in the literature, see 
\cite{Rutimann:Buhlmann:2009, vandeGeer:Buhlmann:2013, AAZ:2015} and the references therein. 
In \cite{Rutimann:Buhlmann:2009}, the authors first recover the underlying conditional independence 
relationships by estimating the equivalence class of DAGs (CPDAG class). Then, a DAG is chosen 
from this class, and then a covariance matrix obeying the conditional independence relationships in 
this DAG is estimated. In \cite{AAZ:2015}, the authors simultaneously estimate the DAG and the 
covariance matrix using a penalized regression approach. 

\subsection{Brief description of penalized likelihood methods for DAG models with given ordering} \label{sec:brief:description}

\noindent
All of the penalized likelihood methods in \cite{HLPL:2006, Shojaie:Michailidis:2010, KORR:2017} 
consider the decomposition $\Omega = L^T D^{-1} L$ where $L$ is a lower triangular matrix with 
ones on the diagonal, and $D$ is a diagonal matrix. Then, they minimize an objective function 
comprised of the log Gaussian likelihood and appropriate $\ell_1$ penalty which induces sparsity 
in the resulting estimator of the Cholesky factor $L$ (corresponding to a DAG with inverse parent 
ordering). Huang et al. \cite{HLPL:2006} obtain a sparse estimate of $L$ by minimizing the 
objective function 
\begin{equation} \label{eq:penalized1}
Q_{Chol} (L,D) = tr \left( L^T D^{-1} L S \right) + \log |D| + \lambda \sum_{1 \leq i < j \leq p} 
|L_{ij}|. 
\end{equation}

\noindent
with respect to $L$ and $D$, where $S$ is the sample covariance matrix. The objective function 
$Q_{Chol} (L, D)$ is not jointly convex in its arguments. Furthermore, as demonstrated in 
\cite{KORR:2017}, this approach can lead to singular estimates of the covariance matrix when 
$n < p$. Shojaie and Michailidis  \cite{Shojaie:Michailidis:2010} obtain a sparse estimate of $L$ 
by minimizing the convex objective function 
\begin{equation} \label{eq:penalized2}
Q_{Chol} (L, I_p) = tr \left( L^T L S \right) + \lambda \sum_{1 \leq i < j \leq p} |L_{ij}|, 
\end{equation}

\noindent
with respect to $L$, where $I_p$ denotes the identity matrix of order $p$ (an adaptive lasso 
version of the above objective function is also considered in \cite{Shojaie:Michailidis:2010}). Note 
that the objective function in (\ref{eq:penalized2}) is a special case of the objective function in 
(\ref{eq:penalized1}) with $D = I_p$. The approach in \cite{Shojaie:Michailidis:2010} provides a 
sparse estimate of $L$ and hence is useful for sparsity/DAG selection, but cannot be used for 
estimation of $\Omega$ as it does not provide an estimate of the conditional variances 
$\{D_{ii}\}_{i=1}^p$. In \cite{KORR:2017}, the authors reparametrize in terms of $T = 
D^{-\frac{1}{2}} L$, and obtain their estimate by minimizing the jointly convex objective function 
\begin{equation} \label{eq:penalized3}
Q_{CSCS} (T) = tr \left( T^T T S \right) - 2 \log |T| + \lambda \sum_{1 \leq i < j \leq p} |T_{ij}|, 
\end{equation}

\noindent
with respect to $T$. This approach preserves the desirable properties of the methods in 
\cite{HLPL:2006, Shojaie:Michailidis:2010} while overcoming their shortcomings. See 
\cite{KORR:2017} for more details. 

\subsection{Comparison: Graph search complexity and accuracy} \label{sec:graph:complexity:accuracy}

\noindent
For all the penalized likelihood methods, the user-specified penalty parameter $\lambda$ 
controls the level of sparsity of the resulting estimator. Varying $\lambda$ provides a range of 
possible DAG models to choose from. This set of graphs, often obtained over a grid of $\lambda$ 
values, is often refered to as the solution path for the particular penalized likelihood method. The 
choice of $\lambda$ is typically made by assigning a `score' to each DAG on the solution path 
using the Bayesian Information Criterion (BIC) or cross-validation, and choosing the DAG with 
the highest score. Another approach is to choose a specific value of $\lambda$ based on 
asymptotic normal approximation (see \cite{Shojaie:Michailidis:2010}). For the Bayesian 
approach, the posterior probabilities naturally assign a `score' for {\bf all the $2^{p \choose 2}$ 
DAGs}, not just the graphs on the solution path produced by the penalized likelihood methods. 
Of course, the entire space of DAGs is prohibitively large to search in high-dimensional settings. 
To address this, Ben-David et al. \cite{BLMR:2016} develop a computationally feasible approach 
which searches around the graphs on the penalized likelihood solution path by adding or 
removing edges, and demonstrate that significant improvement in accuracy can be obtained by 
searching beyond the penalized likelihood solution paths using posterior probabilities. Hence, 
this Bayesian procedure maintains the advantage of being able to do a principled broader search 
(for improved accuracy) in a computationally feasible way. One can extend this procedure by 
also searching on and around the solution paths of other methods, such as the CSCS method in 
\cite{KORR:2017}, and choose the graph with the maximum posterior probability. We implement 
such a Bayesian procedure in Section \ref{sec:illustration:graph:selection} and demonstrate the improvement 
in graph selection performance that can be obtained as compared to penalized likelihood approaches. 

\subsection{Comparison: Uncertainty quantification and prior information}

\noindent
While Bayesian methods naturally provide uncertainty quantification through the posterior distribution, it 
is crucial to establish the accuracy of this uncertainty quantification especially in modern 
high-dimensional settings. The high-dimensional asymptotic results in this paper provide justification for such uncertainty 
quantification using the Bayesian approach of \cite{BLMR:2016}. Uncertainty quantification 
for estimates produced by the penalized likelihood methods can be achieved through 
a CLT or through resampling methods such as bootstrap. To the best of our knowledge, 
a high-dimensional CLT, or results establishing high-dimensional accuracy of the bootstrap in this 
context are not available for the penalized likelihood based estimators in 
\cite{HLPL:2006, Shojaie:Michailidis:2010, KORR:2017}. 

A natural benefit of Bayesian approaches is the ability to incorporate prior knowledge. 
However, this can be done in a principled way only when the hyperparameters are interpretable, 
and the class of priors is flexible. The distributional and moment results 
in \cite{BLMR:2016} provide a natural interpretability for the hyperparameters $U$ and 
${\boldsymbol \alpha}$. Also, as mentioned in \cite{BLMR:2016}, a separate shape parameter 
$\alpha_i$ for each variable allows for differential shrinkage. 

\subsection{Comparison: Convergence rates for estimation of $\Omega$}

\noindent
In \cite{KORR:2017}, the authors provide convergence rates of the estimate of the precision 
matrix $\Omega$ resulting from CSCS, their penalized likelihood estimation procedure. The 
framework in \cite{Shojaie:Michailidis:2010} is restrictive for estimation of $\Omega$, as it 
assumes that $D_{ii} = 1$ for every $1 \leq i \leq p$. Also, to the best of our knowledge, 
high-dimensional asymptotic convergence rates are not available for the estimates obtained 
from the procedure in \cite{HLPL:2006}. Hence, in this section, we will undertake a comparison 
of the assumptions and convergence rates between the $\Omega$-estimate using the 
CSCS procedure in \cite{KORR:2017} and the posterior distribution convergence rate for 
$\Omega$ in Theorem E.1 in the Supplemental document. 

We start with a point-by-point comparison of the parallel/related assumptions used for 
these high-dimensional asymptotic results. 
\begin{enumerate}
\item For CSCS $p = p_n$ is assumed to be bounded above by a polynomial in $n$, whereas 
in this paper $p_n$ can grow much faster than a polynomial in $n$ (at an appropriate 
sub-exponential rate, see Assumption $2$). 
\item For CSCS the eigenvalues of the $\Omega_0$ are assumed to be uniformly 
bounded in $n$, whereas in this paper we allow the eigenvalues of $\Omega$ to grow with 
$n$ (see Assumption $1$). 
\item As with any $\ell_1$-penalized method, \cite{KORR:2017} use an incoherence 
condition for their asymptotic results. This condition is algebraically complex and 
hard to interpret. We do not need any such assumption for our asymptotic results. 
\item For CSCS mild assumptions are specified regarding the rate at which the penalty 
parameter $\lambda_n$ goes to zero, and relates to the total number of non-zero 
off-diagonal entries in the Cholesky factor of the true concentration matrix (we will denote 
this quantity by $m_n$). In this paper, we need to make analogous mild assumptions on the prior 
parameters $q_n$, $U_n$ and ${\boldsymbol \alpha}(\mathscr{D}_n)$ (see Assumptions $3$ and 
$5$). 
\item Recall that $s_n$ is the smallest (in absolute value) non-zero off-diagonal entry of $L_0$. 
For CSCS, it is assumed that $\frac{s_n}{\sqrt{d_n} \lambda_n} \rightarrow \infty$ as 
$n \rightarrow \infty$, where $\lambda_n$ is the penalty parameter, whereas we assume that
$\frac{s_n}{\sqrt{\eta_n d_n}} \rightarrow \infty$ (see Assumption 4 with $\epsilon_{0,n}$ as a 
constant for a fair comparison with CSCS).  There are other assumptions in \cite{KORR:2017} 
regarding the rate at which $\lambda_n$ goes to $0$, but these do not enable a 
direct comparison of the two rates for $s_n$. One can construct situations where the 
CSCS assumption on $s_n$ is weaker than Assumption 4, and vice-versa. 
\end{enumerate}

\noindent
As far as the convergence rate for the estimate/posterior of $\Omega$ is concerned, the 
convergence rate of the CSCS estimate is $m_n \lambda_n$, whereas the posterior 
convergence rate for $\Omega$ in Theorem E. 1 in the Supplemental document is 
$d_n^2 \sqrt{\frac{\log p_n}{n}}$ (treating $\epsilon_{0,n}$ as a constant for a fair 
comparison with CSCS). Using other assumptions regarding $\lambda_n$ in 
\cite{KORR:2017} it can be shown that $m_n^{3/2} \sqrt{\frac{\log p_n}{n}} = 
o(m_n \lambda_n)$. Hence, if $m_n^{3/2} > d_n^2$, in which case the 
Bayesian approach leads to a faster convergence rate. Of course, one can construct 
situations where $m_n^{3/2} < d_n^2$ and choose $\lambda_n$ such that CSCS would 
lead to a faster convergence rate than the Bayesian approach. Since $m_n$ is the 
{\it total} number of non-zeros in the true Cholesky factor one would expect that for a 
large majority of graphs, Theorem E.1 would lead to a faster convergence rate than 
CSCS.

\section{Experiments} \label{sec:experiments}

\subsection{Simulation I: Illustration of posterior ratio consistency} \label{sec:illustration:ratio:consistency}

\noindent
In this section, we illustrate the DAG selection consistency result in Theorems \ref{thm1} and \ref{thm2} 
using a simulation experiment. We consider $10$ different values of $p$ ranging from 
$250$ to $2500$, and choose $n = p/5$. Then, for each fixed $p$, we construct a $p \times p$ 
lower triangular matrix with diagonal entries $1$ and off-diagonal entries $0.5$. Then, each lower 
triangular entry is independently set to zero with a certain probability such that the expected 
value of non-zero entries for each column does not exceed $3$. We refer to this matrix as 
$L_0$. The matrix $L_0$ also gives us the true DAG $\mathscr D_0$. Next, we generate 
$n$ i.i.d. observations from the $N(\bm 0_p, (L_0^{-1})^T L_0^{-1})$ distribution, and set the 
hyperparameters as $U = I_p$ and $\alpha_i(\mathscr D) = \nu_i(\mathscr D) + 10$ for 
$i = 1,2,\ldots,p$. The above process ensures Assumptions 1-5 are satisfied. We then examine 
posterior ratio consistency under four different cases by computing the log posterior ratio of 
a ``non-true" graph $\mathscr D$ and $\mathscr D_0$ as follows. 
\begin{enumerate}
\item Case $1$: $\mathscr D$ is a subgraph of $\mathscr D_0$ and the number of total edges of 
$\mathscr D$ is exactly half of $\mathscr D_0$, i.e. $|E(\mathscr D)| = \frac 1 2 
|E(\mathscr D_0)|$. 
\item Case $2$: $\mathscr D$ is a supergraph of $\mathscr D_0$ and the number of total edges of 
$\mathscr D$ is exactly twice of $\mathscr D_0$, i.e. $|E(\mathscr D)| = 2 |E(\mathscr D_0)|$. 
\item Case $3$: $\mathscr D$ is not necessarily a subgraph of $\mathscr D_0$, but the number 
of total edges in $\mathscr D$ is half the number of total edges in $\mathscr D_0$. 
\item Case $4$: $\mathscr D$ is not necessarily a supergraph of $\mathscr D_0$, but the number 
of total edges in $\mathscr D$ is twice the number of total edges in $\mathscr D_0$. 
\end{enumerate}

\noindent
The log of the posterior probability ratio for various cases is provided in Table \ref{logratio}. 
As expected the log of the posterior probability ratio eventually decreases as $n$ 
becomes large in all four cases, thereby providing a numerical illustration of 
Theorems \ref{thm1} and \ref{thm2}. 
\begin{table}
	\centering
	\begin{tabular}{|c|c|c|c|c|c|}
	\hline
			 \mbox{ } & \mbox{ } & $\mathscr D \subset \mathscr D_0 $ & $\mathscr D \supset \mathscr D_0$  & \mbox{ } & \mbox{ } \\
		$p$ & $n$ &	$|E(\mathscr D)| = \frac 1 2 |E(\mathscr D_0)|$ & $|E(\mathscr D)| = 2|E(\mathscr D_0)|$ & $|E(\mathscr D)| = \frac 1 2 |E(\mathscr D_0)|$   & $|E(\mathscr D)| = 2|E(\mathscr D_0)|$ \\
	\hline
			250  & 50  & 38553  & -133007   & 24723  & -139677 \\
			500  & 100 & 93634  & -458799 & 51553  & -438377 \\
			750  & 150 & 41935  & -784866 & 60731  & -1042449  \\
			1000 & 200 & 249342  & -1118384  & -28657 & -1791276  \\
			1250 & 250 & 18847  & -1787260  & -245769 & -2633731  \\
			1500 & 300 & -79566 & -2603779  & -452125 & -3873151  \\
			1750 & 350 & -512894 & -2971286  & -455941 & -5808992  \\
			2000 & 400 & -443457 & -4082005  & -1388037  & -7139952  \\
			2250 & 450 & -558718 & -4533967  & -1883472  & -8744044  \\
			2500 & 500 & -571653 & -4708833  & -2644104  & -9910277  \\
	\hline
	\end{tabular}
	\caption{Log of posterior probability ratio for $\mathscr D$ and $\mathscr D_0$ for various 
	choices of the ``non-true" DAG $\mathscr D$. Here $\mathscr D_0$ denotes the true 
	underlying DAG.}
	\label{logratio}
\end{table}

\subsection{Simulation II: Illustration of graph selection} \label{sec:illustration:graph:selection}

\noindent
In this section, we perform a simulation experiment to illustrate the potential advantages of using the hybrid Bayesian graph 
selection approach outlined in Section \ref{sec:graph:complexity:accuracy}. We consider $7$ values of $p$ ranging from 
$2500$ to $4000$, with $n = p/5$. For each fixed $p$, the Cholesky factor $L_0$ of the true concentration matrix, 
and the subsequent dataset, is generated by the same mechanism as in Section \ref{sec:illustration:ratio:consistency}. Then, we 
perform graph selection using the four procedures outlined below. 
\begin{enumerate}
\item {\it Lasso-DAG BIC path search}: We implement the Lasso-DAG approach in \cite{Shojaie:Michailidis:2010} discussed in 
Section \ref{sec:brief:description}. The penalty parameter $\lambda$ is varied on a grid so that the resulting graphs range from 
approximately three times the edges compared to the true graph with approximately one third edges compared 
to the true graph. We then select the best graph according to the ``BIC"-like measure defined as 
\begin{equation} \label{BIC}
BIC(\lambda) = ntr(S\hat{\Omega}) - n\log|\hat{\Omega}| + \log n * E,
\end{equation}

\noindent
where $\widehat{L}$ is the resulting estimator from Lasso-DAG, $E$ denotes the total numbers of non-zero entries in $\hat{L}$ 
and $\hat{\Omega} = \hat{L}^T\hat{L}.$ 
\item {\it Lasso-DAG with quantile based tuning}: We again implement the Lasso-DAG approach in \cite{Shojaie:Michailidis:2010}, 
but choose penalty parameters (separate for each variable $i$) given by $\lambda_i(\alpha) = 
2 n^{-\frac 1 2}Z^*_{\frac{0.1}{2p(i-1)}}$, where $Z_q^*$ denotes the $(1-q)^{th}$ quantile of the standard normal distribution. 
This choice is justified in \cite{Shojaie:Michailidis:2010} based on asymptotic considerations.
\item {\it CSCS BIC path search}: We implement the CSCS approach \cite{KORR:2017} discussed in 
Section \ref{sec:brief:description}. The penalty parameter $\lambda$ is varied on a grid so that the 
resulting graphs range from three times the edges compared to the true graph with one third edges 
compared to the true graph. The best graph us selected using the ``BIC"-like measure in (\ref{BIC}). 
\item {\it Bayesian approach}: We construct two sets of candidate graphs as follows. 
\begin{enumerate}
\item All the graphs on the solution paths for Lasso-DAG and CSCS are included in the candidate set. To increase the search 
range, we generate additional graphs by thresholding the modified Cholesky factor of $(S + 0.5 I)^{-1}$ ($S$ is the sample 
covariance matrix) to get a sequence of $300$ additional graphs, and include them in the candidate set. We then search around 
all the above graphs using Shotgun Stochastic Search to generate even more candidate graphs. Then we implement Algorithm 
A.$8$ in (\cite{Koller:Friedman:09}), the Greedy Hill-climbing algorithm, to our candidate graphs. For each graph, this particular 
search procedure first generates a new DAG by adding one random edge and only chooses it if the new DAG has a higher 
posterior score. Then, we generate another graph by deleting one random edge from the chosen DAG and select the one with 
higher score. We repeat the whole process $20$ times for every graph in the previous candidate set and all the chosen DAGs 
are included in the candidate set. 
\item We combine Algorithm $18.1$ in (\cite{Koller:Friedman:09}) and the idea of cross-validation to form our second set of 
candidate graphs. The original data set of $n$ observations is randomly partitioned into $10$ equal sized subsets. Of the $10$ 
subsets, a single subset is excluded, and the remaining $9$ subsets are used as our new sample. The same thresholding 
procedure to generate $300$ graphs is performed for the new sample covariance matrix. The process is then repeated $10$ 
times, with each of the $10$ subsamples removed exactly once. We then have a total of $3000$ graphs as the second candidate 
set.
\end{enumerate}

\noindent
The log posterior probabilities are computed for all graphs in the two candidate sets, and the graph with the highest probability is 
chosen. 
\end{enumerate}

\noindent
The model selection performance of these four methods is then compared using several different measures of structure such as 
positive predictive value, true positive rate and false positive rate (average over $20$ independent repetitions). 
Positive Predictive Value (PPV) represents the proportion of true edges among all the edges detected by the given procedure, 
True Positive Rate (TPR) measures the proportion of true edges detected by the given procedure among all the edges 
from the true graph, and False Positive Rate (FPR) represents the proportion of false edges detected by the given procedure 
among all the non-edges in the true graph. One would like the PPV and TPR values to be as close to $1$ as possible, and the 
FPR value to be as close to $0$ as possible. The results are provided in Table \ref{my-label}. It is clear that the Bayesian 
approach outperforms the penalized likelihood approaches based on all measures. The PPV values for the Bayesian approach 
are all above $0.95$, while the ones for the penalized likelihood approaches are around $0.1$. The TPR values for the Bayesian 
approach are all above $0.39$, while the ones for the penalized likelihood approaches are all below $0.21$. The FPR values for 
the Bayesian approach are all significantly smaller than the penalized approaches. Overall, this experiment 
illustrates that the Bayesian approach can be used for a broader yet computationally feasible graph search, and can lead 
to a significant improvement in graph selection performance. 
\begin{table}[H]
	\small
	\centering
	\caption{Model selection performance table}
	\label{my-label}
	\scalebox{0.75}{
		\begin{tabular}{|c|c|c|c|c|c|c|c|c|c|c|c|c|c|}
			\hline
			\multicolumn{2}{|c|}{\multirow{2}{*}{}} & \multicolumn{3}{c|}{Lasso-DAG}       & \multicolumn{3}{c|}{Lasso-DAG}              & \multicolumn{3}{c|}{CSCS}            & \multicolumn{3}{c|}{Bayesian}                               \\ \cline{3-14} 
			\multicolumn{2}{|c|}{}                  & \multicolumn{3}{c|}{BIC path search} & \multicolumn{3}{c|}{Quantile-based lambdas} & \multicolumn{3}{c|}{BIC path search} & \multicolumn{3}{c|}{Log-score path search}                  \\ \hline
			p                   & n                 & PPV        & TPR        & FPR        & PPV           & TPR          & FPR          & PPV        & TPR        & FPR        & PPV       & TPR       & FPR                                 \\ \hline
			2500                & 500               & 0.0822  & 0.2007  & 0.0027  & 0.1031     & 0.1820    & 0.0019    & 0.0864  & 0.2046  & 0.0025  & 0.9933 & 0.3956 & $3.76 \times 10^{-6}$    \\ 
			2750                & 550               & 0.0530  & 0.1730  & 0.0034  & 0.0649    & 0.1626    & 0.0026    & 0.0816  & 0.1849  & 0.0023  & 0.9878 & 0.4578 & $6.75 \times 10^{-6}$  \\ 
			3000                & 600               & 0.0604  & 0.1868  & 0.0029  & 0.0713     & 0.1706    & 0.0022    & 0.0780  & 0.1828  & 0.0021  & 0.9927 & 0.4525 & $3.26 \times 10^{-6}$  \\ 
			3250                & 650               & 0.0619  & 0.1713  & 0.0024  & 0.0712     & 0.1562    & 0.0019     & 0.0803  & 0.1843  & 0.0019  & 0.9686 & 0.5023 & $1.73 \times 10^{-5}$ \\ 
			3500                & 700               & 0.0668  & 0.1765  & 0.0021  & 0.0782     & 0.1646   & 0.0017     & 0.0815  & 0.1884  & 0.0018   & 0.9719 & 0.5247 & $1.31  \times 10^{-5}$  \\ 
			3750                & 750               & 0.0684  & 0.1741  & 0.0019  & 0.0787     & 0.1643    & 0.0016     & 0.0756  & 0.1745  & 0.0017  & 0.9782 & 0.5242 & $9.47 \times 10^{-6}$ \\ 
			4000               & 800               & 0.0681  & 0.1865  & 0.0019  & 0.0770     & 0.1758   & 0.0016     & 0.0732  & 0.1654  & 0.0016   & 0.9570 & 0.5653 & $1.91 \times 10^{-5}$  \\ \hline
		\end{tabular}
	}
\end{table}

\section{Acknowledgment} \label{sec:acknowledgment}
The authors are grateful to anonymous referees and an Associate Editor for their encouraging and helpful comments which substantially improved the paper.

\bibliographystyle{plain}
\bibliography{references}

\begin{frontmatter}
	\title{Supplemental Document for ``Posterior graph selection and estimation consistency for 
		high-dimensional Bayesian DAG models"}
	\runtitle{Posterior consistency for Bayesian DAG models}
	
	\begin{aug}
		\author{\fnms{Xuan} \snm{Cao}\ead[label=e1s]{caoxuan@ufl.edu}},
		\author{\fnms{Kshitij} \snm{Khare}\ead[label=e2s]{kdkhare@stat.ufl.edu}}
		\and
		\author{\fnms{Malay} \snm{Ghosh}\ead[label=e3s]{ghoshm@stat.ufl.edu}}
		
		\runauthor{X. Cao et al.}
		
		\affiliation{University of Florida}
		
		
		\address{Department of Statistics\\
			University of Florida\\
			102 Griffin-Floyd Hall\\
			Gainesville, FL 32611\\
			\printead{e1s}\\
			\phantom{E-mail:\ }\printead*{e2s}\\
			\phantom{E-mail:\ }\printead*{e3s}}
	\end{aug}
\end{frontmatter}

\renewcommand{\theequation}{A.\arabic{equation}}
\setcounter{equation}{0}

\renewcommand\thesection{\Alph{section}}

\section{Proof of Lemma \ref{newlemma1}}

\noindent
By (\ref{m3}), the posterior ratio is given by 
\begin{align} \label{m4}
\begin{split}
&\frac{\pi({\mathscr{D}}|\bm{Y})}{\pi({\mathscr{D}}_0|\bm{Y})}\\ =& \frac{\pi({\mathscr{D}})}{\pi({\mathscr{D}}_0)} \frac{z_{\mathscr{D}}(U+nS, n+\bm{\alpha}(\mathscr D)) / z_{ \mathscr{D}}(U, \bm{\alpha}(\mathscr D))}{z_{{\mathscr{D}}_0}(U+nS, n+\bm{\alpha}(\mathscr D)) / z_{{\mathscr{D}}_0}(U, \bm{\alpha}(\mathscr D))} \\
=& \left( \prod_{i=1}^{p -1} (\frac{q}{1-q})^{\nu_i(\mathscr{D}) - \nu_i({{\mathscr{D}}_0})} \right) \frac{z_{\mathscr{D}}(U+nS, n+\bm{\alpha}(\mathscr D)) / z_{\mathscr{D}}(U, \bm{\alpha}(\mathscr D))}{z_{\mathscr{D}_0}(U+nS, n+\bm{\alpha}(\mathscr D)) / z_{\mathscr{D}_0}(U, \bm{\alpha}(\mathscr D))}.
\end{split}
\end{align}

\noindent
Using \cite[eq. (11)]{BLMR:2016}, we get 
\begin{align*} 
\begin{split}
&\frac{z_{\mathscr{D}}(U+nS, n+\bm{\alpha}(\mathscr D))}{z_{\mathscr{D}}(U,\bm{\alpha}(\mathscr D))}\\
=&\prod_{i = 1}^p \frac{\Gamma(\frac{n + \alpha_i(\mathscr D) - \nu_i(\mathscr{D}) -2}{2}) 2^{\frac{\alpha_i(\mathscr D) + n-2}{2}} \frac{|(U+nS)_{\mathscr{D}}^{>i}|^{\frac{n+\alpha_i(\mathscr D) - \nu_i(\mathscr{D}) -3}{2}}} { |(U+nS)_{\mathscr{D}}^{\ge i}|^{\frac{n+\alpha_i(\mathscr D) - \nu_i(\mathscr{D}) - 2}{2} }}}{\Gamma( \frac{\alpha_i(\mathscr D) - \nu_i(\mathscr{D}) -2}{2}) 2^{\frac{\alpha_i(\mathscr D) - 2}{2}} \frac{|U_{\mathscr{D}}^{>i}|^{\frac{\alpha_i(\mathscr D) - \nu_i(\mathscr{D}) -3}{2}}} {|U_{\mathscr{D}}^{\ge i}|^{\frac{\alpha_i(\mathscr D) - \nu_i(\mathscr{D}) -2}{2}}}}\\
=& \prod_{i=1}^p \frac{\Gamma(\frac{n}{2} + \frac{c_i(\mathscr D)}{2} - 1)}{\Gamma(\frac{c_i(\mathscr D)}{2} - 1)} 2^{\frac n 2} \frac{n^{-\frac{n+\alpha_i(\mathscr D)-2}{2}}}{\frac{|U_{\mathscr{D}}^{>i}|^{\frac{\alpha_i(\mathscr D) - \nu_i(\mathscr{D}) -3}{2}}} {|U_{\mathscr{D}}^{\ge i}|^{\frac{\alpha_i(\mathscr D) - \nu_i(\mathscr{D}) -2}{2}}}}\frac{|\tilde{S}_{\mathscr{D}}^{>i}|^{\frac{n+c_i(\mathscr D)-3}{2}}}{|\tilde{S}_{\mathscr{D}}^{\ge i}|^{\frac{n+c_i(\mathscr D)-2}{2}}}.
\end{split}
\end{align*}
Note that $$|U_{\mathscr{D}}^{\ge i}| = |U_{\mathscr{D}}^{>i}|\left(U_{ii} - (U_{\mathscr D \cdot i}^>)^T (U_{\mathscr{D}}^{>i})^{-1} U_{\mathscr D \cdot i}^>\right) = \left|U_{\mathscr{D}}^{>i}\right|\left(\left[(U_{\mathscr{D}}^{\ge i})^{-1}\right]_{ii}\right)^{-1},$$ following from Assumption 5,
\begin{align*}
\frac{|U_{\mathscr{D}}^{>i}|^{\frac{\alpha_i(\mathscr D) - \nu_i(\mathscr{D}) -3}{2}}} {|U_{\mathscr{D}}^{\ge i}|^{\frac{\alpha_i(\mathscr D) - \nu_i(\mathscr{D}) -2}{2}}} &= \frac {\left[(U_{\mathscr{D}}^{\ge i})^{-1}\right]_{ii}^{\frac{\alpha_i(\mathscr D) - \nu_i(\mathscr{D}) -3}{2}}} {\left|U_{\mathscr{D}}^{\ge i}\right|^{\frac12}} \\
&\ge \left(\frac{1}{\delta_2}\right)^{\frac{\alpha_i(\mathscr D) - \nu_i(\mathscr{D}) -3}{2}} \times \left(\frac{1}{\delta_2}\right)^{\frac{\nu_i(\mathscr D) + 1}{2}}\\
&\ge \left(\frac 1 {\delta_2}\right)^{\frac {\alpha_i(\mathscr D) - 2} 2}.
\end{align*}
Similarly, $$\frac{|U_{\mathscr{D}}^{>i}|^{\frac{\alpha_i(\mathscr D) - \nu_i(\mathscr{D}) -3}{2}}} {|U_{\mathscr{D}}^{\ge i}|^{\frac{\alpha_i(\mathscr D) - \nu_i(\mathscr{D}) -2}{2}}} \le \left(\frac 1 {\delta_1}\right)^{\frac {\alpha_i(\mathscr D) - 2} 2}.$$

\noindent	
Hence, 
\begin{equation} \label{m4.1}
\frac{z_{\mathscr{D}}(U+nS, n+\bm{\alpha}(\mathscr D))}{z_{\mathscr{D}}(U,\bm{\alpha}(\mathscr D))} \le \prod_{i=1}^p \frac{\Gamma(\frac{n}{2} + \frac{c_i(\mathscr D)}{2} - 1)}{\Gamma(\frac{c_i(\mathscr D)}{2} - 1)} 2^{\frac n 2} \frac{n^{-\frac{n+\alpha_i(\mathscr D)-2}{2}}}{{\delta_2}^{-\frac {\alpha_i(\mathscr D) - 2} 2}}\frac{|\tilde{S}_{\mathscr{D}}^{>i}|^{\frac{n+c_i(\mathscr D)-3}{2}}}{|\tilde{S}_{\mathscr{D}}^{\ge i}|^{\frac{n+c_i(\mathscr D)-2}{2}}},
\end{equation}

\noindent
and
\begin{equation} \label{m4.2}
\frac{z_{\mathscr{D}}(U+nS, n+\bm{\alpha}(\mathscr D))}{z_{\mathscr{D}}(U,
	\bm{\alpha}(\mathscr D))} \ge \prod_{i=1}^p \frac{\Gamma(\frac{n}{2} + 
	\frac{c_i(\mathscr D)}{2} - 1)}{\Gamma(\frac{c_i(\mathscr D)}{2} - 1)} 2^{\frac n 2} 
\frac{n^{-\frac{n+\alpha_i(\mathscr D)-2}{2}}}{{\delta_1}^{-\frac {\alpha_i(\mathscr D) - 
			2} 2}}\frac{|\tilde{S}_{\mathscr{D}}^{>i}|^{\frac{n+c_i(\mathscr D)-3}{2}}}
{|\tilde{S}_{\mathscr{D}}^{\ge i}|^{\frac{n+c_i(\mathscr D)-2}{2}}}. 
\end{equation}

\noindent
Note that all the bounds above hold for an arbitrary DAG $\mathscr{D}$, and in 
particular for the true DAG $\mathscr{D}_0$. 

By Assumption 5, it follows that $2 < c_i(\mathscr{D}), c_i(\mathscr{D}_0) < c$. Using 
the fact that 
$$
\sqrt{x + \frac{1}{4}} \leq \frac{\Gamma (x+1)}{\Gamma \left( x+\frac{1}{2} \right)} 
\leq \sqrt{x + \frac{1}{2}} 
$$

\noindent
for $x > 0$ (see \cite{Watson:1959}), it follows that for a large enough constant $M'$ 
and large enough $n$, we have 
\begin{align} \label{new5}
\frac{\frac{\Gamma(\frac{n}{2} + \frac{c_i(\mathscr D)}{2} - 1)}{\Gamma(\frac{c_i(\mathscr D)}{2} - 1)}}{\frac{\Gamma(\frac{n}{2} + \frac{c_i(\mathscr D_0)}{2} - 1)}{\Gamma(\frac{c_i(\mathscr D_0)}{2} - 1)}} \le M'n^c,
\end{align}

\noindent
Using Assumption 5, (\ref{m4.1}), (\ref{m4.2}), (\ref{new5}) and the definition 
of $d = d_n$, it follows that 
\begin{align*} 
\frac{\frac{z_{\mathscr{D}}(U+nS, n+\bm{\alpha}(\mathscr D))}{z_{\mathscr{D}}(U,
		\bm{\alpha}(\mathscr D))}}{\frac{z_{\mathscr{D}_0}(U+nS, n+\bm{\alpha}(\mathscr 
		D_0))}{z_{\mathscr{D}_0}(U,\bm{\alpha}(\mathscr D_0))}} \le \prod_{i=1}^{p} M
\left(\frac{\delta_2}{\delta_1}\right)^{\frac d 2} n^{2c} \left(\sqrt {\frac{\delta_2} n}
\right)^{\nu_i({\mathscr{D}}) - \nu_i({\mathscr{D}}_0)} \frac{|\tilde{S}_{\mathscr{D}}^{>i}|
	^{\frac{n+c_i(\mathscr D)-3}{2}}}{|\tilde{S}_{\mathscr{D}}^{\ge i}|^{\frac{n+c_i(\mathscr 
			D)-2}{2}}}.
\end{align*}

\noindent
where $M = M' \max (\delta_2^c, 1) \max (\delta_1^{-c}, 1)$. It follows from (\ref{m4}) 
and $\nu_p (\mathscr{D}) = \nu_p (\mathscr{D_0}) = 0$ that 
\begin{align*} 
\begin{split}
\frac{\pi({\mathscr{D}}|\bm{Y})}{\pi({\mathscr{D}}_0|\bm{Y})} \le \prod_{i=1}^{p} M
\left(\frac{\delta_2}{\delta_1}\right)^{\frac d 2} n^{2c}\left (\sqrt {\frac{\delta_2} n}\frac{q}
{1-q}\right)^{\nu_i({\mathscr{D}}) - \nu_i({\mathscr{D}}_0)} 
\frac{\frac{|\tilde{S}_{\mathscr{D}}^{>i}|^{\frac{n+c_i(\mathscr D)-3}{2}}}
	{|\tilde{S}_{\mathscr{D}}^{\ge i}|^{\frac{n+c_i(\mathscr D)-2}{2}}}}{\frac{
		|\tilde{S}_{{\mathscr{D}}_0}^{>i}|^{\frac{n+c_i(\mathscr D_0)-3}{2}}}
	{|\tilde{S}_{{\mathscr{D}}_0}^{\ge i}|^{\frac{n+c_i(\mathscr D_0)-2}{2}}}}.
\end{split}
\end{align*}

\noindent
Next, note that
\begin{align*} 
|\tilde{S}_{\mathscr{D}}^{\ge i}| &= |\tilde{S}_{\mathscr{D}}^{>i}| |\tilde{S}_{ii} - 
(\tilde{S}_{\mathscr D \cdot i}^>)^T (\tilde{S}_{\mathscr{D}}^{>i})^{-1} 
\tilde{S}_{\mathscr D \cdot i}^>| \\
&=  |\tilde{S}_{\mathscr{D}}^{> i}| \tilde{S}_{i|pa_i({\mathscr{D}})}.
\end{align*}

\noindent
Hence, the upper bound can be simplified as
\begin{align} \label{m5}
\begin{split}
\frac{\pi({\mathscr{D}}|\bm{Y})}{\pi({\mathscr{D}}_0|\bm{Y})} \le& \prod_{i=1}^{p} M
\left(\frac{\delta_2}{\delta_1}\right)^{\frac d 2} n^{2c}\left (\sqrt {\frac{\delta_2} n}\frac{q}
{1-q}\right)^{\nu_i({\mathscr{D}}) - \nu_i({\mathscr{D}}_0)} 
\frac{|\tilde{S}_{\mathscr{D}_0}^{\ge i}|^{\frac12}}
{|\tilde{S}_{\mathscr{D}}^{\ge i}|^{\frac12}} \\
&\times \frac{\left(\tilde{S}_{i|pa_i({\mathscr{D}}_0)}\right)^{\frac{n+c_i(\mathscr 
			D_0)-3}{2}}}{\left(\tilde{S}_{i|pa_i({\mathscr{D}})}\right)^{\frac{n+c_i(\mathscr D)-3}{2}}}. 
\end{split}
\end{align}

\section{Proof of Proposition \ref{f1}} 

\noindent
(a) Note that under the true model, we have
\begin{align*}
&\bm{Y}\sim MVN(0, (L_0(D_0)^{-1}L_0^T)^{-1}), (L_0,D_0) \in \Theta_{{\mathscr{D}}_0},\\
&Var \bm{Y} = (L_0(D_0)^{-1}L_0^T)^{-1},\\
&Var(L_0^T \bm{Y})=D.
\end{align*}
Let
$$\bm{Z}=L_0^T\bm{Y}=\begin{pmatrix}
Y_1 + \sum_{i \ge 2}(L_0)_{i1}Y_i\\
\vdots \\
Y_j + \sum_{i \ge j+1}(L_0)_{ij}Y_i\\
\vdots\\
Y_p
\end{pmatrix}.$$

\noindent	
Since $Var(\bm{Z}) = D$, $Z_1, \cdots, Z_p$ are mutually independent. Since 
$\bm{Y} = (L_0^T)^{-1} \bm{Z}$, it follows that 
$$
Z_i \perp \{Y_{i+1}, \cdots, Y_p\}. 
$$

\noindent
Suppose $M \supseteq pa_{i}({\mathscr{D}}_0), M \subseteq \{i+1, \cdots, p\}$. Since 
$(L_0)_{ki} = 0$ for $k \in M \setminus pa_{i}({\mathscr{D}}_0)$, it follows that 
\begin{align*}
(\Sigma_0)_{i|M} &= Var(Y_i |\bm{Y}_M)\\
&= Var(Y_i+\sum_{k \in pa_{i}({\mathscr{D}}_0)} (L_0)_{ki}Y_k |\bm{Y_M})\\
&= Var(Z_i |\bm{Y_M}) = Var(Z_i) = (D_0)_{ii} = 
(\Sigma_0)_{i|pa_{i}({\mathscr{D}}_0)},
\end{align*}

\noindent
The result follows, since by hypothesis $pa_{i}({\mathscr{D}}) \supseteq 
pa_{i}({\mathscr{D}}_0)$, and by definition $pa_{i}({\mathscr{D}}) \subseteq 
\{i+1, \ldots, p\}$. 

(b) Let $\bm{Z}$ be as defined in the proof of part (a). since $pa_{i}(\mathscr{D}) 
\subseteq pa_i (\mathscr{D}_0)$, we have 
$$
Z_i \perp (\bm{Y}_{pa_i (\mathscr{D}_0) \backslash pa_i (\mathscr{D})}, 
\bm{Y}_{pa_i (\mathscr{D})}). 
$$

\noindent
Hence, 
\begin{align*}
(\Sigma_0)_{i|pa_{i}(\mathscr{D})} &= Var \left( Z_i - \sum_{j \in 
	pa_{i}({\mathscr{D}}_0)} (L_0)_{ji} Y_j | \bm{Y}_{pa_{i}(\mathscr{D})} \right) \\
& = Var \left( Z_i - \sum_{j \in pa_{i}({\mathscr{D}}_0)\backslash pa_{i}({\mathscr{D}})}(L_0)_{ji}y_j | \bm{Y}_{pa_{i}(\mathscr{D})} \right)\\
& = (D_0)_{ii} + Var \left( \sum_{j \in pa_{i}({\mathscr{D}}_0)\backslash pa_{i}({\mathscr{D}})}(L_0)_{ji}y_j | \bm{Y}_{pa_{i}(\mathscr{D})} \right)\\
& \ge  (\Sigma_0)_{i|pa_i (\mathscr D _0)} + L_0^T Var(\bm{Y}_{pa_{i}({\mathscr{D}}_0)\backslash pa_{i}({\mathscr{D}})} | \bm{Y}_{pa_{i}(\mathscr{D})})L\\
& \ge (\Sigma_0)_{i|pa_i (\mathscr D _0)} + \epsilon_{0,n}\sum_{j \in pa_{i}({\mathscr{D}}_0)\backslash pa_{i}({\mathscr{D}})}(L_0)_{ji}^2\\
& \ge (\Sigma_0)_{i|pa_i (\mathscr D _0)} + \epsilon_{0,n}(\nu_i(\mathscr{D}_0) - \nu_i(\mathscr{D}))s^2.
\end{align*}

\noindent
The second last inequality follows from Assumption 1.

\section{Proof of Lemma \ref{lm5}}

\noindent
Let $pa_i^* = pa_i({\mathscr{D}} ) \cap pa_i({\mathscr{D}}_0 )$, $\nu_i^* = |
pa_i({\mathscr{D}}^* )|$ and ${{\mathscr{D}}^*} $ be an arbitrary DAG with 
$pa_i({{\mathscr{D}}^*} ) = pa_i^*$. It follows that $pa_i^* \subset 
pa_i({\mathscr{D}} _0)$ and $pa_i^* \subset pa_i({\mathscr{D}} )$. Recall by 
Lemma \ref{newlemma1} that, 
\begin{align*} 
B_i({\mathscr{D}} , {\mathscr{D}}_0 ) =&  M\left(\frac{\delta_2}{\delta_1}\right)^{\frac d 2} n^{2c}\left (\sqrt {\frac{\delta_2} n}\frac{q}{1-q}\right)^{\nu_i({\mathscr{D}}) - \nu_i({\mathscr{D}}_0)}  \\
&\times \frac{|\tilde{S}_{\mathscr{D}_0}^{\ge i}|^{\frac12}}{|\tilde{S}_{\mathscr{D}}^{\ge i}|^{\frac12}} \frac{\left(\tilde{S}_{i|pa_i({\mathscr{D}}_0)}\right)^{\frac{n+c_i(\mathscr D_0)-3}{2}}}{\left(\tilde{S}_{i|pa_i({\mathscr{D}})}\right)^{\frac{n+c_i(\mathscr D)-3}{2}}}. 
\end{align*}

\noindent
Therefore,
\begin{align} \label{pp3}
\begin{split}
&B_i({\mathscr{D}} , {\mathscr{D}}_0 )\\
\le& M\left(\frac{\delta_2}{\delta_1}\right)^{\frac d 2} n^{2c} \left (\sqrt {\frac{\delta_2} n}
\frac{q}{1-q}\right)^{\nu_i({\mathscr{D}}) - \nu_i({\mathscr{D}}^*)} \\
&\times  \frac{|\tilde{S}_{\mathscr{D}^*}^{\ge i}|^{\frac12}}{|\tilde{S}_{\mathscr{D}}^{\ge 
		i}|^{\frac12}} \frac{\left(\tilde{S}_{i|pa_i({\mathscr{D}}^*)}\right)^{\frac{n+c_i(\mathscr 
			D^*)-3}{2}}}{\left(\tilde{S}_{i|pa_i({\mathscr{D}})}\right)^{\frac{n+c_i(\mathscr D)-3}{2}}}\\
&\times M\left(\frac{\delta_2}{\delta_1}\right)^{\frac d 2} n^{2c} \left (\sqrt 
{\frac{\delta_2} n}\frac{q}{1-q}\right)^{\nu_i({\mathscr{D}^*}) - \nu_i({\mathscr{D}}_0)}\\
&\times \frac{|\tilde{S}_{\mathscr{D}_0}^{\ge i}|^{\frac12}}{|\tilde{S}_{\mathscr{D}^*}
	^{\ge i}|^{\frac12}} \frac{\left(\tilde{S}_{i|pa_i({\mathscr{D}}_0)}\right)^{\frac{n
			+c_i(\mathscr D_0)-3}{2}}}{\left(\tilde{S}_{i|pa_i({\mathscr{D}^*})}\right)^{\frac{n+c_i(\mathscr D^*)-3}{2}}} \\
=& B_i({\mathscr{D}} , {{\mathscr{D}}^*} ) B_i({{\mathscr{D}}^*} , {\mathscr{D}}_0 ).	
\end{split}
\end{align}

\noindent
Note that $pa_i^* \subseteq pa_i (\mathscr{D})$. If $\nu_i({\mathscr{D}} ) \le 
3\nu_i({\mathscr{D}}^* ) + 2$, then using the fact that $pa_i^* \subseteq pa_i 
(\mathscr{D}_0)$ and following exactly the same arguments as in the proof of 
Lemma \ref{lm1}, it can be shown that 
\begin{align*} 
B_i({\mathscr{D}} , {{\mathscr{D}}^*} ) < \epsilon_{1,n}, \mbox{ for } n \ge N_1.
\end{align*}

\noindent
If $\nu_i({\mathscr{D}}) > 3\nu_i({\mathscr{D}}^* ) + 2$, then using the 
fact that $pa_i^* \subseteq pa_i (\mathscr{D}_0)$ and following exactly the same 
arguments in the proofs of Lemma \ref{lm2} and Lemma \ref{lm3}, it can be shown that 
$$
B_i({\mathscr{D}} , {{\mathscr{D}}^*} )  < \mbox{max}(\epsilon_{2,n},\epsilon_{3,n}), 
\mbox{ for } n \ge \mbox{max}(N_2,N_3). 
$$

\noindent
Hence, 
$$
B_i({\mathscr{D}} ,{{\mathscr{D}}^*} ) \le \mbox{max}(\epsilon_{1,n},
\epsilon_{2,n},\epsilon_{3,n}), \mbox{ for } n \ge \mbox{max}(N_1, N_2, N_3). 
$$

\noindent
Since $pa_i^* \subset pa_i({\mathscr{D}}_0 )$, it follows by Lemma \ref{lm4} that 
$$
B_i({{\mathscr{D}}^*} , {\mathscr{D}}_0 ) \le \epsilon_{4, n}, \mbox{ for } n \ge N_4. 
$$ 

\noindent
By (\ref{pp3}), we get 
\begin{align*} 
B_i({\mathscr{D}} , {\mathscr{D}}_0 ) \le \mbox{max}(\epsilon_{1,n},\epsilon_{2,n},
\epsilon_{3,n})\epsilon_{4,n}, \mbox{ for } n \ge \max (N_1, N_2, N_3, N_4). 
\end{align*}

\section{Proof of Theorem \ref{thm4}}

\noindent
Note that 
\begin{equation} \label{posterior1}
\pi(\mathscr D | \bm Y) \propto \pi(\mathscr D) \times \int\int \prod_{j = 1}^{p-1} \left\{D_{jj}^{-(\frac n 2 + 1)} \left( \prod_{i > j, (i,j) \in 
	E}L_{ij}^{2r} \right) e^{-\frac{nL_{.j}^TSL_{.j}}{2D_{jj}}} \right\} dL dD,
\end{equation}

\noindent
where $L_{.j}^TSL_{.j} = (L_{\mathscr D.j}^\ge)^TS_{\mathscr D}^{\ge j}L_{\mathscr D.j}^\ge.$ Therefore,
\begin{align*}
\pi(\mathscr D | \bm Y) \propto \pi(\mathscr D)\times\int\int \prod_{j = 1}^{p-1} \left\{D_{jj}^{-(\frac n 2 + 1)} \left( \prod_{i \in pa_j(\mathscr D)}L_{ij}^{2r} \right) e^{-\frac{n \left(L_{\mathscr D.j}^\ge\right)^TS_{\mathscr D}^{\ge j}L_{\mathscr D.j}^\ge}{2D_{jj}}} \right\} dL dD.
\end{align*}
Note that 
\begin{align*}
\left(L_{\mathscr D.j}^\ge\right)^TS_{\mathscr D}^{\ge j}L_{\mathscr D.j}^\ge = \left(1,\left(L_{\mathscr D.j}^>\right)^T\right) \times \quad
\begin{pmatrix} 
S_{jj} & \left(S_{\mathscr D.j}^>\right)^T \\
S_{\mathscr D.j}^> & S_\mathscr D^{>j}
\end{pmatrix}
\quad
\times  \left(1,L_{\mathscr D.j}^>\right).
\end{align*}
Therefore, we have
\begin{align*}
&\int \left( \prod_{i \in pa_j(\mathscr D)}L_{ij}^{2r} \right) e^{-\frac{n \left(L_{\mathscr D.j}^\ge\right)^TS_{\mathscr D}^{\ge j}L_{\mathscr D.j}^\ge}{2D_{jj}}} dL\\
=& \int \prod_{i \in pa_j(\mathscr D)} \left(\left[L_{\mathscr D.j}^>\right]_i\right)^{2r}\exp\left\{-\frac{\left(L_{\mathscr D.j}^> + \left(S_\mathscr D^{>j}\right)^{-1}S_{\mathscr D.j}^>\right)^TS_\mathscr D^{>j}\left(L_{\mathscr D.j}^> + \left(S_\mathscr D^{>j}\right)^{-1}S_{\mathscr D.j}^>\right)}{\frac{2D_{jj}}{n}}\right\}\\
&\times \exp\left\{-\frac{S_{jj} - \left(S_{\mathscr D.j}^>\right)^T\left(S_\mathscr D^{>j}\right)^{-1}S_{\mathscr D.j}^>}{\frac{2D_{jj}}{n}}\right\}dL.
\end{align*}

\noindent
Note by hypothesis that the number of edges in $\mathscr{D}$ is bounded by $d \left( \frac{n}{\log p} \right)^{\frac{1}{2(2+k)}}$. It 
follows by Assumption 1 and Assumption 2 that $\frac{\nu_j(\mathscr D)\sqrt{\frac{\log p}n}}{\epsilon_{0,n}^4} \rightarrow 0,$ as 
$n \rightarrow \infty$. Using this, we obtain 
$$
\frac{2}{\epsilon_{0,n}} \ge \lVert S_\mathscr D^{>j}  \rVert_{(2,2)} \ge \frac{\epsilon_{0,n}}{2}. 
$$

\noindent
It follows from \cite[Lemma 6]{Johnson:Rossell:supplemental:2012} that 
\begin{align*}
&\int \left( \prod_{i \in pa_j(\mathscr D)}L_{ij}^{2r} \right) e^{-\frac{n \left(L_{\mathscr D.j}^\ge\right)^TS_{\mathscr D}^{\ge j}L_{\mathscr D.j}^\ge}{2D_{jj}}} dL\\
\le &\left(\frac{4\pi}{n\epsilon_{0,n}}D_{jj}\right)^{\frac{\nu_j(\mathscr D)}{2}}\times\left(\frac{4\left(S_{\mathscr D.j}^>\right)^T\left(S_\mathscr D^{>j}\right)^{-2}S_{\mathscr D.j}^>}{\nu_j(\mathscr D)} + \frac{8\left((2r-1)!!\right)^{\frac 1 r}}{n\epsilon_{0,n}}D_{jj}\right)^{r\nu_j(\mathscr D)}\times \exp\left\{-\frac{\frac {nS_{j|pa_j(\mathscr D)}}2}{D_{jj}}\right\}\\
\le &\left(M_1\frac{\left(\frac 2 {\epsilon_{0,n}}\right)^{2r\nu_j(\mathscr D) + 2r + \frac 1 2}}{\sqrt{n}\nu_j(\mathscr D)^r}\right)^{\nu_j(\mathscr D)} \times D_{jj}^{\frac {\nu_j(\mathscr D)}2}\times \exp\left\{-\frac{\frac {nS_{j|pa_j(\mathscr D)}}2}{D_{jj}}\right\}\\
&+ \left(\frac{M_2}{(n\epsilon_{0,n})^{r+\frac 1 2}}\right)^{\nu_j(\mathscr D)}\times D_{jj}^{r\nu_j(\mathscr D) + \frac{\nu_j(\mathscr D)}{2}} \times \exp\left\{-\frac{\frac {nS_{j|pa_j(\mathscr D)}}2}{D_{jj}}\right\}, 
\end{align*}

\noindent
where $M_1$ and $M_2$ are constants not depending on $n$. Hence, 
\begin{align*}
&\pi(\mathscr D)\times\int\int \prod_{j = 1}^{p-1} \left\{D_{jj}^{-(\frac n 2 + 1)} \left( \prod_{i > j, (i,j) \in E}L_{ij}^{2r} \right) e^{-\frac{nL_{.j}^TSL_{.j}}{2D_{jj}}} \right\} dL dD\\
\le &\pi(\mathscr D) \times \prod_{j = 1}^{p-1}  \Biggl[\int  \left(M_1\frac{\left(\frac 2 {\epsilon_{0,n}}\right)^{2r\nu_j(\mathscr D) + 2r + \frac 1 2}}{\sqrt{n}\nu_j(\mathscr D)^r}\right)^{\nu_j(\mathscr D)} \times D_{jj}^{-\left(\frac {n-\nu_j(\mathscr D)}2+1\right)}\times \exp\left\{-\frac{\frac {nS_{j|pa_j(\mathscr D)}}2}{D_{jj}}\right\}\\
&+ \left(\frac{M_2}{(n\epsilon_{0,n})^{r+\frac 1 2}}\right)^{\nu_j(\mathscr D)}\times D_{jj}^{-\left(\frac{n - \nu_j(\mathscr D)}2 - r\nu_j(\mathscr D) + 1\right)  } \times \exp\left\{-\frac{\frac {nS_{j|pa_j(\mathscr D)}}2}{D_{jj}}\right\} dD_{jj}\Biggr]\\
\le & \pi(\mathscr D) \times \prod_{j = 1}^{p-1} \Biggl[\left(M_1\frac{\left(\frac 2 {\epsilon_{0,n}}\right)^{2r\nu_j(\mathscr D) + 2r + \frac 1 2}}{\sqrt{n}\nu_j(\mathscr D)^r}\right)^{\nu_j(\mathscr D)} \frac{\Gamma\left(\frac {n-\nu_j(\mathscr D)}2\right)}{\left(\frac {nS_{j|pa_j(\mathscr D)}}2\right)^{\frac {n-\nu_j(\mathscr D)}2}} \\
&+ \left(\frac{M_2}{(n\epsilon_{0,n})^{r+\frac 1 2}}\right)^{\nu_j(\mathscr D)}\frac{\Gamma\left(\frac{n - \nu_j(\mathscr D)}2 - r\nu_j(\mathscr D)\right)}{\left(\frac {nS_{j|pa_j(\mathscr D)}}2\right)^{\frac{n - \nu_j(\mathscr D)}2 - r\nu_j(\mathscr D)}} \Biggr],
\end{align*} 

\noindent
for large enough $M_1$ and $M_2$. Now consider the true DAG $\mathscr D_0$. Recall that $s = 
\min_{1 \leq j \leq p, i \in pa_j(\mathscr{D}_0^n)} |(L_0^n)_{ij}|$, and $\left( (L_0^n)_{ij} \right)_{i \in pa_j(\mathscr{D}_0^n)} 
= \left((\Sigma_0)_{\mathscr D_0}^{>j}\right)^{-1} (\Sigma_0)_{\mathscr D_0}^{>j}$. By Assumptions 1, 2, 4 and (\ref{pp1}) in 
the main paper, it follows that $\left\| \left(S_{\mathscr D_0}^{>j}\right)^{-1} \right\|_{(2,2)} \leq 2/\epsilon_{0,n}$ and 
$\left\lvert \left( \left(S_{\mathscr D_0}^{>j}\right)^{-1}S_{\mathscr D_0}^{>j} \right)_i \right\rvert > \frac{s}{2}$ for every 
$i \in pa_j (\mathscr{D}_0^n)$ (for large enough $n$ on the event $C_n^c$, as defined in (\ref{smplbound2}) in the main 
paper). Using the fact that $E[(\sigma Z + \mu)^{2r}] \geq \mu^{2r}$ for a standard normal variable $Z$, we get 
\begin{align*}
\int \left( \prod_{i \in pa_j({\mathscr D_0})}L_{ij}^{2r} \right) e^{-\frac{n \left(L_{{\mathscr D_0}.j}^\ge\right)^TS_{{\mathscr D_0}}^{\ge j}L_{{\mathscr D_0}.j}^\ge}{2D_{jj}}} dL &\ge \left(\frac{4\pi}{n\epsilon_{0,n}}D_{jj}\right)^{\frac{\nu_j({\mathscr D_0})}{2}} \left(\frac s 2\right)^{2r\nu_j(\mathscr D_0)}\exp\left\{-\frac{\frac {nS_{j|pa_j(\mathscr D_0)}}2}{D_{jj}}\right\}\\
&\ge \left(\frac{M_3s^{2r}}{\sqrt{n\epsilon_{0,n}}}\right)^{\nu_j(\mathscr D_0)}D_{jj}^{\frac{\nu_j({\mathscr D_0})}{2}}\exp\left\{-\frac{\frac {nS_{j|pa_j(\mathscr D_0)}}2}{D_{jj}}\right\},
\end{align*} 

\noindent
for an appropriate constant $M_3$. Therefore, under the true DAG $\mathscr D_0$,
\begin{align*}
&\pi({\mathscr D_0})\times\int\int \prod_{j = 1}^{p-1} \left\{D_{jj}^{-(\frac n 2 + 1)} \left( \prod_{i > j, (i,j) \in E}L_{ij}^{2r} \right) 
e^{-\frac{nL_{.j}^TSL_{.j}}{2D_{jj}}} \right\} dL dD\\
\ge & \pi({\mathscr D_0}) \times \prod_{j = 1}^{p-1} \int D_{jj}^{-(\frac n 2 +1)}\left(\frac{M_3s^{2r}}{\sqrt{n\epsilon_{0,n}}}
\right)^{\nu_j({\mathscr D_0})}D_{jj}^{\frac{\nu_j({{\mathscr D_0}})}{2}}\exp\left\{-\frac{\frac {nS_{j|pa_j(\mathscr D_0)}}2}{D_{jj}}
\right\} dD_{jj}\\
\ge &  \pi({\mathscr D_0}) \times \prod_{j = 1}^{p-1} \left(\frac{M_3s^{2r}}{\sqrt{n\epsilon_{0,n}}}\right)^{\nu_j({\mathscr D_0})} 
\frac{\Gamma\left(\frac {n-\nu_j({\mathscr D_0})}2\right)}
{\left(\frac {nS_{j|pa_j(\mathscr D_0)}}2\right)^{\frac {n-\nu_j(\mathscr D_0)}2}}.
\end{align*} 

\noindent
It follows that the posterior ratio of any DAG $\mathscr D$ and true DAG $\mathscr D_0$ is bounded by
\begin{align} 
\begin{split}
\frac{\pi(\mathscr D | \bm Y)}{\pi(\mathscr D_0 | \bm Y)}
&\le \prod_{j = 1}^{p-1} \left(\frac{q}{1-q}\right)^{\nu_j(\mathscr D) - \nu_j(\mathscr D_0)}  \Biggl[ \frac{\left(M_1\frac{\left(\frac 2 {\epsilon_{0,n}}\right)^{2r\nu_j(\mathscr D) + 2r + \frac 1 2}}{\sqrt{n}\nu_j(\mathscr D)^r}\right)^{\nu_j(\mathscr D)} \frac{\Gamma\left(\frac {n-\nu_j(\mathscr D)}2\right)}{\left(\frac {nS_{j|pa_j(\mathscr D)}}2\right)^{\frac {n-\nu_j(\mathscr D)}2}}}{\left(\frac{M_3s^{2r}}{\sqrt{n\epsilon_{0,n}}}\right)^{\nu_j({\mathscr D_0})} \frac{\Gamma\left(\frac {n-\nu_j({\mathscr D_0})}2\right)}{\left(\frac {nS_{j|pa_j(\mathscr D_0)}}2\right)^{\frac {n-\nu_j(\mathscr D_0)}2}}} \\
&+ \frac{\left(\frac{M_2}{(n\epsilon_{0,n})^{r+\frac 1 2}}\right)^{\nu_j(\mathscr D)}\frac{\Gamma\left(\frac{n - \nu_j(\mathscr D)}2 - r\nu_j(\mathscr D)\right)}{\left(\frac {nS_{j|pa_j(\mathscr D)}}2\right)^{\frac{n - \nu_j(\mathscr D)}2 - r\nu_j(\mathscr D)}}}{\left(\frac{M_3s^{2r}}{\sqrt{n\epsilon_{0,n}}}\right)^{\nu_j({\mathscr D_0})} \frac{\Gamma\left(\frac {n-\nu_j({\mathscr D_0})}2\right)}{\left(\frac {nS_{j|pa_j(\mathscr D_0)}}2\right)^{\frac {n-\nu_j(\mathscr D_0)}2}}} \Biggr].
\end{split}
\end{align}

\noindent
Using Stirling bounds for the Gamma function, we have 
\begin{align} \label{nonlocal_ratio_bound}
\begin{split}
\frac{\pi(\mathscr D | \bm Y)}{\pi(\mathscr D_0 | \bm Y)}
&\le \prod_{j = 1}^{p-1} \Biggl[ \left(\frac{Mq}{(1-q)\sqrt{n\epsilon_{0,n}}}\right)^{\nu_j(\mathscr D) - \nu_j(\mathscr D_0)} \frac{\left(\frac{2}{\epsilon_{0,n}}\right)^{2r\nu_j(\mathscr D)^2 + 2r \nu_j(\mathscr D)}}{s^{2r\nu_j(\mathscr D_0)}\nu_j(\mathscr D)^{r\nu_j(\mathscr D)}} \frac{\left(\frac{1}{S_{j|pa_j(\mathscr D)}}\right)^{\frac{n-\nu_j(\mathscr D)}{2}}}{\left(\frac{1}{S_{|pa_j(\mathscr D_0)}}\right)^{\frac{n-\nu_j(\mathscr D_0)}{2}}}\\
&+ \left(\frac{Mq}{(1-q)\sqrt{n\epsilon_{0,n}}}\right)^{\nu_j(\mathscr D) - \nu_j(\mathscr D_0)} \frac{1}{n^{r\nu_j(\mathscr D)}} \frac{\left(\frac{1}{\epsilon_{0,n}}\right)^{2r\nu_j(\mathscr D)}}{s^{2r\nu_j(\mathscr D_0)}} \frac{\left(\frac{1}{S_{j|pa_j(\mathscr D)}}\right)^{\frac{n-\nu_j(\mathscr D)}{2}}}{\left(\frac{1}{S_{|pa_j(\mathscr D_0)}}\right)^{\frac{n-\nu_j(\mathscr D_0)}{2}}} \Biggr]\\
&\le \prod_{j = 1}^{p-1} \left(2\left(\frac{Mq}{(1-q)\sqrt{n\epsilon_{0,n}}}\right)^{\nu_j(\mathscr D) - \nu_j(\mathscr D_0)} \frac{\left(\frac{2}{\epsilon_{0,n}}\right)^{2r\nu_j(\mathscr D)^2 + 2r \nu_j(\mathscr D)}}{s^{2r\nu_j(\mathscr D_0)}\nu_j(\mathscr D)^{r\nu_j(\mathscr D)}} \frac{\left(\frac{1}{S_{j|pa_j(\mathscr D)}}\right)^{\frac{n-\nu_j(\mathscr D)}{2}}}{\left(\frac{1}{S_{|pa_j(\mathscr D_0)}}\right)^{\frac{n-\nu_j(\mathscr D_0)}{2}}} \right) \\
\triangleq& \prod_{j=1}^{p-1}C_i(\mathscr{D},\mathscr{D}_0),
\end{split} 
\end{align}

\noindent
for an appropriate constant $M$. The last inequality follows from $\nu_j(\mathscr D) < n$. 

We now analyze $C_j (\mathscr D, \mathscr D_0)$ when $\nu_j(\mathscr D) \neq \nu_j(\mathscr D_0)$. Let $pa_j^* = pa_j({\mathscr{D}} ) \cap pa_j({\mathscr{D}}_0 )$ and ${{\mathscr{D}}^*} $ be an arbitrary DAG with 
$pa_j({{\mathscr{D}}^*} ) = pa_j^*$. It follows that $pa_j({{\mathscr{D}}^*} ) \subset 
pa_j({\mathscr{D}} _0)$ and $pa_j^* \subseteq pa_j({\mathscr{D}} )$. Now we first consider the case when $\nu_j(\mathscr D) 
< \nu_j(\mathscr D_0).$ Note that $d - 1 \ge \nu_j(\mathscr D_0) - \nu_j(\mathscr D) \ge 0$ and $\nu_j(\mathscr D^*) \le  
\nu_j(\mathscr D_0) - 1.$ It follows from (\ref{nonlocal_ratio_bound}) that
\begin{align} \label{smallDAG_ratio_bound}
\begin{split}
&C_j(\mathscr{D},\mathscr{D}_0) \\
\le & 2\left(\frac{Mq}{(1-q)\sqrt{n\epsilon_{0,n}}}\right)^{\nu_j(\mathscr D) - \nu_j(\mathscr D_0)} \frac{\left(\frac{2}{\epsilon_{0,n}}\right)^{2r\nu_j(\mathscr D)^2 + 2r \nu_j(\mathscr D)}}{s^{2r\nu_j(\mathscr D_0)}\nu_j(\mathscr D)^{r\nu_j(\mathscr D)}} \frac{\left(\frac{1}{S_{j|pa_j(\mathscr D)}}\right)^{\frac{n-\nu_j(\mathscr D)}{2}}}{\left(\frac{1}{S_{|pa_j(\mathscr D^*)}}\right)^{\frac{n-\nu_j(\mathscr D_0)}{2}}}\frac{\left(\frac{1}{S_{j|pa_j(\mathscr D^*)}}\right)^{\frac{n-\nu_j(\mathscr D_0)}{2}}}{\left(\frac{1}{S_{|pa_j(\mathscr D_0)}}\right)^{\frac{n-\nu_j(\mathscr D_0)}{2}}} \\
\le& 2\left(\frac{M\sqrt{n}}{q\sqrt{\epsilon_{0,n}}}\right)^{\nu_j(\mathscr D_0) - \nu_j(\mathscr D)} \left(\frac{2}{\epsilon_{0,n}}\right)^{2rd^2 + 2rd}  \frac{1}{s^{2r\nu_j(\mathscr D_0)}}  \left(\frac{\frac{1}{S_{j|pa_j(\mathscr D)}}}{\frac{1}{S_{j|pa_j(\mathscr D^*)}}}\right)^{\frac{n-\nu_j(\mathscr D_0)}{2}} \left(\frac{\frac{1}{S_{j|pa_j(\mathscr D_0)}}}{\frac{1}{S_{j|pa_j(\mathscr D^*)}}}\right)^{-\frac{n-\nu_j(\mathscr D_0)}{2}}\\
\le & 2 \exp\left\{\frac{(2rd^2+2rd)\log\left(\frac 2 {\epsilon_{0,n}}\right) - 2r\log s + d\log\left(\frac{M\sqrt n}{\sqrt{\epsilon_{0,n}}}\right)}{\frac{n-d}2} + 2\eta(\nu_j(\mathscr D_0) - \nu_j(\mathscr D)) \right\}^{\frac{n-d}{2}}\\\
&\times \left(1 + \frac{\frac {2c^\prime} {\epsilon_{0,n}^3}(\nu_j({\mathscr{D}} )+\nu_j({\mathscr{D}}^*)+2)\sqrt{\frac{\log p }{n}}}{\epsilon_{0,n}/2} \right )^{\frac{n-1}{2}}   \\
&\times \left(1+ \frac{(\frac{1}{(\Sigma_0)_{i|pa_j({\mathscr{D}}_0 )}} - \frac{1}{(\Sigma_0)_{i|pa_j({\mathscr{D}^*} )}}) - 2K_1 \frac d {\epsilon_{0,n}^3} \sqrt{\frac{\log p }{n}} }{\frac{1}{(\Sigma_0)_{i|pa_j({\mathscr{D}^*} )}} + K_1\frac d {\epsilon_{0,n}^3} \sqrt{\frac{\log p }{n}}} \right)^{- \frac{n-d}{2}} \\
\le & 2 \exp\left\{\frac{(2rd^2+2rd)\log\left(\frac 2 {\epsilon_{0,n}}\right) + d\log\left(\frac{M\sqrt n}{\sqrt{\epsilon_{0,n}}}\right)}{\frac{n-d}2} + 2\eta(\nu_j(\mathscr D_0) - \nu_j(\mathscr D)) + \frac{n-1}{n-d}\frac{16c^\prime}{\epsilon_{0,n}^4}d\sqrt{\frac{\log p }{n}}\right\}^{\frac{n-d}{2}}\\
&\times \left(1+ \frac{\epsilon_{0,n} s^2 (\nu_j({\mathscr{D}} _0) - \nu_j({\mathscr{D}^*} )) - 
	2K_1\frac d {\epsilon_{0,n}^3} \sqrt{\frac{\log p }{n}}}{2/\epsilon_{0,n}} \right)^{- \frac{n-d}{2}}.
\end{split}
\end{align}
By following the exact steps in the proof of Lemma 5.6, when $\mathscr D \subset \mathscr D_0$, we get
\begin{align}
\begin{split}
C_j(\mathscr{D},\mathscr{D}_0)  &\le 
2\exp\left\{-\left(\frac{\frac{\epsilon_{0,n}^2}{8}s_n^2}{1 + \frac{\epsilon_{0,n}^2}{4}
	s_n^2}\right)\left(\frac{n-d}{2} \right)\right\} \\
&\le 2e^{-\frac{1}{10}\epsilon_{0,n}^2 s_n^2(\frac{n-d}{2})} \le 2 e^{-d\eta n}. 
\end{split}
\end{align}
Now consider the case when $\nu_j(\mathscr D) > \nu_j(\mathscr D_0).$ By (\ref{nonlocal_ratio_bound}), similar to the proof of Lemma 5.4, note that $S_{j|pa_j(\mathscr D^*)} \ge S_{j|pa_j(\mathscr D_0)}$. We have
\begin{align}
\begin{split}
&C_j(\mathscr{D},\mathscr{D}_0) \\
\le & 2\left(\frac{Mq}{(1-q)\sqrt{n\epsilon_{0,n}}}\right)^{\nu_j(\mathscr D) - \nu_j(\mathscr D_0)} \frac{\left(\frac{2}{\epsilon_{0,n}}\right)^{2r\nu_j(\mathscr D)^2 + 2r \nu_j(\mathscr D)}}{s^{2r\nu_j(\mathscr D_0)}\nu_j(\mathscr D)^{r\nu_j(\mathscr D)}} \frac{\left(\frac{1}{S_{j|pa_j(\mathscr D)}}\right)^{\frac{n-\nu_j(\mathscr D)}{2}}}{\left(\frac{1}{S_{|pa_j(\mathscr D^*)}}\right)^{\frac{n-\nu_j(\mathscr D_0)}{2}}}\frac{\left(\frac{1}{S_{j|pa_j(\mathscr D^*)}}\right)^{\frac{n-\nu_j(\mathscr D_0)}{2}}}{\left(\frac{1}{S_{|pa_j(\mathscr D_0)}}\right)^{\frac{n-\nu_j(\mathscr D_0)}{2}}} \\	
\le& 2\left(\frac{2Mq}{{\epsilon_{0,n}}^{\frac 3 2}\sqrt n}\right)^{\nu_j(\mathscr D) - \nu_j(\mathscr D_0)}\frac{\left(\frac{2}{\epsilon_{0,n}}\right)^{2r\nu_j(\mathscr D)^2 + 2r \nu_j(\mathscr D)}}{s^{2r}d^{rd}} \left(\frac{\frac{1}{S_{j|pa_j(\mathscr D)}}}{\frac{1}{S_{|pa_j(\mathscr D^*)}}}\right)^{\frac{n-\nu_j(\mathscr D)}{2}}\\
\le& 2\exp\left\{-\eta n  + \log \left(2M\right) + \log\left(\frac{1}{\epsilon_{0,n}^{\frac 3 2}}\right) - \frac 1 2 \log n + \frac{2r\nu_j(\mathscr D)^2+2r\nu_j(\mathscr D)}{\nu_j(\mathscr D) - \nu_j(\mathscr D_0)}\log \left(\frac{1}{\epsilon_{0,n}}\right)\right\}^{\nu_j(\mathscr D) - \nu_j(\mathscr D_0)}\\
&\times \left(1 + \frac{\frac {2c^\prime} {\epsilon_{0,n}^3}(\nu_j({\mathscr{D}} )+\nu_j({\mathscr{D}}^*)+2)\sqrt{\frac{\log p }{n}}}{\epsilon_{0,n}/2} \right )^{\frac{n-d}{2}} 
\end{split}
\end{align}
First consider the scenario when $\nu_j(\mathscr D) \ge 3\nu_j(\mathscr D_0).$ In particular, we have $\nu_j(\mathscr D) + \nu_j(\mathscr D^*) \le  \nu_j(\mathscr D) + \nu_j(\mathscr D_0) \le 2(\nu_j(\mathscr D) - \nu_j(\mathscr D_0)).$ Therefore,
\begin{align} \label{proof2}
\begin{split}
&C_j(\mathscr{D},\mathscr{D}_0) \\
\le& 2\exp\left\{-\eta n  + \log (2M) + \log\left(\frac{1}{\epsilon_{0,n}^{\frac 3 2}}\right) - \frac 1 2 \log n + \frac{2r\nu_j(\mathscr D)^2+2r\nu_j(\mathscr D)}{\nu_j(\mathscr D) - \nu_j(\mathscr D_0)}\log \left(\frac{1}{\epsilon_{0,n}}\right)\right\}^{\nu_j(\mathscr D) - \nu_j(\mathscr D_0)}\\
&\times \left(1 + \frac{\frac {2c^\prime} {\epsilon_{0,n}^3}(\nu_j({\mathscr{D}} )+\nu_j({\mathscr{D}}^*)+2)\sqrt{\frac{\log p }{n}}}{\epsilon_{0,n}/2} \right )^{\frac{n-d}{2}} \\
\le & 2\exp\left\{-\eta n  + \log (2M) + \frac 3 2\log\left(\frac{1}{\epsilon_{0,n}}\right) - \frac 1 2 \log n + 6r\nu_j(\mathscr D)\log \left(\frac{1}{\epsilon_{0,n}}\right)\right\}^{\nu_j(\mathscr D) - \nu_j(\mathscr D_0)}\\
&\times \left(1 + \frac{\frac {8c^\prime} {\epsilon_{0,n}^3}(\nu_j({\mathscr{D}} )-\nu_j({\mathscr{D}}_0))\sqrt{\frac{\log p }{n}}}{\epsilon_{0,n}/2} \right )^{\frac{n-d}{2}}\\
&\le 2\exp\left\{-\eta n   + 6r\nu_j(\mathscr D)\log \left(\frac{1}{\epsilon_{0,n}}\right) +\frac{16{c^\prime}}{\epsilon_{0,n}^4}\sqrt{n\log p } \right\}^{\nu_j(\mathscr D) - \nu_j(\mathscr D_0)}.
\end{split}
\end{align}
Note that $\nu_j(\mathscr D) = O \left( d \left( \frac{n}{\log p} \right)^{\frac{1}{2(2+k)}} \right)$ by hypothesis. Using 
Assumption 1 and Assumption 2, we get $\nu_j(\mathscr D) = o\left(\sqrt{\frac{n}{\log p}}\epsilon_{0,n}^4\right)$. 
Since $\eta$ has a strictly larger order than $\frac{\sqrt{\frac{\log p}{n}}}{\epsilon_{0,n}^4}$ and 
$\frac{\nu_j(\mathscr D)}{n}\log \left(\frac{1}{\epsilon_{0,n}}\right)$, there exists $N_2$ such that 
$$
C_j({\mathscr{D}} ,{\mathscr{D}}_0) \le 2\left(e^{-\frac{\eta}{2}n} 
\right)^{\nu_j({\mathscr{D}} )-\nu_j({\mathscr{D}}_0 )} \le 2e^{-\frac 1 2 \eta n} 
$$

\noindent
Now consider the scenario when $\nu_j(\mathscr D_0)+1 \le \nu_j(\mathscr D) < 3\nu_j(\mathscr D_0)$. Using the fact that 
$\nu_j(\mathscr D) - \nu_j(\mathscr D_0) \geq 1$, $\nu_j(\mathscr D) + \nu_j(\mathscr D*) \leq 4d$ and following similar steps 
as in (\ref{proof2}), we also have 
$$
C_j({\mathscr{D}} ,{\mathscr{D}}_0) \le 2e^{-\frac 1 2 \eta n}. 
$$

\noindent
Note that if $\mathscr D \neq \mathscr D_0$, then $\nu_j(\mathscr D) \neq \nu_j(\mathscr D_0)$ for atleast one $j$. 
It follows that for every $\mathscr D \neq \mathscr D_0$
$$
\frac{\pi(\mathscr D | \bm Y)}{\pi(\mathscr D_0 | \bm Y)} \leq 2e^{-\frac 1 2 \eta n} 
$$

\noindent
on the event $C_n^c$ (defined in (\ref{smplbound2}) in the main paper). Theprem \ref{thm4} now follows by the same 
sequence of arguments as in (\ref{pp6.1})-(\ref{pfm2}) in the main paper.

\section{Posterior convergence rates for DAG Wishart priors}

\noindent
In this section, we will provide the convergence rate for the posterior distribution of the 
precision matrix under the conditions in Theorem \ref{thm3}. Let $|| A ||_{(2,2)} = 
\{eig_p(A^TA)\}^{\frac12}$ for any $p \times p$ matrix $A$, and 
$\Pi(\cdot \; \mid \bm{Y})$ denote the probability measure corresponding to the 
posterior distribution. 
\begin{theorem} \label{3.1}
	Under Assumptions 1-5, if $p = p_n \rightarrow \infty$ and we only consider DAG's 
	with number of edges at most $\frac 1 8 d \left( \frac{n}{\log p}\right)^{\frac{1+k}{2+k}}$, 
	then 
	$$
	\bar{E} \left [\Pi \left\{|| {\Omega} - \Omega_0 ||_{(2,2)} \ge K d^{2}\sqrt{\frac{\log{p}}{\epsilon_{0,n}^4 n}} | \bm{Y} 
	\right\} \right] \rightarrow 0, \mbox{ as } n \rightarrow \infty. 
	$$
\end{theorem}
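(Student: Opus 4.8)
The plan is to leverage the strong selection consistency already proved in Theorem~\ref{thm3}: since the posterior concentrates all its mass on the true DAG $\mathscr{D}_0$, it suffices to control the posterior spread of $\Omega = LD^{-1}L^T$ \emph{conditionally} on $\{\mathscr{D} = \mathscr{D}_0\}$, where the DAG-Wishart posterior has the explicit conjugate form described in Section~\ref{sec2.2}. Writing $\rho_n = K d^2 \sqrt{\log p/(\epsilon_{0,n}^4 n)}$ and decomposing the (mixture) posterior over DAGs,
$$
\Pi\!\left(\|\Omega - \Omega_0\|_{(2,2)} \ge \rho_n \mid \bm{Y}\right) \;\le\; \Pi\!\left(\|\Omega - \Omega_0\|_{(2,2)} \ge \rho_n \mid \mathscr{D}_0, \bm{Y}\right) \;+\; \bigl(1 - \pi(\mathscr{D}_0 \mid \bm{Y})\bigr),
$$
and $\bar{E}\bigl[1 - \pi(\mathscr{D}_0 \mid \bm{Y})\bigr] \to 0$ by Theorem~\ref{thm3} and bounded convergence, so the whole problem reduces to showing $\bar{E}\bigl[\Pi(\|\Omega - \Omega_0\|_{(2,2)} \ge \rho_n \mid \mathscr{D}_0, \bm{Y})\bigr] \to 0$. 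For that I would split the expectation over the event $C_n$ from (\ref{smplbound2}) and its complement, using $\bar{P}(C_n) \to 0$ from (\ref{smplbound1}) and establishing a \emph{deterministic} bound on the conditional posterior tail that holds on $C_n^c$ and vanishes for large enough $K$.

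For the conditional analysis, recall that on $\{\mathscr{D} = \mathscr{D}_0\}$ the posterior of $(D,L)$ is $\pi_{U + nS,\, n + \bm{\alpha}(\mathscr{D}_0)}^{\Theta_{\mathscr{D}_0}}$; expanding $\mbox{tr}\bigl(LD^{-1}L^T(U+nS)\bigr) = \sum_{i=1}^p D_{ii}^{-1}(L_{\mathscr{D}_0 \cdot i}^{\ge})^T \bigl(n\tilde{S}_{\mathscr{D}_0}^{\ge i}\bigr) L_{\mathscr{D}_0 \cdot i}^{\ge}$ shows that this posterior factorizes across vertices: $D_{ii}$ is inverse-gamma with shape $\tfrac12(n + c_i(\mathscr{D}_0) - 3)$ and scale $\tfrac{n}{2}\tilde{S}_{i \mid pa_i(\mathscr{D}_0)}$, and $L_{\mathscr{D}_0 \cdot i}^> \mid D_{ii} \sim N\bigl(-(\tilde{S}_{\mathscr{D}_0}^{>i})^{-1}\tilde{S}_{\mathscr{D}_0 \cdot i}^>,\, \tfrac{D_{ii}}{n}(\tilde{S}_{\mathscr{D}_0}^{>i})^{-1}\bigr)$. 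On $C_n^c$, Proposition~\ref{f1}(a) gives $\tilde{S}_{i\mid pa_i(\mathscr{D}_0)} \approx (D_0)_{ii}$, and the matrix-perturbation arguments already carried out in (\ref{pp1})--(\ref{pp7}) show that the conditional mean $-(\tilde{S}_{\mathscr{D}_0}^{>i})^{-1}\tilde{S}_{\mathscr{D}_0 \cdot i}^>$ is within the requisite order of the true regression vector $(L_0)_{\mathscr{D}_0 \cdot i}^>$; the remaining posterior fluctuations are handled by standard inverse-gamma and Gaussian deviation inequalities, with a union bound over the $p$ vertices supplying the $\log p$ factor and the dimension $\le d$ of each $L_{\mathscr{D}_0 \cdot i}^>$ supplying the relevant powers of $d$. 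This yields, on $C_n^c$, an event of conditional posterior probability $1 - o(1)$ on which $\max_i |D_{ii} - (D_0)_{ii}|$, $\max_i |D_{ii}^{-1} - (D_0)_{ii}^{-1}|$, and every $\|L_{\mathscr{D}_0 \cdot i}^> - (L_0)_{\mathscr{D}_0 \cdot i}^>\|_2$ are simultaneously small.

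The last step is to propagate these per-vertex Cholesky errors into a spectral-norm bound for $\Omega - \Omega_0 = LD^{-1}L^T - L_0 D_0^{-1}L_0^T$. I would write the difference as $(L - L_0)D^{-1}L^T + L_0(D^{-1} - D_0^{-1})L^T + L_0 D_0^{-1}(L - L_0)^T$ and bound each term by sub-multiplicativity of $\|\cdot\|_{(2,2)}$; Assumption~1 (together with $\Omega_0 = L_0 D_0^{-1}L_0^T$ and the analogous identity for the posterior draw on the concentration event) gives $\|L\|_{(2,2)}, \|L_0\|_{(2,2)}, \|D^{-1}\|_{(2,2)}, \|D_0^{-1}\|_{(2,2)} = O(\epsilon_{0,n}^{-1})$, $\|D^{-1} - D_0^{-1}\|_{(2,2)} = \max_i|D_{ii}^{-1} - (D_0)_{ii}^{-1}|$ is already controlled, and $\|L - L_0\|_{(2,2)}$ is bounded by combining the columnwise $\ell_2$ estimates with the fact that each column of $L - L_0$ is supported on $pa_i(\mathscr{D}_0)$ (size $\le d$). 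Collecting the powers of $d$ and $\epsilon_{0,n}$ from these steps produces the rate $d^2 \sqrt{\log p/(\epsilon_{0,n}^4 n)}$, and since this holds deterministically on $C_n^c$ with vanishing conditional posterior probability of the complementary event, the required expectation tends to $0$. I expect this final conversion to be the main obstacle: because $\Omega$ is a \emph{product} $LD^{-1}L^T$ that generically has fill-in beyond the sparsity pattern of $L$, sparsity of $L - L_0$ alone does not bound $\|\Omega - \Omega_0\|_{(2,2)}$, so one must simultaneously use the column-support structure of $L - L_0$ and the eigenvalue bounds of Assumption~1, while carefully tracking how perturbations of the $d \times d$ blocks $\tilde{S}_{\mathscr{D}_0}^{>i}$ travel through the inverse and then through the outer products in order to land exactly on the stated rate.
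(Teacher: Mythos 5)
Your proposal follows essentially the same route as the paper's proof: reduce to the conditional posterior given $\mathscr{D}_0$ via Theorem \ref{thm3}, exploit the explicit Gaussian/inverse-gamma form of the conjugate posterior for $(L,D)$ together with concentration of $\tilde{S}$ around $\Sigma_0$, and convert columnwise bounds on $L-L_0$ and entrywise bounds on $D^{-1}-D_0^{-1}$ into a spectral-norm bound on $\Omega-\Omega_0$ through a telescoping triangle-inequality decomposition using the eigenvalue bounds of Assumption 1. The differences are cosmetic (your particular three-term telescoping, and your explicit conditioning on $C_n^c$ where the paper phrases things as convergence in probability and cites auxiliary lemmas), so the argument is correct and matches the paper's.
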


\noindent
{\it Proof} Let $\epsilon_n = d^{2}\sqrt{\frac{\log{p}}{\epsilon_{0,n}^4 n}}$. First note that,
\begin{align}
& \bar{E}[ \Pi \{|| {\Omega} - \Omega_0 || _{(2,2)} \ge K\epsilon_n | \bm{Y} \}] 
\nonumber\\
=&  \sum_{\mathscr D} \bar{E}[ \Pi(|| {\Omega} - \Omega_0 ||_{(2,2)} \ge K
\epsilon_n | \bm{Y},\mathscr D ) \pi(\mathscr D | \bm Y)] \nonumber\\
\le& \bar{E} \left[ \Pi \{|| {\Omega} - \Omega_0 ||_{(2,2)} \ge K\epsilon_n | \bm{Y},
\mathscr{D}_0 \}\right] + \bar{E}[\Pi(\mathscr D \neq \mathscr D_0) | \bm{Y}]. 
\label{post1}
\end{align}

\noindent
By Theorem \ref{thm3} in the main paper, it suffices to show that 
$$
\bar{E}[\Pi \{|| {\Omega} - \Omega_0 || _{(2,2)} \ge K\epsilon_n | \bm{Y},
\mathscr{D}_0 \}] \rightarrow 0, \mbox{ as } n \rightarrow \infty. 
$$

\noindent
Since all further analysis is going to be conditional of $\mathscr{D}_0$ (based on the 
above observation), for notational convenience we will use $A_{\cdot i}^>, A^{>i}, 
A^{\geq i}$ to denote $A_{\mathscr{D}_0 \cdot i}^>, A_{\mathscr{D}_0}^{>i}, 
A_{\mathscr{D}_0}^{\geq i}$ respectively. Next, we state some distributional properties 
for the Cholesky parameter $(D,L)$ under the DAG-Wishart prior. Under the 
DAG-Wishart prior $\pi_{U,\bm \alpha}^{\Theta_{\mathscr{D}_0}}$ on $(D,L)$, the 
posterior distribution of $(D,L)$ (given $\bm{Y}$ and $\mathscr{D}_0$) satisfies:
\begin{align}
& L_{\cdot i}^> | D_{ii}, \bm{Y}, \mathscr{D}_0 \sim N(-(\tilde{S}^{>i})^{-1} 
\tilde{S}_{\cdot i}^>, \frac 1 n D_{ii}(\tilde{S}^{>i})^{-1}), \mbox{ for } 1 \le i \le p-1, 
\label{d1}\\
& D_{ii}^{-1} \mid \bm{Y}, \mathscr{D}_0 \sim \mbox{Gamma}(\frac{n+\alpha_i}{2} + 1, 
\frac{nc_i}{2}), \mbox{ for } 1 \le i \le p, \label{d2}
\end{align}

\noindent
where $\tilde{S} = S + \frac U n$, and $c_i = \tilde{S}_{ii} - ({\tilde{S}_{\cdot i}^>})^T 
(\tilde{S}^{>i})^{-1} (\tilde{S}_{\cdot i}^>)$. These properties follow immediately from 
\cite[Eq. (9),(10)]{BLMR:2016}, and will be useful in proving Theorem \ref{3.1}. 

Note that to prove Theorem \ref{3.1} we only need to show that $\Pi \{|| \Omega - 
{\Omega_0} ||_{(2,2)} \ge K\epsilon_n | \bm{Y}, \mathscr{D}_0\} \stackrel{\bar{P}}
{\rightarrow} 0$ for a large enough $K$ (where $\Pi(\cdot \mid \bm{Y}, 
\mathscr{D}_0)$ denotes the probability measure corresponding to the posterior 
distribution given $\bm{Y}, \mathscr{D}_0$). By the triangle inequality, 
\begin{align} \label{p1}
\begin{split}
&\Pi \{|| \Omega - {\Omega_0} ||_{(2,2)} \ge K\epsilon_n | \bm{Y}, \mathscr{D}_0\} \\
=& \Pi \{|| LD^{-1}L^T - L_0{D_0}^{-1}L_0^T || _{(2,2)} \ge K\epsilon_n|\bm{Y}, 
\mathscr{D}_0\}\\
\le& \Pi \{|| L || _{(2,2)}|| D^{-1}-{D_0}^{-1}||_{(2,2)} || L^T || _{(2,2)} \ge \frac{K
	\epsilon_n}{3}|\bm{Y}, \mathscr{D}_0\} \\
&+ \Pi \{|| L || _{(2,2)}|| L-L_0 || _{(2,2)} || {D_0}^{-1} || _{(2,2)} \ge \frac{K\epsilon_n}
{3}|\bm{Y}, \mathscr{D}_0\} \\ 
&+ \Pi \{|| L_0 || _{(2,2)} || L-L_0 || _{(2,2)} || {D_0}^{-1} || _{(2,2)} \ge \frac{K
	\epsilon_n}{3}|\bm{Y}, \mathscr{D}_0\}.
\end{split}
\end{align}

\noindent
For the first part of (\ref{p1}), by the triangle inequality again, 
\begin{align} \label{p2}
\begin{split}
&\Pi \{|| L ||_{(2,2)}|| D^{-1}-{D_0}^{-1} || _{(2,2)}|| L^T || _{(2,2)} \ge \frac{K
	\epsilon_n}{3}|\bm{Y}, \mathscr{D}_0\} \\
\le & \Pi \{|| L-L_0 || _{(2,2)}^2|| D^{-1}-{D_0}^{-1} || _{(2,2)} \ge \frac{K\epsilon_n}
{12}|\bm{Y}, \mathscr{D}_0 \} \\ 
&+ 2\Pi \{|| L-L_0 || _{(2,2)}|| D^{-1}-{D_0}^{-1} || _{(2,2)} || L_0 || _{(2,2)} \ge \frac{K
	\epsilon_n}{12}|\bm{Y}, \mathscr{D}_0 \} \\
&+ \Pi \{|| L_0 || _{(2,2)}^2 || D^{-1}-{D_0}^{-1} || _{(2,2)} \ge \frac{K\epsilon_n}{12}|
\bm{Y}, \mathscr{D}_0\}.
\end{split}
\end{align}

\noindent
Similarly, for the other two parts of (\ref{p1}), using $\|L_0\|_{(2,2)} = 1$, we have 
\begin{align}
&\Pi \{|| L||_{(2,2)}|| L-L_0||_{(2,2)} || {D_0}^{-1}||_{(2,2)} \ge \frac{K\epsilon_n}
{3}|\bm{\bm{Y}, \mathscr{D}_0}\} \nonumber\\
\le& \Pi \{|| L-L_0||_{(2,2)} \ge \sqrt{\frac{K\epsilon_n}{6 
		|| {D_0}^{-1} ||_{(2,2)}}} |\bm{Y}, \mathscr{D}_0\} \nonumber\\
&+ \Pi \{|| L-L_0||_{(2,2)} \ge \frac{K\epsilon_n}{6 || {D_0}^{-1} || _{(2,2)}}|
\bm{Y}, \mathscr{D}_0\}, \label{p4}
\end{align}

\noindent
and 
\begin{align}
&\Pi \{|| L_0 ||_{(2,2)} || L-L_0 ||_{(2,2)} || {D_0}^{-1} ||_{(2,2)}\ge\frac{K\epsilon_n}
{3}|\bm{Y}, \mathscr{D}_0\} \nonumber\\
\le& \Pi \{|| L-L_0 || _{(2,2)} \ge \frac{K\epsilon_n}{3 || {D_0}^{-1}||_{(2,2)}}|
\bm{Y}, \mathscr{D}_0\}. \label{p4.1}
\end{align}

\noindent
By Assumption 1, we have $\frac{\epsilon_{0,n}}{2} \le (D_0)_{ii}^{-1} \le \frac{2}
{\epsilon_{0,n}}, \mbox{ for } 1 \le i \le p$. Also, by Assumption 2, $\sqrt[4]{\epsilon_n} \ge 
\sqrt{\epsilon_n} \ge \epsilon_n$, for $n$ large enough. By (\ref{p1}), (\ref{p2}), 
(\ref{p4}) and (\ref{p4.1}), to prove the required result, it suffices to show that both 
\begin{equation*}
\Pi \{ || L-L_0 ||_{(2,2)} \ge K_1 \epsilon_n\epsilon_{0,n}|\bm{Y}, \mathscr{D}_0\} 
\stackrel{\bar{P}}{\rightarrow} 0,
\end{equation*}

\noindent
and 
\begin{equation*}
\Pi \{|| D^{-1}-{D_0}^{-1}|| _{(2,2)} \ge K_1 \epsilon_n\epsilon_{0,n} |\bm{Y}, 
\mathscr{D}_0\} \stackrel{\bar{P}}{\rightarrow} 0,
\end{equation*}

\noindent
for some large enough constant $K_1$.

Now, let $\delta_n = \frac{\epsilon_n\epsilon_{0,n}}{d} = \frac{d\sqrt{\frac{\log{p}}{n}}}{\epsilon_{0,n}}$. Then, for a 
large enough constant $K_1$, we get 
\begin{align} \label{p5}
\begin{split}
&\Pi\{|| D^{-1}-{D_0}^{-1} || _{(2,2)} \ge K_1 \epsilon_n\epsilon_{0,n}|\bm{Y}, \mathscr{D}_0\} \\
\le& \Pi\{\mbox{max}_i (2|D_{ii}^{-\frac12} - (D_0)_{ii}^{-\frac12}| + 
(D_0)_{ii}^{\frac12})|D_{ii}^{-\frac12} - (D_0)_{ii}^{-\frac12}| \ge K_1\delta_n|\bm{Y}, 
\mathscr{D}_0\} \\
\le& \Pi\{\mbox{max}_i |D_{ii}^{-\frac12} - (D_0)_{ii}^{-\frac12}| \ge 
\frac12\sqrt{K_1\delta_n} | \bm{Y}, \mathscr{D}_0\} \stackrel{\bar{P}}{\rightarrow} 0, 
\end{split}
\end{align}

\noindent
by (\ref{d2}), \cite[Lemma 3.9]{XKG:2015}, and the fact that $\sqrt{\delta_n} 
\ge \delta_n =  \frac{d\sqrt{\frac{\log{p}}{n}}}{\epsilon_{0,n}}$ for large enough $n$ by (\ref{assumptioneigenvalue}). 

Let $|| A ||_{\mbox{max}} = \mbox{max}_{1 \le i,j \le p}|A_{ij}|$ for any $p \times p$ matrix $A$. In order to show 
$\Pi \{|| L-L_0 || _{(2,2)} \ge K_1\epsilon_n\epsilon_{0,n}|\bm{Y}, \mathscr{D}_0\} 
\stackrel{\bar{P}}{\rightarrow} 0$, first, by \cite[Lemma 3.1]{XKG:2015} and the triangle inequality, we have 
\begin{align}
&\Pi \{|| L - L_0 ||_{(2,2)} \ge  K_1\epsilon_n\epsilon_{0,n} | \bm{Y}, \mathscr{D}_0  \} \nonumber\\
\le& \Pi \{d|| L - L_0 ||_{\mbox{max}} \ge  K_1\epsilon_n\epsilon_{0,n} | \bm{Y}, \mathscr{D}_0 \} 
\nonumber\\
\le& \Pi \{|| L - M ||_{\mbox{max}} \ge \frac{K_1\delta_n}{2} | \bm{Y}, 
\mathscr{D}_0 \} + \Pi \{|| M - L_0 ||_{\mbox{max}} \ge \frac{K_1\delta_n}{2} | 
\bm{Y}, \mathscr{D}_0 \}, \label{p6}
\end{align}

\noindent
where $M$ is a $p \times p$ lower triangular matrix with unit diagonals defined as
$$
M_{\cdot i}^{\ge} = 
\begin{Bmatrix}	
1\\
- (\tilde{S}^{>i})^{-1} \tilde{S}_{\cdot i}^{>}
\end{Bmatrix},
$$ 

\noindent
for $1 \le i \le p$ and $M_{ji} = 0 \mbox{ for } j \notin pa_i (\mathscr{D}_0), 1 \le i < j \le 
p$. For the first term of (\ref{p6}),  by the union-sum inequality, we have
\begin{align*} 
&\Pi \{|| L - M ||_{\mbox{max}} \ge \frac{K_1\delta_n}{2} | \bm{Y}, \mathscr{D}_0 \} \\
=& \Pi \{\mbox{max}_{j \in pa_i (\mathscr{D}_0), 1\le i < j \le p} \lvert L_{ji} - M_{ji} 
\rvert \ge \frac{K_1\delta_n}{2} | \bm{Y}, \mathscr{D}_0 \}\\
\le& \Pi \{\mbox{max}_{j \in pa_i (\mathscr{D}_0), 1\le i < j \le p} \frac{|L_{ji} - M_{ji}|}
{\sqrt{D_{ii}}} (|\sqrt{D_{ii}} - \sqrt{(D_0)_{ii}}| + \sqrt{(D_0)_{ii}}) \ge \frac{K_1\delta_n}
{2} | \bm{Y}, \mathscr{D}_0 \}\\
\le& pd \mbox{max}_{j \in pa_i (\mathscr{D}_0), 1\le i < j \le p} \Pi \left\{ \frac{|L_{ji} - 
	M_{ji}|}{\sqrt{D_{ii}}} \ge \sqrt{\frac{K_1\delta_n}{4}} | \bm{Y}, \mathscr{D}_0 \right\}\\
&+ \Pi \{\mbox{max}_{1\le i \le p} |\sqrt{D_{ii}} - \sqrt{(D_0)_{ii}}| \ge 
\sqrt{\frac{K_1\delta_n}{4}} | \bm{Y}, \mathscr{D}_0 \}\\
&+ pd \mbox{max}_{j \in pa_i (\mathscr{D}_0), 1\le i < j \le p} \Pi \left\{ \frac{|L_{ji} - 
	M_{ji}|}{\sqrt{D_{ii}}} \ge \frac{K_1\delta_n}{4 \sqrt{\mbox{max}_i (D_0)_{ii}}} | \bm{Y}, 
\mathscr{D}_0 \right\}. 
\end{align*}

\noindent
Note that $\frac{\epsilon_{0,n}}{2} \le (D_0)_{ii}^{-1} \le \frac{2}{\epsilon_{0,n}}\mbox{ for } 1 
\le i \le p,$ and $\sqrt{\delta_n} \ge \delta_n$ for large enough $n$ by Assumption 2. 
By (\ref{p2}), (\ref{p4}), (\ref{p4.1}), (\ref{p5}) and (\ref{p6}), to prove the required 
result, it suffices to show that, for a large enough constant $K_1$ 
\begin{equation*} 
pd \mbox{max}_{j \in pa(i), 1\le i < j \le p} \Pi \left\{ \frac{|L_{ji} - M_{ji}|}{\sqrt{D_{ii}}} 
\ge K_1\delta_n\sqrt{\epsilon_{0,n}} | \bm{Y}, \mathscr{D}_0 \right\} \stackrel{\bar{P}}{\rightarrow} 0, 
\end{equation*}

\begin{equation*}
\Pi \{\mbox{max}_{1\le i \le p} |\sqrt{D_{ii}} - \sqrt{(D_0)_{ii}}| \ge K_1 \delta_n | \bm{Y}, 
\mathscr{D}_0 \} \stackrel{\bar{P}}{\rightarrow} 0, 
\end{equation*}

\noindent
and 
\begin{equation*}
\Pi \{|| M - L_0 ||_{\mbox{max}} \ge K_1\delta_n | \bm{Y}, \mathscr{D}_0 \} = 
1_{\{|| M - L_0 ||_{\mbox{max}} \ge K_1\delta_n\}} \stackrel{\bar{P}}{\rightarrow} 0. 
\end{equation*}

\noindent
We will now prove each of these three statements. 

\medskip

\noindent
1) First, let $Z_{ji} = \frac{\sqrt{n}|L_{ji} -M_{ji}|}{\sqrt{D_{ii}} \sqrt{\left( \tilde{S}^{>i} 
		\right)_{jj}^{-1} }}, c = \mbox{max}_{1\le i < j \le p}\left( \tilde{S}^{>i} \right)_{jj}^{-1}$, 
where $\left(\tilde{S}^{>i} \right)_{jj}^{-1}, j \ge i$ represent the diagonal element of $
\left( \tilde{S}^{>i} \right)^{-1}$ corresponding with vertex $j$. It follows from (\ref{d1}) 
that  $Z_{ji} \sim N(0,1)$ conditional on $\bm{Y}, \mathscr{D}_0, 
D_{ii}$. Since this distribution does not depend on $D_{ii}$, it follows that the 
distribution of $Z_{ji}$ conditional on $\bm{Y}, \mathscr{D}_0$ is also standard 
normal. Hence, 
\begin{align} \label{p8}
\begin{split}
&pd \mbox{max}_{j \in pa(i), 1\le i < j \le p} \Pi \left\{ \frac{|L_{ji} - M_{ji}|}{\sqrt{D_{ii}}} 
\ge K_1\delta_n\sqrt{\epsilon_{0,n}} | \bm{Y}, \mathscr{D}_0 \right\} \\
\le& pd \mbox{max}_{j \in pa(i), 1\le i < j \le p}  \Pi \left\{ \frac{\sqrt{n}|L_{ji} - M_{ji}|}
{\sqrt{D_{ii}} \sqrt{\left(\tilde{S}^{>i} \right)_{jj}^{-1} }} \ge \frac{\sqrt{n} K_1 \delta_n\sqrt{\epsilon_{0,n}}}
{\sqrt{c}} | \bm{Y}, \mathscr{D}_0 \right\} \\
\le& 2pd\left(1-\phi\left( \frac{\sqrt{n} K_1 \delta_n\sqrt{\epsilon_{0,n}}}{\sqrt{c}} \right)  \right) \\
\le& 2pd \exp\left( - \frac{nK_1^2 \delta_n^2 \epsilon_{0,n}}{2c}\right) \\
\le& 2p^{2 - \frac{K_1^2d^2\log p}{2c\epsilon_{0,n}}},	
\end{split}
\end{align}

\noindent
where $\Phi$ is the standard normal distribution function and $1 - \Phi(t) \le \exp (- 
\frac{t^2}2)$. The last inequality follows from $\delta_n = \frac{\epsilon_n}{d} = 
d\frac{\sqrt{\frac{\log{p}}{n}}}{\epsilon_{0,n}}$. Next, for any $t > 0$, 
\begin{align} \label{p9}
\begin{split}
&\bar{P}(2p^{2 - \frac{K_1^2d^2\log p}{2c\epsilon_{0,n}}} > t) \\
\le& \bar{P}(2p^{2 - \frac{K_1^2d^2\log p}{2c\epsilon_{0,n}}} > t, c < 2\epsilon_{0,n}^{-1}) + \bar{P}(c \ge 
2\epsilon_{0,n}^{-1}) \\
\le &\bar{P}(2p^{2 - \frac{K_1^2d^2\log p}{4}} > \eta) + \bar{P}(c \ge 
2\epsilon_{0,n}^{-1}),	
\end{split}
\end{align}

\noindent
It follows from \cite[Lemma 3.5]{XKG:2015} that $\bar{P} \{c \ge 2\epsilon_{0,n}^{-1} \} 
\rightarrow 0$ as $n \rightarrow \infty$. Also, $p^{2 - \frac{K_1^2d^2\log p}{4}} > \eta \rightarrow 0$, 
as $n \rightarrow \infty$. It follows by (\ref{p9}) that $2p^{2 - \frac{K_1^2d^2\log p}{2c\epsilon_{0,n}}} 
\stackrel{\bar{P}}{\rightarrow} 0$. By (\ref{p8}), we get 
\begin{equation} \label{p15}
pd \mbox{max}_{j \in pa(i), 1\le i < j \le p} \Pi \left\{ \frac{|L_{ji} - M_{ji}|}{\sqrt{D_{ii}}} \ge 
K_1\delta_n\sqrt{\epsilon_{0,n}} | \bm{Y}, \mathscr{D}_0 \right\} \stackrel{\bar{P}}{\rightarrow} 0. 
\end{equation}

\noindent
for a large enough constant $K_1$. 

\medskip

\noindent
2) Next, let $r = \mbox{max}_{1 \le i \le p} \sqrt{D_{ii} (D_0)_{ii}}$. Then, 
\begin{align} \label{p10}
\begin{split}
&\Pi \{\mbox{max}_{1\le i \le p} |\sqrt{D_{ii}} - \sqrt{(D_0)_{ii}}| \ge K_1\delta_n | 
\bm{Y}, \mathscr{D}_0 \} \\
\le& \Pi \{\mbox{max}_{1\le i \le p} |\sqrt{D_{ii}^{-1}} - \sqrt{(D_0)_{ii}^{-1}}| \ge \frac1r 
K_1\delta_n | \bm{Y}, \mathscr{D}_0 \} \\
\le& \Pi \{\mbox{max}_{1\le i \le p} |\sqrt{D_{ii}^{-1}} - \sqrt{(D_0)_{ii}^{-1}}| \ge \frac1r 
K_1\delta_n, r < \frac4{\epsilon_{0,n}} | \bm{Y}, \mathscr{D}_0 \} \\
&+ \Pi \{ r \ge \frac4{\epsilon_{0,n}} | \bm{Y}, \mathscr{D}_0\} \\
\le &\Pi \{\mbox{max}_{1\le i \le p} |\sqrt{D_{ii}^{-1}} - \sqrt{(D_0)_{ii}^{-1}}| \ge 
\frac{\epsilon_{0,n}}{4} K_1\delta_n | \bm{Y}, \mathscr{D}_0 \} \\
&+ \Pi \{\mbox{min}_{1\le i \le p}(D_{ii}(D_0)_{ii})^{-\frac12} \le  \frac{\epsilon_{0,n}}4 | 
\bm{Y}, \mathscr{D}_0\}.	
\end{split}
\end{align}

\noindent
As observed before, by \cite[Lemma 3.9]{XKG:2015}, 
\begin{equation*}
\Pi \{\mbox{max}_{1\le i \le p} |\sqrt{D_{ii}^{-1}} - \sqrt{(D_0)_{ii}^{-1}}| \ge 
\frac4{K_1}d\sqrt{\frac {\log p} n} | \bm{Y}, \mathscr{D}_0 \} \rightarrow 0.
\end{equation*}

\noindent
for a large enough $K_1$. Also, using $\frac{\epsilon_{0,n}}{2} \leq (D_0)_{ii}^{-1} \leq 
\frac{2}{\epsilon_{0,n}}$ and \cite[Lemma 3.9]{XKG:2015} again, we have 
\begin{align} \label{p11}
\begin{split}
&\Pi \{\mbox{min}_{1\le i \le p}(D_{ii}(D_0)_{ii})^{-\frac12} \le  \frac{\epsilon_{0,n}}4 | 
\bm{Y}, \mathscr{D}_0\} \\
\le& \Pi \{\mbox{min}_{1\le i \le p}(D_0)_{ii}(D_{ii}(D_0)_{ii})^{-\frac12} \le  \frac12 | 
\bm{Y}, \mathscr{D}_0\} \\
\le& \Pi \{\mbox{max}_{1\le i \le p}|1 - (D_0)_{ii}^{\frac12}D_{ii}^{-\frac12}| \ge  \frac12 | 
\bm{Y}, \mathscr{D}_0\} \\
\le& \Pi \{\mbox{max}_{1\le i \le p}|(D_0)_{ii}^{-\frac12} - D_{ii}^{-\frac12}| \ge  
\frac12\sqrt{\frac{\epsilon_{0,n}}2} | \bm{Y}, \mathscr{D}_0\} \stackrel{\bar{P}}{\rightarrow} 
0.	
\end{split}
\end{align}

\noindent
Hence, by (\ref{p10}) and (\ref{p11}), we get, 
\begin{equation} \label{p12}
\Pi \{\mbox{max}_{1\le i \le p} |\sqrt{D_{ii}} - \sqrt{(D_0)_{ii}}| \ge K_1\delta_n | \bm{Y}, 
\mathscr{D}_0 \} \stackrel{\bar{P}}{\rightarrow} 0. 
\end{equation}

\medskip

\noindent
3) Last, by \cite[Lemma 3.2]{XKG:2015} and \cite[Lemma 3.6]{XKG:2015}, for $1 \leq i 
\leq p-1$, 
$$
(L_0)_{\cdot i}^{\ge} = 
\begin{Bmatrix}	
1\\
- ((\Sigma_0)^{>i})^{-1} (\Sigma_0)_{\cdot i}^{>}
\end{Bmatrix}.
$$

\noindent
Hence, by  \cite[(3.24)]{XKG:2015}, it follows that 
\begin{align}
&\bar{P}\{|| M - L_0 ||_{\mbox{max}} \geq K_1\delta_n \} \nonumber\\
\le& \bar{P}\{\mbox{max}_{j \in pa_i (\mathscr{D}_0) 1\le i < j \le p} | [(\tilde{S}
^{>i})^{-1}\tilde{S}_{.i}]_j - [(\tilde{\Sigma}^{>i})^{-1}\tilde{\Sigma}_{.i}]_j |  \ge 
\frac{K_1\delta_n}{2}\} \rightarrow 0 \label{p13}
\end{align}

\noindent
for a large enough constant $K_1$. 
\end{document}